\newtheorem{theorem}{Theorem}
\newtheorem{definition}{Definition}
\newtheorem{proof}{Proof}
\begin{document}

\date{}

\author[1]{Dongli Liu}
\author[1,\Envelope]{Wei Wang}
\author[1,2,3,4,\Envelope]{Peng Xu}
\author[1,5]{Laurence T. Yang}
\author[6]{Bo Luo}
\author[7]{Kaitai Liang}
\affil[1]{\textit{Huazhong University of Science and Technology}}
\affil[2]{\textit{Hubei Key Laboratory of Distributed System Security, School of Cyber Science and Engineering}}
\affil[3]{\textit{JinYinHu Laboratory}}
\affil[4]{\textit{State Key Laboratory of Cryptology}}
\affil[5]{\textit{St. Francis Xavier University}}
\affil[6]{\textit{The University of Kansas}}
\affil[7]{\textit{Delft University of Technology}}
\affil[\Envelope]{\small\textit{Corresponding authors: viviawangwei@hust.edu.cn, xupeng@hust.edu.cn}}

\title{{$\textit{d}$}-DSE: Distinct Dynamic Searchable Encryption Resisting Volume Leakage in Encrypted Databases}

\maketitle

\begin{abstract}
Dynamic Searchable Encryption (DSE) has emerged as a solution to efficiently handle and protect large-scale data storage in encrypted databases (EDBs). 
Volume leakage
poses a significant threat, as it enables adversaries to reconstruct search queries and potentially compromise the security and privacy of data.
Padding strategies are common countermeasures for the leakage, but they significantly increase storage and communication costs. 
In this work, we develop a new perspective to handle volume leakage.  
We start with distinct search and further explore a new concept called \textit{distinct} DSE (\textit{d}-DSE). 

We also define new security notions, in particular Distinct with Volume-Hiding security, as well as forward and backward privacy, for the new concept. 
Based on \textit{d}-DSE, we construct the \textit{d}-DSE designed EDB with related constructions for distinct keyword (d-KW-\textit{d}DSE), keyword (KW-\textit{d}DSE), and join queries (JOIN-\textit{d}DSE) and update queries in encrypted databases. 
We instantiate a concrete scheme \textsf{BF-SRE}, employing Symmetric Revocable Encryption. 
We conduct extensive experiments on real-world datasets, such as Crime, Wikipedia, and Enron, for performance evaluation.  
The results demonstrate that our scheme is practical in data search and with comparable computational performance to the SOTA DSE scheme (\textsf{MITRA}*, \textsf{AURA}) and padding strategies (\textsf{SEAL}, \textsf{ShieldDB}).  
Furthermore, our proposal sharply reduces the communication cost as compared to padding strategies, with roughly $6.36$ to \textcolor{black}{53.14}x advantage for search queries. 

\end{abstract}

\setcounter{footnote}{0}
\section{Introduction}

Encrypted databases (EDBs)\cite{Popa2011CyptDB} have been developed to support privacy-preserving data storage and search services.  
They allow clients to outsource sensitive encrypted data to a server and then send search queries to the server who can return the corresponding data. 
These databases can adopt Searchable Symmetric Encryption (\textcolor{black}{SE})\cite{song2000practical,curtmola2011searchable} and its Dynamic version\cite{Kamara2012dynamic} (DSE) to offer various types of privacy-preserving queries, such as keyword \cite{demertzis2018efficient}, range\cite{faber2015rich}, SQL\cite{kamara2018sql} queries, and update. 
Note more related works for EDBs and DSE are presented in Appendix \ref{RDSSE} and \ref{REDB}, respectively.  

Although easily inheriting Forward Privacy and Backward Privacy (\textcolor{black}{FP\&BP}) guarantees from DSE \cite{bost2016ovarphiovarsigma,bost2017forward}, those EDBs are still vulnerable to volumetric attacks \cite{DBLP:conf/ndss/BlackstoneKM20,zhang2023high}. 
The adversary can infer the underlying keywords from search queries\cite{DBLP:conf/ndss/BlackstoneKM20,lambregts2022val,zhang2023high} or even reconstruct the whole EDB \cite{grubbs2018pump,kellaris2016generic} by exploiting \emph{volume leakage} (i.e., the length of the response).

Padding strategies \cite{bost2017thwarting,demertzis2020seal,vo2021shielddb} are popularly applied to mitigate the impact of volume leakage in EDBs. 
They work by adding "dummy" data to the original database to ensure uniform volume for each keyword. 
Thus, from the viewpoint of adversaries, volume leakage with regard to queries is now indistinguishable.
This strategy, however, yields significant costs (see Tab. \ref{Distinct Search}). 
It requires extra storage and computational resources to establish and update the EDB, but also consumes huge communication bandwidth for query response. 
Note we provide more details on padding in Appendix \ref{RPDS}.

\begin{table*}[htb]
    
    \caption{Comparison on related DSE, padding, and our scheme.
    $N$ is the total number of keyword/value pairs.
    $W$ and $F$ are the total number of keywords and files, respectively.
    In Column \textbf{5}, $\checkmark$ indicates that padding strategies can be deployed in the schemes.
    \textcolor{black}{For keyword \textbf{w}}, $a_w$ is the total number of update operations, $n_w$ is the current number of pairs, $d_w$ is the number of deletion, $d_{max}$ is the supported maximum number of deletions, and $s_w^\prime$ is the number of legal search tokens.
    \textcolor{black}{For \textbf{SEAL}}, $x$ means the adjustable padding's parameter, and $n_w^*$ denotes the number of the pairs currently containing $w$ after padding.
    \textcolor{black}{For \textbf{ShieldDB}}, $n_r$ and $n_b$ separately represent the real and padding pairs of the streamed keyword $w_i$, $|B|$ means the padding dataset length, and $f_w$ and $f_{w_i}$ represent the maximum frequency and the streamed keyword $w_i$ frequency, respectively.
    \textcolor{black}{Backward privacy (\textbf{BP})} contains type-1/2/3 level (\uppercase\expandafter{\romannumeral1}/\uppercase\expandafter{\romannumeral2}/ \uppercase\expandafter{\romannumeral3}), including the special case
    $\rm \uppercase\expandafter{\romannumeral2}^A,\rm \uppercase\expandafter{\romannumeral3}^R,\rm \uppercase\expandafter{\romannumeral2}^D$ extended from \cite{bost2017forward}.
    }\vspace{-3mm}
    \label{Distinct Search}
    \scalebox{0.67}{
    \begin{tabular}{|p{4.2cm}|c|c|c|c|c|c|c|c|c|}
        \hline
         \multirow{2}{*}{\textbf{Schemes}}  & \textbf{Distinct} &\multicolumn{3}{c|}{\textbf{Computation}} & \multicolumn{3}{c|}{\textbf{Communication}} & \multirow{2}{*}{\textbf{Client Storage}} & \multirow{2}{*}{\textbf{BP}}\\
          & \textbf{Search} &\textbf{Search} & \textbf{Update} & \textbf{Padding} & \textbf{Search} &  \textbf{\textcolor{black}{Update}}& \textbf{\textcolor{black}{Comm. Round}} & & \\
         \hline
         \textbf{MONETA}\cite{bost2017forward}  & \XSolidBrush &$O\left(a_w\log N+\log ^3N\right)$ & $O\left(\log ^2N\right)$ & $\checkmark$ &$O\left(a_w\log N+\log ^3N\right)$ &  $O\left(\log ^3N\right)$ & 3 & $O\left(1\right)$ & \uppercase\expandafter{\romannumeral1}\\
         \hline
         $\textbf{DIANA}_{del}$\cite{bost2017forward}  & \XSolidBrush & $O\left(a_w\right)$ & $O\left(\log a_w\right)$ & $\checkmark$ & $O\left(n_w+d_w\log a_w\right)$  & $O\left(1\right)$ & 2 & $O\left(W\log F\right)$ & \uppercase\expandafter{\romannumeral3}\\
         \hline
         \textbf{JANUS}\cite{bost2017forward}  & \XSolidBrush & $O\left(n_w \cdot d_w\right)$ & $O\left(1\right)$ & $\checkmark$ & $O\left(n_w\right)$  & $O\left(1\right)$ & 1 & $O\left(W\log F\right)$ &  \uppercase\expandafter{\romannumeral3}\\
         \hline
         \textbf{JANUS++}\cite{sun2018practical}   & \XSolidBrush & $O\left(n_w \cdot d_{max}\right)$ & $O\left(d_{max}\right)$ & \checkmark & $O\left(n_w\right)$  & $O\left(1\right)$ & 1 & $O\left(W\log F\right)$ & \uppercase\expandafter{\romannumeral3}\\
         \hline
         \textbf{AURA}\cite{sun2021practical}   & \XSolidBrush & $O\left(n_w\right)$ & $O\left(1\right)$ & \checkmark & $O\left(n_w\right)$  & $O\left(1\right)$ & 1 & $O\left(W \cdot d_{max}\right)$ & $\rm \uppercase\expandafter{\romannumeral2}^A$ \\
         \hline
         \textbf{MITRA}\cite{ghareh2018new}   & \XSolidBrush & $O\left(a_w\right)$ & $O\left(1\right)$ & \checkmark & $O\left(a_w\right)$  & $O\left(1\right)$ & 2 & $O\left(W \log F\right)$ & $\rm \uppercase\expandafter{\romannumeral2}$\\
         \hline
         \textbf{SD}$\mathbf{_a}$\cite{demertzis2020dynamic}   & \XSolidBrush & $O\left(a_w+ \log N\right)$ & $O\left(\log N\right)$ & \checkmark & $O\left(a_w+ \log N\right)$  & $O\left(\log N\right)$ & 2 & $O\left(1\right)$ & $\rm \uppercase\expandafter{\romannumeral2}$\\
         \hline
         \textbf{SD}$\mathbf{_d}$\cite{demertzis2020dynamic}   & \XSolidBrush & $O\left(a_w+ \log N\right)$ & $O\left(\log^3 N\right)$ & \checkmark & $O\left(a_w+ \log N\right)$  & $O\left(\log^3 N\right)$ & 2 & $O\left(1\right)$ & $\rm \uppercase\expandafter{\romannumeral2}$\\
         \hline
         \textbf{ROSE}\cite{Xu2022rose}  & \XSolidBrush & $O\left(\left(n_w+s^\prime_w +1 \right)d_w\right)$ & $O\left(1\right)$ & \checkmark & $O\left(n_w\right)$  & $O\left(1\right)$ & 2 & $O\left(W\log F\right)$ & $\rm \uppercase\expandafter{\romannumeral3}^R$ \\
         \hline
         \hline
        \textbf{SEAL}*\cite{demertzis2020seal}   & \XSolidBrush & $O(x\cdot n_w^*)$ & \XSolidBrush & $O(x\cdot N)$ & $O(x\cdot n_w^*)$  & \XSolidBrush & 1 & $O(x\cdot N)$& \XSolidBrush\\
        \hline
        $\textbf{ShieldDB's DSE}^{\star}$\cite{vo2021shielddb}   & \XSolidBrush & $O\left(n_r+n_b\right)$ &  $O\left(n_r+n_b\right)$ & $O(|B|\left(f_w-f_{w_i}\right))$ & $O\left(n_r+n_b\right)$  & $O\left(n_r+n_b\right)$ & 1 & $O\left(|B|\right)$ & --\\ 
         \hline
         \hline
         \textbf{BF-SRE} (Ours)  & \Checkmark &  $O\left(n_w\right)$ & $O\left(1\right)$ & \checkmark & $O\left(n_w\right)$  & $O\left(1\right)$ & 1 & $O\left(W \cdot d_{max}\right)$ & $\rm \uppercase\expandafter{\romannumeral2}^D$\\
         \hline
    \end{tabular}
    }
    
    \scalebox{0.64}{
    \begin{tabular}{l}
    *: \textsf{SEAL} only focuses on static database (indicating it cannot support update operations) while providing values retrieval with keyword. \textcolor{black}{In \textsf{SEAL} we store the client's states locally.}\\
    $\star$: \textsf{ShieldDB} is a dynamic document database. Its deletion is done through re-encryption, which does not clearly specify the backward privacy.\\
    \end{tabular}}
    \vspace{-15pt}
\end{table*}

\textbf{A new perspective - starting from \textit{distinct search}.}
In line with the design of secure EDBs \cite{fuller2017sok}, we should minimize the leakage during search queries. 
It is natural to see that the SQL syntax `SELECT DISTINCT'\footnote{In ISO/IEC 9075-2:2016, the `DISTINCT' predicate can apply with other syntax like `SELECT', `JOIN', `GROUP BY' to retrieve distinct values from a database table.} can be used to eliminate duplicated search responses, which, to some extent, reduces the volume leakage.  
But we cannot simply apply distinct search in EDBs due to the fact that it still leaks information to the server.  
In the context of SQL queries, a distinct search retrieves distinct (unique) values from the corresponding columns in a database table, which allows us to eliminate "duplicate" values. 
The search process traverses only one copy of each value and ignores its duplicates. 
However, if the number of those "ignored" values (i.e., the number of duplicates) is leaked, it can reveal information about the quantities of all relevant values, leading to volume leakage. 
To mitigate this threat, it is crucial to design a mechanism that prevents the server from being aware of the repetition of values. 
{We find that we can use the dummy tags to group all relevant values into an unsearchable dataset so that the server, in the Search stage, can only "see" a single copy of results, thereby concealing the volume of repetitive values and reducing the potential for volume leakage. }

\textit{A trivial solution to transform DSE for volume-hiding.}\label{trival_sol} Recall that DSE focuses on searching file\textcolor{black}{-}identifiers instead of values.
It may not be straightforward to replace the distinct identifiers with the unique/repeatable values.
Here, we attempt to provide a \textit{trivial transformation} to make DSE support "distinct search." \\
$\bullet$ In the Setup or Update stage, the client maps each value to a unique identifier and then initializes or updates the keyword-value pairs by a DSE instance. \\
$\bullet$ To perform the distinct search, the client utilizes the query protocol of the DSE to retrieve these identifiers. Afterward, the client restores the values based on the mapping of the identifiers and then extracts the distinct values locally.

{We see that this transformation necessitates the client to maintain a local mapping table for translation between identifiers and values during each search. 
Furthermore, its security and efficiency heavily rely on the underlying DSE instance. 
We note that a regular DSE scheme could still be vulnerable to volume leakage {\color{black}(e.g., \cite{xu2023leakage})}, or it applies expensive padding strategies incurring significant bandwidth and storage costs {(e.g., \cite{vo2021shielddb,demertzis2020seal})}.  
This observation indicates that a trivial transformation could not be the best solution for both security and efficiency. 
}

In response to this, we propose the \emph{Distinct Dynamic Searchable Symmetric Encryption (\textit{d}-DSE)} that enables clients to securely search for distinct values with volume-hiding. 
In this concept, we make use of the Distinct State, the Distinct Classifier, and the Distinct Constraint to determine distinct values and eliminate the volume difference. 
{
\textcolor{black}{We instantiate a \textit{d}-DSE scheme \textsf{BF-SRE} for non-interactive deletion and efficient search.}
We present a brief comparison between \textsf{BF-SRE} and related \textcolor{black}{(D)SE schemes} in Tab.  \ref{Distinct Search}.

\textit{Extension from "distinct" to "diverse" search.} 
We notice that many works have demonstrated the feasibility of incorporating DSE to support secure queries in SQL syntax, such as keyword\cite{demertzis2020seal}, range\cite{faber2015rich}, and join  \cite{hahn2019joins,shafieinejad2022equi}. 
Fortunately, we find out that the distinct search can be integrated into secure SQL as well.
{\color{black}
Our core idea is to process a query by first obtaining the \textit{distinct values} using the distinct search and subsequently restoring  the quantity of each distinct value (i.e., \textit{value's quantity}) by small client storage.
Please see Sec. \ref{AppinEDB} for more details. 
}
}

The above descriptions briefly introduce our technical roadmap from the beginning by distinct search, constructing secure distinct search (i.e., \textit{d}-DSE), and fulfilling SQL search for EDBs. 
We summarize key contributions as follows. \\
$\bullet$ \textbf{Definition and models of \textit{d}-DSE}.
We underline the definition and models for secure distinct search. 
In particular, we propose a so-called Distinct with Volume-Hiding (DwVH) security that captures how a secure distinct search can mitigate volume leakage. 
We formalize the EDB context of distinct search, analyze security risks, and define the advantage in the models. 
\\
{\color{black}
$\bullet$ \textbf{\textit{d}-DSE designed EDB}. 
We leverage \textit{d}-DSE to build EDBs with volume-hiding property.
We first illustrate the EDB system enhanced by the \textit{d}-DSE query model.
Then we present three new constructions: d-KW-\textit{d}DSE (for distinct keyword queries), KW-\textit{d}DSE (for keyword queries), JOIN-\textit{d}DSE (for join queries), and the corresponding row addition and deletion operations.
We comprehensively expose the leakage functions \cite{DBLP:conf/ccs/CashGPR15,jutla2022efficient} under these constructions and interpret the pattern leakage from the EDB perspective. 
Finally, our analysis clarifies that they resist both SOTA passive \cite{xu2023leakage} and active \cite{zhang2023high} volumetric attacks.
}
\\ 
$\bullet$ \textbf{Concrete \textcolor{black}{scheme} for \textit{d}-DSE}. 
To handle the distinct feature \textcolor{black}{in \textit{d}-DSE designed EDB}, we utilize the Bloom Filter (\textcolor{black}{BF}) \cite{sun2021practical}, producing Distinct State to control the tag generation for values. 
In this sense, we can effectively match the distinct values in the search. 
Based on the above, we propose our \textcolor{black}{scheme}, called \textsf{BF-SRE}, by applying Symmetric Revocable Encryption (SRE). 
The scheme enables the client to process the deletion locally, and thus, it is non-interactive and flexible to row update in encrypted databases.
We also provide formal security analysis to demonstrate that the \textcolor{black}{scheme} satisfies the \textcolor{black}{FP\&BP} as well as the DwVH security. 
\\
$\bullet$ \textbf{Evaluations}.
{\color{black}
We extensively perform evaluations on our proposals and prior work under diverse real-world datasets, including the Crime reports, Wikipedia, and the Enron email datasets. 
Concretely, under equivalent security parameters, we compare \textsf{BF-SRE} with the SOTA \textsf{MITRA}*\cite{ghareh2018new}, \textsf{AURA} \cite{sun2021practical}, \textsf{SEAL}\cite{demertzis2020seal}, and \textsf{ShieldDB}\cite{vo2021shielddb} in terms of the time and communication costs.
Our evaluation demonstrates that \textsf{BF-SRE} stands as a competitive solution  compared to the aforementioned schemes. 
For example, on the Enron dataset, its costs of time and communication for searching the highest-volume keyword are 8.25s and 83.82KB, outperforming others by a factor of approximately 29.27x and 30.54x, respectively. 
}

\section {Backgrounds}

We first describe the encrypted database that supports recording the keyword/value pairs.
\textcolor{black}{In this context, we introduce the parties in distinct search with the corresponding interactions.}
After that, we propose the threat in distinct search.

\subsection{Encrypted Database Description}\label{EDB_mem}
{%
We use EDB to represent an encrypted database in the context of DSE.
This database stores repeatable keyword/value pairs $\left(w, v\right)$ 
{\color{black}
(e.g., for relational table structure, an attribute's value in one column as a keyword with a foreign-key in another as a value)
}
in sequence and supports secure addition, deletion, and search operations.
Specifically, the client can search keywords to retrieve the matching values, add new keyword/value pairs, and delete all specified repetitive $\left(w,v\right)$ pairs.
Note that the search operation incorporates the distinct search predicate {\color{black} $TypeDB\left(w\right)=\left\{v_1,...,v_n| v\in (w,v)\ and\ \forall\ i,j \in [1,n]\ s. t.\ v_i=v_j \iff i=j \right\}$} to retrieve all matched distinct values given a keyword $w$.
\textcolor{black}{We call the number of distinct values in $TypeDB\left(w\right)$ as the \textit{value's type}.}
}

\subsection{Parties}

\textcolor{black}{Two parties involve in the distinct search: } \\
    $\bullet$ Client: The client initializes the EDB by security parameter and outsources it to the server.
    The client sends an update token generated by the input $\left(w,v,op = \{add,del\} \right)$ to add or delete the keyword/value pairs $\left(w,v\right)$ on the EDB.
    The client sends the search token generated by the keyword $w$ to find the corresponding distinct values. \\
    $\bullet$ Server: The server hosts the EDB outsourced by the client.
    The server updates the EDB by the update token.
    In processing a search token, the server searches the distinct values corresponding to the keyword $w$ and returns the result to the client.

\subsection{The Threat in Distinct Search}

{
\color{black}  A scheme for secure distinct search should protect the client's outsourced EDB against the probabilistic polynomial time (PPT) adversary from the \textit{passive}\cite{xu2023leakage,oya2021hiding,oya2022ihop} and \textit{active}\cite{zhang2023high} attacks.
Akin to \cite{chen2023power}, we propose the adversaries' observations during various client events that could reveal volumetric information for the aforementioned attacks.
}
\begin{table}[htb]
    \centering
    \caption{The adversaries' passive and active observations.}
    \vspace{-5pt}
    \scalebox{0.65}{
    \centerline{
    \begin{tabular}{|l|l|l|}
        \hline
        \textbf{Client Event}   & \textbf{Passive observations} & \textbf{Active observations}\\
        \hline
        \multirow{2}{*}{\makecell[l]{\textbf{Setup}:\\ 1.Set up Secret Key;\\2.Outsource the EDB. }} & \multicolumn{2}{c|}{ \makecell[l]{1.The initial EDB.}}\\
        \cline{2-3} & \makecell[l]{1.Prior volume knowledge;} &  \makecell[l]{1.The deceptive set of\\keyword/value pairs;\\2. Partial queries volume \\ before injection.}\\
        \hline
        \multirow{2}{*}{\makecell[l]{\textbf{Update:} \\ 1.Update ciphertexts\\in the EDB. }} & \multicolumn{2}{l|}{ \makecell[l]{1.The updated EDB.}} \\
        \cline{2-3} & \makecell[l]{1.Updated queries \\ in ciphertexts.} & \makecell[l]{1. Updated queries\\generated  from \\ deceptive set.} \\
        \hline
        \makecell[l]{\textbf{Search:} \\ 1.Require distinct search.} & \makecell[l]{ 1.Client search queries; \\2.Search process;\\3.The volume of search \\ queries.} & \makecell[l]{ 1. The after-injection \\volume of queries.}\\
        \hline
    \end{tabular}
    }}
    \label{Observe}
\end{table}

From Tab. \ref{Observe}, we state that the philosophy of volume-hiding countermeasure is to decouple the volumetric relationship speculated from the Setup and Search stage.
The passive and active adversaries first get the baseline of volume through prior knowledge (e.g., outdated keyword frequency\cite{oya2021hiding}) and partial queries observed before injection, respectively.
Then they both consult the observation in the Update stage to achieve maximum fit in the Search stage, such as minimizing objective
functions\cite{xu2023leakage,oya2022ihop} and manufacturing binary volume\cite{zhang2023high}.
To mitigate the fitting, padding strategies produce uniform volume through dummy data at initialization.
Our perspective is to prevent the search leakage containing predicable volume distribution, which is our security goal for distinct search.

We also require Forward Privacy and Backward Privacy (FP\&BP)\cite{bost2016ovarphiovarsigma,bost2017forward} to protect the update  and distinct search function from serious privacy disclosure.
To capture the leakage from distinct search, the security for volume leakage\cite{patel2019mitigating} should be defined.

\section{Distinct DSE}
We define Distinct Dynamic Searchable Symmetric Encryption (\textit{d}-DSE){\color{black}, which includes its scheme definition and security guarantees, to tackle the distinct search threat.}

\subsection{Notations}
$\lambda\in\mathbb{N}$ is the security parameter.
$r\stackrel{\$}{\gets}\mathcal{R}$ means randomly sampling $r$ from the space $\mathcal{R}$.
$a||b$ is the concatenation between $a$ and $b$.
$\left\{0,1\right\}^\lambda$ is a $\lambda$-bit length string.
$\left\{0,1\right\}^*$ is an arbitrary-bit length string.
$F$ is a secure Pseudo Random Function (PRF).
\textcolor{black}{$\Sigma$ is the Dynamic Searchable Symmetric Encryption (DSE) scheme.}
The frequently used notations \textcolor{black}{\&} concepts, the symmetric encryption, Bloom Filter (BF)\cite{sun2021practical}, and Symmetric Revocable Encryption (SRE)\cite{sun2021practical}, are introduced in Appendix \ref{Ntandtools}.

\subsection{The \textit{d}-DSE Scheme}

{\color{black}
\begin{definition}[The Distinct DSE] \label{def_scheme_DDSE}
\itshape
A Distinct Dynamic Searchable Encryption is a triple $\left(\textsf{Setup}, \textsf{Search}, \textsf{Update}\right)$ consisting of three protocols:
    \begin{enumerate}[$\bullet$]
    \item $\textsf{Setup}\left(\lambda\right)$ is a protocol that takes as input the {\rm security parameter} $\lambda$. It invokes the {\rm DSE setup protocol $\Sigma.setup$} and generates the \textbf{\rm Distinct State $\sigma^D$}, outputting $K, \mathbf{st}=\{\sigma,\sigma^D\}$ for the client and outsourcing $\mathrm{EDB}$ for the server respectively, where $K$ is the {\rm master secret key}, $\mathrm{EDB}$ is the {\rm encrypted database}, and $\mathbf{st}$ is the {\rm client's internal state}.
    \item $\textsf{Search}\left(K,\mathbf{st},w;\mathrm{EDB}\right)$ is a protocol between the client with inputs $K$, $\mathbf{st}$, and a search query restricted to a {\rm keyword} $w$, and the server with input the $\mathrm{EDB}$.
    It invokes the {\rm DSE search protocol} $\Sigma.search$ with input modified by the Distinct State $\sigma^D$.
    At the end, the client gets a search result set included \textbf{ distinct values} from the $\mathrm{EDB}$.
    \item $\textsf{Update}\left(K,\mathbf{st},op,{\rm in};\mathrm{EDB}\right)$ is a protocol between the client with $K$ and $\mathbf{st}$ as above, and an {\rm operation} $op$  with its input ${\rm in}$, where $op$ is from the set \{add, del\} {\rm (i.e. addition, deletion)} and ${\rm in}$ is parsed as the {\rm keyword/value pair} $\left(w,v\right)$; and the server with input $\mathrm{EDB}$. 
    It invokes the {\rm DSE update protocol} $\Sigma.update$ with input modified by the Distinct State $\sigma^D$.
    At the end, the client updates its internal $\mathbf{st^\prime}$, and the server renews the $\mathrm{EDB}$.
	\end{enumerate}

\textbf{Correctness.}
Except with negligible probability, the \textit{d}-DSE scheme is correct if the \textsf{Search} protocol returns (current) correct results for the keyword being searched.
For the formalism, we follow the case where the client should not delete a keyword with a retrieval value that is not present in $\mathrm{EDB}$.
\end{definition}
}

\textbf{Remark on the \textit{d}-DSE search protocol.}
The retrieval searched by \textit{d}-DSE and DSE is different\cite{bost2017forward}.
\textcolor{black}{In DSE, the client eventually needs to obtain the outsourced data through the file-identifier, and the volume leakage is not considered by Forward Privacy and Backward Privacy (FP\&BP) \cite{xu2023leakage}.
In our context, the client is allowed to search the EDB to retrieve outsourced distinct values by a keyword.
To this end,} we must refine the leakage on retrieving values.

\subsection{Security Notions for Distinct Search}

{\color{black}To address the leakage ($\mathcal{L}_D$) in \textit{d}-DSE, we refer to
the DSE security. 
The security contains FP\&BP \cite{bost2016ovarphiovarsigma,bost2017forward} under the Adaptive Security model \cite{curtmola2011searchable}, but it falls short in preventing volume leakage. 
Therefore, we define the Distinct with Volume-Hiding (DwVH) security. 
Note that we introduce the Sim-adaptive Security model instead of the Adaptive Security model to perceive FP\&BP and the DwVH security.   
}

We assume that the client is \emph{honest} and should prevent the disclosure of sensitive information, e.g., underlying keywords and encrypted data. 
The server, \textcolor{black}{which is \emph{honest-but-curious}, should follow the protocols' instructions yet passively exposes} some information. 
Similar to the $\textit{Real}$ and $\textit{Ideal}$ formulation \cite{cash2014dynamic,bost2017forward}, \textcolor{black}{t}he Sim-adaptive security notions of \textit{d}-DSE are defined as follows. 

\begin{definition}[Sim-adaptive Security of \textit{d}-DSE]
	    {
	    \itshape
		Assume a \textit{d}-DSE scheme is $\mathcal{L}$-adaptively secure iff for all sufficiently large security parameters $\lambda\in\mathbb{N}$ and PPT adversary $\mathcal{A}$, there is a set of efficient simulators $\mathcal{S}$ with a set of leakage functions $\mathcal{L}$ that has:
		\begin{equation*}
        \footnotesize
		\begin{split}
		\left|\mathbb{P}\left[{\rm Real}_\mathcal{A}^{\rm d-DSE}\left(\lambda\right)=1\right]-\mathbb{P}\left[{\rm Ideal}_{\mathcal{A},\mathcal{S},\mathcal{L}}^{\rm d-DSE}\left(\lambda\right)=1\right]\right|
		=negl\left(\lambda\right),
		\end{split}
		\end{equation*}
		
		where the \textcolor{black}{games} ${\rm Real}_\mathcal{A}^{\rm \rm d-DSE}\left(\lambda\right)$ and ${\rm Ideal}_{\mathcal{A},\mathcal{S},\mathcal{L}}^{\rm \rm d-DSE}\left(\lambda\right)$ are:
            \begin{enumerate}[$\bullet$]
			\item Game ${\rm Real}_\mathcal{A}^{\rm d-DSE}\left(\lambda\right)$: the adversary controls the client to run real protocols.
			Firstly, it triggers the protocol \textsf{Setup} and gets the encrypted database $ \mathrm{\mathrm{EDB}}$.
			Secondly, with the parameters of its choice, it adaptively triggers \textsf{Update}, \textsf{Search}, and then obtains the real transcript list $Q = \left(q_1,q_2,..q_n\right)$.
			Finally, it outputs a bit $b$ decided from the real sensitive information $\left({\mathrm{EDB}},q_1,...q_n\right)$.

			\item Game ${\rm Ideal}_{\mathcal{A},\mathcal{S},\mathcal{L}}^{\rm d-DSE}\left(\lambda\right)$: all of the real protocols are replaced by the set of simulators $\mathcal{S}$.
			The adversary first triggers the simulator $\mathcal{S}_\mathsf{setup}$ and obtains the simulated database $ {\mathrm{EDB}}^\mathcal{S}$. 
			Secondly, with chosen parameters, it adaptively triggers $\mathcal{S}_\mathsf{Update}$, $\mathcal{S}_\mathsf{Search}$, and then gets the simulated transcript $Q^\mathcal{S}=(q_1^\mathcal{S},q_2^\mathcal{S},..q_n^\mathcal{S})$.
			Finally, it outputs a bit $b$ decided from the simulated sensitive information $({\mathrm{EDB}}^\mathcal{S},q_1^\mathcal{S},q_2^\mathcal{S},..q_n^\mathcal{S})$.
			\end{enumerate}
		}
\end{definition}

\textbf{Common Leakage Functions.}
Akin to \cite{bost2017forward}, we use the $sp\left(w\right)$ to represent the \textcolor{black}{\textit{search pattern}, i.e., which queries belong to the keyword $w$}, and the
\textsf{UpHist}$\left(w\right)$ to represent the history of all updates on keyword $w$.

\textbf{Forward Privacy \textcolor{black}{(FP)}.} 
As \textit{d}-DSE holds \textcolor{black}{FP}, adversaries should not utilize the previous transcript to imply the newly added information before executing the \textsf{Search} protocol.
Based on FP in \cite{bost2016ovarphiovarsigma}, \textit{d}-DSE must further preserve the value's information.
This requirement prevents adversaries from deducing \textcolor{black}{the equality among values, which will expose} more information than just the distinct values.

\begin{definition}[Forward privacy]
    {
    \itshape 
    An $\mathcal{L}$-adaptively-secure \textit{d}-DSE scheme has forward privacy if its \textsf{Update} leakage function $\mathcal{L}_D^{Upt}$ can be written as:
    \begin{equation*}\mathcal{L}_D^{Upt}\left(w,v,op\right)=\mathcal{L}^\prime\left(op\right),
    \end{equation*}
    where $\mathcal{L}^\prime$ is a state-less function.
    }
\end{definition}

\textbf{Backward Privacy \textcolor{black}{(BP)}.}
As \textit{d}-DSE holds \textcolor{black}{BP}, adversaries should not use new transcripts to imply previous information after executing \textsf{Search}.

{ 
To capture the \textcolor{black}{BP} in \textit{d}-DSE, we first propose the leakage function ${\rm TimeDTS\left(w\right)}$.
${\rm TimeDTS}\left(w\right)$ defines the list of all encrypted values $v$ matching keyword $w$, excluding the deleted and repetitive ones, together with the timestamp $u$ of when they were added in the database:
{\color{black}
\begin{equation*}
        \footnotesize
	    \begin{aligned}
		 {\rm TimeDTS}\left(w\right)&= \{ \textcolor{black}{(u,v)}| (u,w,v,add) \in Q\ and \ \forall\ u^\prime, (u^\prime,w,v,del) \notin Q \\
	     &and\ \forall\ (u,w,v_i,add) \in Q, (u,w,v_j,add) \in Q, \\
          & s. t.\ v_i=v_j \iff i=j\}.\\
	    \end{aligned}
\end{equation*}
}
\label{BP-dDSE}

${\rm Update}\left(w\right)$ defines the leakage of which timestamps have update queries on the keyword $w$.
With the input $w$, ${\rm Update}\left(w\right)$ returns the set of the timestamps, in which each of them represents an update on $w$:
\begin{equation*}
        \footnotesize
	\begin{aligned}
		{\rm Update}\left(w\right)= \{ u\ |\ (u,w,v,add)\in Q\ or\ (u,w,v,del) \in Q \}.\\
	\end{aligned}
\end{equation*}

With the above functions, we define the \textcolor{black}{BP} of \textit{d}-DSE:
\begin{definition}[Backward privacy]
    {\itshape
    An $\mathcal{L}$-adaptively-secure \textit{d}-DSE scheme has backward privacy if its \textsf{Update}, \textsf{Search} leakage functions $\mathcal{L}_D^{Upt}$, $\mathcal{L}_D^{Srch}$ can be written as:
    \begin{equation*}
        \begin{aligned}
        \footnotesize
    \mathcal{L}_D^{Upt}\left(w,v,op\right)&=\mathcal{L}^\prime\left(w,op\right) \\
        \mathcal{L}_D^{Srch}\left(w\right)&=\mathcal{L}^{\prime\prime}\left(sp(w),{\rm TimeDTS\left(w\right)},{\rm Update\left(w\right)}\right),
        \end{aligned}
    \end{equation*}
    where $\mathcal{L}^\prime,\mathcal{L}^{\prime\prime}$ are state-less functions.
    }
\end{definition}

We say that the aforementioned backward privacy is appropriate for \textit{d}-DSE.
That is inspired by the type-2 level of DSE\cite{bost2017forward}. 
Besides, the type-3 level discloses that multiple $\left(w,v\right)$ additions can be associated with the corresponding deletion operations, revealing the number of identical values that may cause volume leakage.
}

\subsection{Distinct with Volume-Hiding Security}
We analyze the \textit{d}-DSE leakage in consideration of volumetric attacks from the server. 
Previous works \cite{patel2019mitigating,wang2022practicalvol} assume that the number of values associated with any single keyword should not be revealed, \textcolor{black}{excluding the maximum volume}.
Accordingly, the leakage in \textit{distinct search} should only infer the distinct values searched by keywords, and that cannot be \textcolor{black}{pairwised} with prior and current volume knowledge.

Inspired by the volume-hiding \cite{patel2019mitigating}, we define the DwVH security, which aims to prevent adversaries from distinguishing the "signatures" of EDB.
We say that a signature is defined as a sequence of uploaded keyword/value pairs $S= \{w,l(w),t(w)\}$, \textcolor{black}{where $l(w)$ and $t(w)$ denote the number of distinct values (i.e., \textit{value's type}) and the sum of value's quantities associated with keyword $w$, respectively}.
Under our security notion, the adversary is allowed to make two signatures \textcolor{black}{$S_0,S_1$}  for the challenger who randomly selects one of them \textcolor{black}{$S_b, b\in \left\{0,1\right\}$} to generate the EDB.
Afterward, the adversary collects leakages from update and search operations.
Finally, it decides which signature is used to construct the EDB.
We say that a \textit{d}-DSE scheme is DwVH secure if this signature (chosen by the challenger) is indistinguishable from the adversaries, \textcolor{black}{i.e., negligible advantages on the leakage from distinct values and \textit{d}-DSE operations to restore volume information.}

{\color{black}
We let $n$ denote the initial total number of keyword/value pairs in the EDB, and 
$s$ represents the step $s= 1,2,...poly(\lambda)$. 
The DwVH security is defined as follows.
}

\begin{definition}[DwVH Security]
    {\itshape
    An $\mathcal{L}$-adaptive secure \textit{d}-DSE is called DwVH secure if for all $n\geq 1$ and all adversary $\mathcal{A}$ that execute at most $s$ steps, the probability that $\mathcal{A}$ outputs 1 in $\mathbf{DwVHGame}_{\mathcal{A}}^\mathcal{L}\left((n,s),0\right)$ is identical to that in $\mathbf{DwVHGame}_{\mathcal{A}}^\mathcal{L}\left((n,s),1\right)$.
    The $\mathbf{DwVHGame}_{\mathcal{A}}^\mathcal{L}\left((n,s),b\right)$ is:
    }
\end{definition}

\floatname{algorithm}{Game}
\begin{algorithm}
{
    \scriptsize    \caption{$\mathbf{DwVHGame}_{\mathcal{A}}^\mathcal{L}\left((n,s),b\right)$:}    \label{DwVHGame}
    \begin{algorithmic}[1]
        \Statex \underline{$Prepare:$}
        \State $\mathcal{A}$ generates two signatures $S_0 = \left\{w,l_0(w),t_0(w)\right\}_{w\in\mathcal{W}}$ and $S_1 = \left\{w,l_1(w),t_1(w)\right\}_{w\in\mathcal{W}}$, such that:
        \Statex \quad
        \vspace{-6pt}
        \Statex \quad $l_0(w)_{w\in\mathcal{W}}<t_0(w)_{w\in\mathcal{W}}\leq n$;  $l_1(w)_{w\in\mathcal{W}}<t_1(w)_{w\in\mathcal{W}}\leq n$; 
        \Statex \quad $l_0(w)_{w\in\mathcal{W}}=l_1(w)_{w\in\mathcal{W}}$; $\Sigma_{w\in\mathcal{W}}t_b(w) = n$
	    \State $\mathcal{A}$ sends $S_0$ and $S_1$ to the challenger $\mathcal{C}$. 
        \State $\mathcal{C}$ computes keyword/value pairs by choosing $t_b(w)$ values for each keyword $w$, and each keyword $w$ corresponds to $l_b(w)$ distinct values.
        \State $\mathcal{C}$ updates all pairs and sends the corresponding leakages $\mathcal{L}_D^{Upt}$ to the adversary $\mathcal{A}$.
    \end{algorithmic}

    \begin{algorithmic}[1]
        \Statex \underline{$Queries:$}
        \State $\mathcal{A}$ adaptively executes $s$ step. In each step, it can perform:
        \Statex \quad$\bullet$ Update query: $\mathcal{A}$ adaptively performs the $s$-th update query about the step's keyword $w_s$, $\mathcal{C}$ computes $\mathcal{L}_D^{Upt}$ and sends it back.
        \Statex \quad$\bullet$ Search query: $\mathcal{A}$ adaptively performs the $s$-th search query about $w_s$, and $\mathcal{C}$  computes and returns $\mathcal{L}_D^{Srch}$.
    \end{algorithmic}

    \begin{algorithmic}[1]
        \Statex \underline{$Guess:$}
        \State $\mathcal{A}$ guess the $b$ input and outputs a bit $b^\prime \in \left\{0,1\right\}$.
    \end{algorithmic}
}
\end{algorithm}

Compared to the volume-hiding definitions \cite{patel2019mitigating}, \textit{d}-DSE can fight against adversaries who enable challengers to adaptively perform update and search operations.
{The aim of the DwVH security is to show that the leakages from the update and distinct search do not reveal which signature is chosen to establish the EDB.
}
In other words, the adaptive adversaries are consistent with our consideration in the \textcolor{black}{FP\&BP} \textit{d}-DSE so that they just leverage the \textit{d}-DSE leakage of update and distinct search.

\section{High Level of \textit{d}-DSE Designed EDB}\label{AppinEDB}
{\color{black}
This section illustrates the practical application of \textit{d}-DSE in bolstering foundational queries in EDB systems. 
We introduce the \textit{d}-DSE query model, which encompasses update, distinct keyword, keyword, and join queries, effectively covering the spectrum of fundamental queries pertinent to encrypted relational databases.
Based on our \textit{d}-DSE security model, we also conduct a leakage analysis to identify potential volumetric attacks for EDB. 
\textcolor{black}{Note that we summarize the frequently used notations in Tab. \ref{notation_tables} (see Appendix \ref{Ntandtools}).}

\subsection{Apply to EDB system}

\begin{figure}[htb]
    \vspace{-10pt}
    \centerline{ \scalebox{1}{\includegraphics[width=8.5cm,height=7cm]{./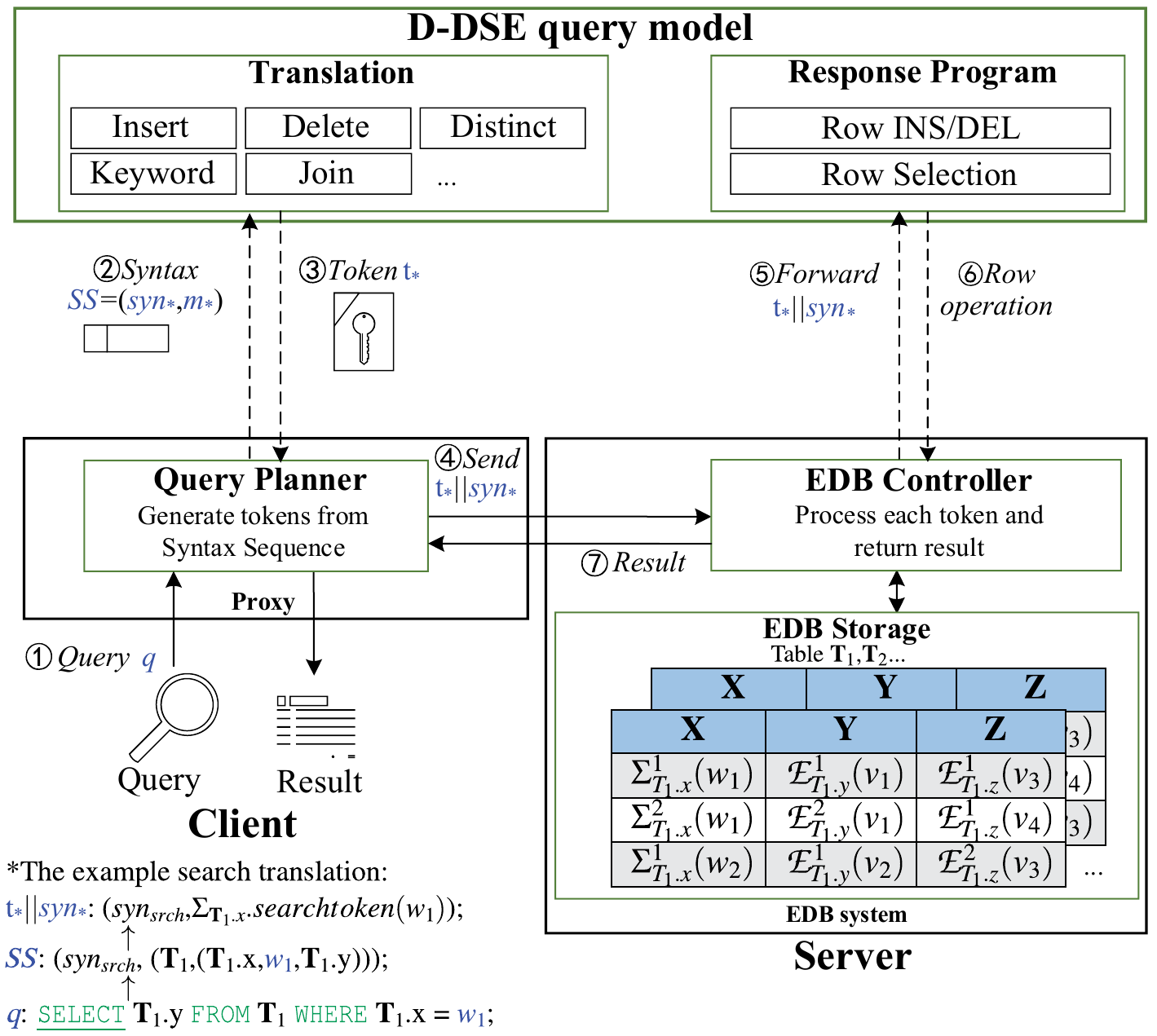}}}
    \vspace{-5pt}
    \caption{{\color{black} The high-level of \textit{d}-DSE design EDB.
    $\Sigma_{\textbf{T}_1.x}^\texttt{c}(w_1)$ and $\mathcal{E}_{\textbf{T}_1.y}^\texttt{c}(v_1)$ denote that the keyword $w_1$ and value $v_1$ in table column $\textbf{T}_1.x$ and $\textbf{T}_1.y$ are encrypted by  $\Sigma$ and $\mathcal{E}$ with a counter $\texttt{c}$, respectively.
    }}
    \label{Apphigh}
    \vspace{-5pt}
\end{figure}

{\color{black}
Like \cite{Popa2011CyptDB,poddar2016arx}, Fig. \ref{Apphigh} depicts that \textit{d}-DSE based EDB leverages the \textit{EDB-proxy} architecture, including:

\begin{enumerate}[$\bullet $ ]

\item  \textbf{EDB Storage \textcolor{black}{($\mathcal{EDB}_S$)}} organizes encrypted data into the table collection $\mathcal{T} = \{\textbf{T}_1,...\textbf{T}_N\}$ \textcolor{black}{\textit{on disk}}. 
Each $\textbf{T}_*$ (e.g., $\textbf{T}_1$.x) is represented by columns $\textbf{T}_*.\star$ and their respective attribute types (e.g., constant-size string, integer, etc.).
Tables include rows, where each row corresponds to the encrypted record $\textbf{r}=(\Sigma_{\textbf{T}_*.\star}^\texttt{c}(w),\mathcal{E}_{\textbf{T}_*.\star}^\texttt{c}(v),...)$ manufactured by the \textit{d}-DSE query model. 

\item  \textbf{\textit{d}-DSE query model} includes a series of \textit{d}-DSE constructions that support the following: 
(1) the \textit{translation} from the functional equivalent syntax $SS = (syn_*,m_*)$ to the token ${\rm t}_*$, where $syn_*$ specifies the construction name, $m_*$ includes the message of $\textbf{T}_*$, $\textbf{T}_*.\star$, and numeric or text data, and ${\rm t}_*$ contains the encrypted data \textcolor{black}{generated from $SS$};
(2) the \textit{response program} for the EDB controller to perform the related construction operations (i.e., row insertion/deletion and selection) on \textcolor{black}{$\mathcal{EDB}_S$} via  the concatenation of the encrypted data and the name ${\rm t}_*||syn_*$.
We construct update, (distinct) keyword, and join query for the \textit{d}-DSE query model (see the following subsections).

\item \textbf{Query Planner} processes queries and forwards results.
For a query $q$, the planner does the following: 
\textcircled{1} extracts it to $SS$ (see the example in Fig. \ref{Apphigh}); 
\textcircled{2} forwards $SS$ to the translation;
\textcircled{3} receives the ${\rm t}_*$;
\textcircled{4} forwards the ${\rm t}_*||syn_*$ to the EDB Controller and waits for response;

\item  \textbf{EDB Controller} processes the ${\rm t}_*||syn_*$ and replies the corresponding results.
\textcolor{black}{Specifically, it follows operations from the response program to update rows in $\mathcal{EDB}_S$ and to copy encrypted data from $\mathcal{EDB}_S$ to the \textit{controller's memory} for selection. We call the replicated encrypted data as $\texttt{EDB}_C$.} 
The EDB Controller works as:
\textcircled{5} forwards ${\rm t}_*||syn_*$ to the response program;
\textcircled{6} follows the operations to update/select rows;
\textcircled{7} returns the operation results.

\end{enumerate}
}

We leverage the EDB-proxy architecture as described in \cite{Popa2011CyptDB,poddar2016arx} to map the original SQL syntax into a series of \textit{d}-DSE tokens. 
This approach enables us to thoroughly analyze potential leakages within \textcolor{black}{the N-size $SS$ sequence (i.e., $\mathbf{SS} = (SS_1,..,SS_N) $)} through the \textit{d}-DSE query model.
Our \textit{d}-DSE query model can attain a \textit{one-for-all} volume-hiding capability within EDB systems. 
We note that hereafter, we omit the subscript when describing a query in a table (e.g., $\textbf{T}$).

\subsection{Update Queries}

We start with the basic update queries on table $\textbf{T}$, e.g.
\begin{equation*}
    \footnotesize
    \begin{aligned}
    \textcolor{ForestGreen}{\underline{\texttt{INSERT INTO}}} \ \ \textbf{T}\ \ (\textbf{T}.x,\textbf{T}.y)\ \ & \textcolor{ForestGreen}{\texttt{VALUE}}\ (w,v) \\
    & \hookrightarrow SS_{ins} = (syn_{ins},(\textbf{T},(\textbf{T}.x,w,\textbf{T}.y,v))) \\       
    \textcolor{ForestGreen}{\underline{\texttt{DELETE FROM}}} \ \ \textbf{T}\  \textcolor{ForestGreen}{\texttt{WHERE}} \ \textbf{T}.x =  w\  & \textcolor{ForestGreen}{\texttt{AND}} \ \textbf{T}.y =  v \\ &\hookrightarrow SS_{del} = (syn_{del},(\textbf{T},(\textbf{T}.x,w,\textbf{T}.y,v))).        
    \end{aligned}
\end{equation*}

We can convert this syntax $SS_{ins}/SS_{del}$ to the \textit{d}-DSE \textsf{Update} protocol, which works for \textbf{T} (initialized by \textit{d}-DSE \textsf{Setup} protocol) and receives keyword, value, and `INSERT'/`DELETE' command as the $(w,v,op)$ input from the Query Planner.
{\color{black} We generate the update token through the \textsf{Update} protocol to append/delete the encrypted keyword and value in the corresponding row in \textbf{T}, i.e., $\left(\Sigma_{\textbf{T}.x}(w),\mathcal{E}_{\textbf{T}.y}(v)\right)$, for subsequent distinct keyword queries.}
{\color{black} Note that we use the client's state \textbf{st} to mark deleted rows on $\mathcal{EDB}_S$ and then process deletion in batch for queries on $\texttt{EDB}_C$.  
}

{\color{black} Note that one should not separately analyze the leakage from the N-size update sequence (i.e., $\textbf{op}_N$$=((u_1,op_1)..(u_N,op_N))$, where $u_*$ is the timestamp) since it is further refined by the leakage (Update(w), see Sec. \ref{BP-dDSE}) from search queries in the \textcolor{black}{EDB system} (see Sec \ref{d-KW-dDSE}).}

\subsection{Distinct Keyword Queries} \label{d-KW-dDSE}

A distinct keyword query retrieves distinct values from the column $\mathbf{T}.y$ of table $\textbf{T}$ that corresponds to keyword $w$ in column $\mathbf{T}.x$, i.e: 
\begin{equation*}
    \footnotesize
    \begin{aligned}
    \textcolor{ForestGreen}{\underline{\texttt{SELECT DISTINCT}}}\ \textbf{T}.y\  \textcolor{ForestGreen}{\texttt{FROM}}\  \textbf{T}\  & \textcolor{ForestGreen}{\texttt{WHERE}}\ \textbf{T}.x = w \\ &\hookrightarrow SS_{Dsrch} = (syn_{Dsrch},(\textbf{T},(\textbf{T}.x,w,\textbf{T}.y))).        
    \end{aligned}
\end{equation*}

Like \cite{demertzis2020seal,kamara2018sql}, the \textit{d}-DSE \textsf{Search} protocol can implement $SS_{Dsrch}$ by viewing $w$ in $\textbf{T}.x$ as keywords and data in $\textbf{T}.y$ as values.
$d$-DSE should allocate a local memory as Distinct State $\sigma^D$ to control the \textsf{Update} and \textsf{Search} protocols. 
{\color{black} For generating tokens ${\rm t}_*$ from $SS$, the \textsf{Update} and \textsf{Search} protocols should separately employ the Distinct Classifier and Distinct Constraint to pre-process $SS$ in \textit{translation} (Fig. \ref{Apphigh}):\\
$\bullet $ \textit{Distinct Classifier} identifies whether the $(w,v,op)$ from update queries is represented as distinct in retrieval, and subsequently, it tags the modified input.\\
$\bullet $ \textit{Distinct Constraint} generates the constrained key for distinct search queries that limits the retrieval of distinct values from $\textbf{T}.y$ based on $w$ in $\textbf{T}.x$.
}

We refer to this construction as d-KW-\textit{d}DSE (distinct KeyWord queries from \textit{d}-DSE).
\footnote{\textcolor{black}{We give a detailed design for the $d$-DSE construction in Appendix \ref{High-d-DDSE}.}}

We notice that finding distinct values in a table requires a huge storage cost to store their state (i.e., distinct or repetitive).
At the same time, the rows in a table do not support records that represent delete operations (e.g., the addition \textcolor{black}{\&} deletion list of file-identifiers \cite{ghareh2018new}).
To this end, we apply the Bloom Filter (BF) to minimize local storage costs and the Symmetric Revocable Encryption (SRE) to enable non-interactive deletion. 
Our scheme for d-KW-\textit{d}DSE is elaborated in Sec. \ref{BF-SRE}.

\textbf{Leakage analysis on d-KW-\textit{d}DSE.} 
We represent a N-size sequence of distinct keyword queries as $\mathbf{q}_{dis} = \left(\textbf{i},\textbf{w}\right)$, including the sequence of table name ${\rm i}\in \textbf{i}$ for $\textbf{T}_{\rm i}$ and searched keyword $w_* \in \textbf{w}$, respectively.
Akin to \cite{hahn2019joins}, in the EDB perspective, the leakage function \cite{curtmola2011searchable,DBLP:conf/ccs/CashGPR15,cash2013highly} of d-KW-\textit{d}DSE is:
\begin{equation*}
\footnotesize
\begin{aligned}
\bm{\hat{L}}_{d-KW-\textit{d}DSE}\left(\mathbf{q}_{dis}\right)=[({\rm i}_1,{\rm t}_{w_1},\mathcal{L}_D^{Srch}(w_1)...({\rm i}_N,{\rm t}_{w_N},\mathcal{L}_D^{Srch}(w_N)],\\
\end{aligned}
\end{equation*}
where ${\rm t}_{w_*}$ represents the token from $w_*$.

Similar to \cite{demertzis2020seal}, the replacement of the \textsf{Search} protocol is carried out in a black-box manner.
We gain the ability to detect and examine all the leakages present in the EDB, which stem from the search leakage in \textit{d}-DSE.

To the best of our knowledge, passive\cite{xu2023leakage} and active\cite{zhang2023high} attacks both source patterns revealed from the leakage function.
We propose volumetric information \cite{xu2023leakage} for d-KW-\textit{d}DSE:
\begin{enumerate}[$\bullet $]
    \item \textbf{Update length pattern}, denoted as $ulen(w)=|{\rm Update\left(w\right)}|$ in $\mathcal{L}_D^{Srch}(w)$, outputs the number of updates made on keyword $w$.
	\item \textbf{Distinct response length pattern}, denoted as $drlen(w)=|{\rm TimeDTS}\left(w\right)|$, outputs \textcolor{black}{value's type}  matching keyword $w$.
    \item \textbf{Query equality pattern} (a.k.a search pattern),  denoted as $qeq(w_i,w_j) = \textbf{1}(w_i=w_j)$, indicates whether two queries are targeting the same keyword.
    The predicate $\textbf{1}(*)$ outputs 1 for $w_i=w_j$ and 0  otherwise.
\end{enumerate}

We state that d-KW-\textit{d}DSE does not have the insert/delete length pattern due to the type-2 BP of \textit{d}-DSE.
The file volume and response similarity pattern are both related to file retrieval instead of the {\color{black}constant-size distinct values} protected by FP\&BP.

\textbf{Potential attacks on d-KW-\textit{d}DSE}. 
d-KW-\textit{d}DSE does not possess file volume (size) pattern leveraged by BVA \cite{zhang2023high} in injection attacks.
For LAA, we find it difficult to perform similar volumetric attacks as in DSE\cite{xu2023leakage}.
Based on DwVH security, the value's quantity can express arbitrary  distribution and integrate into the $ulen(w)$ of each keyword.
Meanwhile, $drlen(w)$ does not reveal the sum of the value's quantities matching $w$, leading to the absence of the linear relationship between $drlen(w)$ and $ulen(w)$ \cite{xu2023leakage}.

\subsection{Keyword Queries} \label{KW-dDSE}

The most common keyword query retrieves values from column \textbf{T}.y for a keyword $w$ in \textbf{T}.x:
\begin{equation*}
    \footnotesize
    \begin{aligned}
    \textcolor{ForestGreen}{\underline{\texttt{SELECT}}}\ \textbf{T}.y\  \textcolor{ForestGreen}{\texttt{FROM}}\  \textbf{T}\  & \textcolor{ForestGreen}{\texttt{WHERE}}\ \textbf{T}.x = w \\
    & \hookrightarrow SS_{srch} = (syn_{srch},(\textbf{T},(\textbf{T}.x,w,\textbf{T}.y))).
    \end{aligned}
\end{equation*}

The $SS_{srch}$ is implemented by d-KW-\textit{d}DSE along with certain client computations. 
Specifically, the client allocates a hash table to map the keyword $w$ with the vector $\mathbf{d}$ constructed from $(w,v,op)$ input.
The dimensions of $\mathbf{d}$ record the value's quantities in the value's numerical (or lexicographical for string) order.
We can first obtain the encrypted distinct values on column \textbf{T}.y through d-KW-\textit{d}DSE, decrypt them, and finally restore each value's quantity through $\mathbf{d}$ in their aforementioned order.
The order from d-KW-\textit{d}DSE can also help us to insert/auto-increment/delete a dimension in $\mathbf{d}$ for new/repetitive/deleted $(w,v)$ input when updating \textcolor{black}{encrypted data}. 
We call this construction as KW-\textit{d}DSE (KeyWord queries from \textit{d}-DSE), which is a practical approach with $O\left(W\right)$ local storage cost {\color{black} to clients} \cite{bost2017forward,bost2016ovarphiovarsigma,ghareh2018new}.
{\color{black} Note that this approach enables record recovery by treating duplicated records copied from multiple columns' data as values.}

\textbf{Leakage analysis on KW-\textit{d}DSE.} The KW-\textit{d}DSE and d-KW-\textit{d}DSE are identical except for the hash map on the honest client. 
For a keyword query sequence $\mathbf{q}_{kw} = \left(\textbf{i},\textbf{w}\right)$, the leakage function of KW-\textit{d}DSE is:
\begin{equation*}
{
\footnotesize
\begin{aligned}
\bm{\hat{L}}_{KW-\textit{d}DSE}\left(\mathbf{q}_{kw}\right) =\bm{\hat{L}}_{d-KW-\textit{d}DSE} \left(\mathbf{q}_{kw}\right).\\    
\end{aligned}
}   
\end{equation*}

\textbf{Potential attacks on KW-\textit{d}DSE}. The potential attacks are identical to those on d-KW-\textit{d}DSE, as the equation captures the same leakage function. 

\subsection{Join Queries}\label{Join_app} 

We implement the join query\cite{hahn2019joins} between the foreign-key and primary-key, which is the fundamental query type for relational databases.
A simple join query of two tables $\textbf{T}_1$/$\textbf{T}_2$ on the foreign/primary key $\textbf{T}_1.z=\textbf{T}_2.z$ returns all values on $\textbf{T}_2.y$ from $\textbf{T}_1$ and $\textbf{T}_2$ that agree on $\textbf{T}_1.z=\textbf{T}_2.z \And \textbf{T}_1.x=w$, i.e.,
\begin{equation*}
    \footnotesize
    \begin{aligned}
    \textcolor{ForestGreen}{\underline{\texttt{SELECT}}}\ \textbf{T}_2.y\  & \textcolor{ForestGreen}{\texttt{FROM}}\  \textbf{T}_1  \textcolor{ForestGreen}{\underline{\texttt{JOIN}}}\ \textbf{T}_2\  \textcolor{ForestGreen}{\texttt{ON}}\ \textbf{T}_1.z = \textbf{T}_2.z\ \textcolor{ForestGreen}{\texttt{WHERE}}\ \textbf{T}_1.x = w. \\
    & \hookrightarrow SS_{join} = (syn_{join},(\textbf{T}_1,\textbf{T}_2,(\textbf{T}_1.x,w,\textbf{T}_1.z),(\textbf{T}_2.z,0,\textbf{T}_2.y))).
    \end{aligned}
\end{equation*}

It is straightforward to use multi KW-\textit{d}DSE to capture the $SS_{join}$. 
We encrypt $\textbf{T}_2$ with one KW-\textit{d}DSE $T_2^{KW-\textit{d}DSE}$ between column $\textbf{T}_2$.y and $\textbf{T}_2$.z and $\textbf{T}_1$  with another $T_1^{KW-\textit{d}DSE}$ between $\textbf{T}_1$.z and $\textbf{T}_1$.x.
For the join query, we first invoke $T_1^{KW-\textit{d}DSE}$ to get values $\mathbf{v}$ in $\textbf{T}_1$.z corresponding to $w$ in $\textbf{T}_1$.x. Then we utilize $T_2^{KW-\textit{d}DSE}$ to search values on $\textbf{T}_2$.y by results from $T_1^{KW-\textit{d}DSE}$.
We call this construction as JOIN-\textit{d}DSE (JOIN queries from \textit{d}-DSE).

\textbf{Leakage analysis on JOIN-\textit{d}DSE.} We find that the leakage of JOIN-\textit{d}DSE on the EDB is caused by the process from $T_1^{KW-\textit{d}DSE}$ and $T_2^{KW-\textit{d}DSE}$. 
{\color{black}Since a join query consists of multi KW-\textit{d}DSE processes (referred to the join equation)}, the leakage function of JOIN-\textit{d}DSE for a sequence of join queries $\mathbf{q}_{join} = \left(\textbf{i},\textbf{j},\textbf{w}\right)$ is:
\begin{equation*}
{
\footnotesize
\begin{aligned}
\bm{\hat{L}}_{JOIN-\textit{d}DSE}\left(\mathbf{q}_{join}\right) = \bm{\hat{L}}_{KW-\textit{d}DSE} \left( (\textbf{i},\textbf{w}){||}_{l}^{|\textbf{v}|}(\textbf{j},v_l) \right),\\
\end{aligned}
}
\end{equation*}
where $v_l$ is the value of $l$ dimension in \textbf{v}, \textbf{j} is the sequence of the second table name (i.e., $\textbf{T}_2$), and ${||}_{l}^{|\textbf{v}|}$ denotes concatenating the sequence of each value performing KW-\textit{d}DSE on $\textbf{T}_\textbf{j}$.

Different from $\bm{\hat{L}}_{KW-\textit{d}DSE}$, $\bm{\hat{L}}_{JOIN-\textit{d}DSE}$ has an apparent search sequence between $\textbf{T}_\textbf{i}$ and $\textbf{T}_\textbf{j}$.
To perform KW-\textit{d}DSE on $\textbf{T}_\textbf{j}$, the keyword $w$ in $\bm{\hat{L}}_{JOIN-\textit{d}DSE}$ can query the same $v$, revealing that they have previously searched the same value in $\textbf{T}_\textbf{i}$.

\textbf{Potential attacks on JOIN-\textit{d}DSE}. Although JOIN-\textit{d}DSE resists volumetric attack inherited from KW-\textit{d}DSE, it additionally leaks a `generalized' access pattern \cite{DBLP:conf/ccs/CashGPR15} and reveals the co-occurrence matrix \cite{ning2021leap,lambregts2022val} in {\color{black}certain situations (e.g., the plain dataset is partially exposed)}. 
To reduce the leakage, one could use the oblivious join \cite{Kra2020eobli,chang2022towards} through indistinguishable join access, which heavily depends on Oblivious RAM (ORAM) and Bitonic-Sorted Network. 

}

\section{The BF-SRE Scheme}\label{BF-SRE}

We describe the \textit{d}-DSE \textcolor{black}{scheme \textsf{BF-SRE} based on the Bloom Filter (BF), Symmetric Revocable Encryption (SRE), and the Forward Private DSE (FP-DSE) scheme for d-KW-\textit{d}DSE}. 
Inspired by \textsf{AURA} \cite{sun2021practical}, the retrieval of \textsf{BF-SRE} is exactly the encrypted distinct values, which brings benefits for subsequent token execution. 
The security analysis depicts that \textsf{BF-SRE} \textcolor{black}{attains FP\&BP} and DwVH security. 
The \textcolor{black}{scheme} also achieves non-interactive deletion and the sub-linear search.
{\color{black}

\textbf{Symmetric Revocable Encryption (SRE)\cite{sun2021practical}}: 
(1) \textsf{SRE}.$KGen\left(\lambda\right)$ outputs the master secret key $msk = (sk,D)$ from the security parameter $\lambda$, where $sk$ and $D$ are the secret key and revoke structure, respectively;
(2) \textsf{SRE}.$Enc\left(msk,s,t\right)$ outputs the ciphertext $ct$ from $msk$, message $s$, and the tag $t$;
(3) \textsf{SRE}.$Comp\left(D,t\right)$ compresses $t$ into the revoke structure $D$ and outputs the new one $D^\prime$; 
(4) \textsf{SRE}.$ckRev\left(sk,D\right)$ takes as input $sk$ and $D$ and outputs $sk_R$; 
(5) \textsf{SRE}.$Dec(sk_R,ct)$ decrypts $ct$ via $sk_R$ if the related tag is not in $D$ for $sk_R$.

}

\begin{figure}[htb]
    \centering
    \vspace{-10pt}
    \centerline{\scalebox{0.5}{{\includegraphics[width=18cm,height=5cm]{./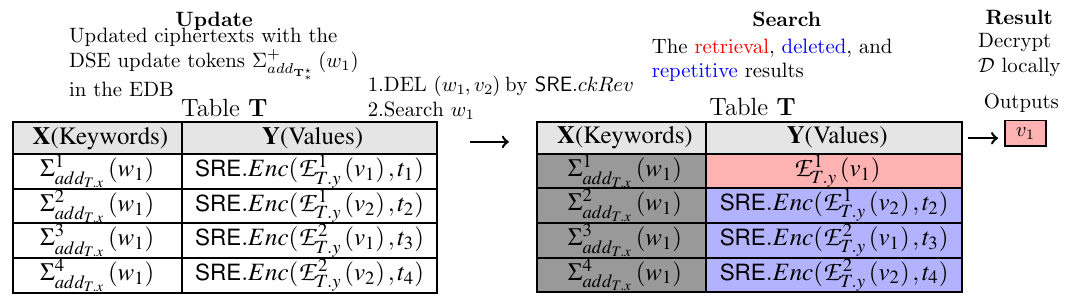}}}}
    \caption{\textsf{BF-SRE}: the scheme overview. \textcolor{black}{We use the blue color to represent the same revocation.}}
    \vspace{-10pt}
    \label{d-DSEHigh}	
\end{figure}

Fig. \ref{d-DSEHigh} shows the \textsf{Update} and \textsf{Search} protocol of \textsf{BF-SRE}.
{\color{black}
\textsf{BF-SRE} uses BF for storing the Distinct State and the FP-DSE $\Sigma_{add}$ to upload modified keyword/value pairs, while the SRE is used to revoke tags associated with deleted or repetitive values in the EDB.
}
In \textsf{Update} protocol, the BF determines the real or dummy tag generation for SRE ciphertexts \textcolor{black}{based on} the first or repetitive inputs, respectively.
The SRE ciphertext is uploaded with real tags if the BF outputs "false." 
Otherwise, \textsf{BF-SRE} uploads the SRE ciphertext with a dummy tag manufactured by revoking the corresponding SRE key.
{In the Search stage, if the SRE ciphertext is unrevoked, it will be decrypted by the constrained sub-key from SRE.} 
Finally, the returned symmetric ciphertexts are decrypted locally.

{\color{black}

The combination of BF and SRE can sufficiently deal with large-scale datasets, even for those with table structures.
According the BF required bit size $b=-n \ln{p}/(\ln{2})^2$, where $n$ and $p$ are the maximum count and the tolerated false-positive probability,
BF can expense a small storage cost to record whether  keyword/value/op pairs appear (e.g., for $n=2^{20}$ and $p=10^{-5}$, $b$ size is just 3MB\cite{sun2021practical}).\label{BFsize}
}

\subsection{BF-SRE Description}

\floatname{algorithm}{Protocol}
\setcounter{algorithm}{0}
\begin{algorithm}[!h]
	\scriptsize
	\label{d-DSE_setup}
	\caption{\textsf{BF-SRE}: Setup, bold lines 2-3 are Distinct State.}
	\underline{\textsf{Setup}$(\lambda)$:}
	\begin{algorithmic}[1]
        \State Initialize the DSE scheme $\left({\rm EDB},\sigma,K_\Sigma\right)\gets\Sigma_{add}.Setup(\lambda)$
        \textbf{
        \State Initialize the Bloom Filter $\mathbf{H},\mathbf{B} \gets \Phi.Gen(\lambda)$
        \State Initialize empty maps \textbf{MSK}, \textbf{UpCnt}, \textbf{C}, \textbf{D}, $\mathbf{\rm EDB_{cache}}$
        }
        \State Randomly generate keys $K_s,K_t,K_c \stackrel{\$}{\gets} \left\{0,1\right\}^\lambda$
        \State Send $\rm EDB$ and the cache $\rm EDB_{cache}$ to the server
	\end{algorithmic}
 \end{algorithm}

\textbf{Setup.}
The \textsf{Setup} protocol generates the encrypted database $\mathrm{EDB}$, its internal state $\sigma$, secret key $K_\sigma$ from a \textcolor{black}{FP-}DSE scheme $\Sigma_{add}$, and constructs secret keys $K_s,K_t,K_c$, the BF hash collection $\mathbf{H}$, the BF bit array $\mathbf{B}$, and four empty maps \textbf{MSK}, \textbf{UpCnt}, \textbf{C}, \textbf{D}.
The $K_c$ is used to encrypt values.
The $K_s$ \textcolor{black}{generates} the cache token $tkn$ to get previous search results from the cache EDB \textcolor{black}{($\rm EDB_{cache}$)}.
The $K_t$ is used to generate tags from inputs and \textbf{UpCnt}.
The \textbf{C} \textcolor{black}{counts} the search on each $w$.
The \textbf{MSK} and \textbf{D} are used to store each master key and revoked key \textcolor{black}{structure associated} keyword $w$, respectively.

 \begin{algorithm}
     \scriptsize
	\label{d-DSE_Update}
	\caption{\textsf{BF-SRE}: Update, bold lines 11-19 are the Distinct Classifier.}
    \underline{\textsf{Update}$\left(K_\Sigma, \textbf{st},op,\left(w,v\right);\mathrm{EDB}\right)$:}
    
	Client:
	\begin{algorithmic}[1]
		\State Read $msk$, $D$, $i$, and $cnt$ from $\textbf{MSK}\left[w\right]$, $\textbf{D}\left[w\right]$, $\textbf{C}\left[w\right]$, and $\textbf{UpCnt}[w]$, respectively.
		\If { $msk$ is not initialized}
			\State Set $msk \gets \textsf{SRE}.KGen(\lambda)$, where $msk = \left(sk,D\right)$
			\State Set \textbf{MSK}[w]$\gets msk$, \textbf{D}[w]$\gets D$
			\State Set $\textbf{UpCnt}\left[w\right] \gets 0$, \textbf{C}[w] $\gets 0$
                \State Set $cnt \gets 1$, $\textbf{UpCnt}[w] \gets cnt$
		\EndIf
		\State Gen. the real/dummy tags $t\gets$ $F\left(K_t,w||v||0\right)$, $l\gets$ $F\left(K_t,w||v||cnt\right)$
            \State {\color{black}Gen. the retrieval $s\gets \mathcal{E}\left(K_c,v||cnt\right)$}
		\If {op == $add$}
                \textbf{
			\If {$\Phi.Check\left(\mathbf{H},\mathbf{B},t\right)$ is false}
				\State Update BF $\Phi.Upd\left(\mathbf{H},\mathbf{B},t\right)$
				\State $ct \gets \textsf{SRE}.Enc\left(msk,s,t\right)$
				\State Insert the EDB $\Sigma_{add}.Update$ $\left(K_\Sigma,add,w||i,\left(ct, t \right);\mathrm{EDB}\right)$
			\Else
				\State $ct \gets \textsf{SRE}.Enc\left(msk,s, l\right)$
				\State Insert the EDB $\Sigma_{add}.Update$ $\left(K_\Sigma,add,w||i,\left(ct, l \right);\mathrm{EDB}\right)$
				\State Puncture the dummy tag $D^\prime \gets \textsf{SRE}.Comp\left(D, l \right)$, $\textbf{D}\left[w\right] \gets D^\prime$
			\EndIf
                }
		\Else
			\State Puncture the real tag $D^\prime \gets \textsf{SRE}.Comp\left(D,t\right)$, $\textbf{D}\left[w\right] \gets D^\prime$
		\EndIf
		\State $\textbf{UpCnt}[w] \gets cnt+1$
	\end{algorithmic}
 \end{algorithm}
 
\textbf{Update.}
This protocol updates the EDB with the new internal state.
At lines 1-7, the client loads the internal state on keyword $w$.
At line 8, the client uses PRF to generate tags $t$ and $l$.
The $t$ is the \textcolor{black}{\textit{real}} tag for the first input of the $w||v$.
\textcolor{black}{The $l$ is the \textit{dummy} tag for the repetitive input.}
At line 9, the client uses the \textcolor{black}{s}ymmetric \textcolor{black}{e}ncryption to generate the retrieval $s$ from the value $v$ and the unique count $cnt$.
At lines 10 and 22, the client chooses to generate ciphertext or revoke the SRE key according to the input \textcolor{black}{operation} ('add' or 'del').
For addition, at line 11, the client executes the Distinct Classifier program to check whether the $w||v$ input appears for the first time.
If so, the client updates the BF, generates SRE ciphertexts $ct$ tagged $t$ as the encrypted ciphertexts, and uploads it by the instance of DSE.
Otherwise,
the client generates dummy SRE ciphertexts, revokes the corresponding revoked key \textcolor{black}{structure} $D$, and uploads the dummy encrypted ciphertexts.
For deletion, at line 21, the client revokes the SRE key \textcolor{black}{structure} corresponding to the tag $t$ of the first input $w||v$. 
At line 23, the client updates the count \textcolor{black}{\textbf{UpCnt}[w]} for keyword $w$.

 \begin{algorithm}
    \scriptsize
    \caption{\textsf{BF-SRE}: Search, bold lines 3-4 are the Distinct Constraint.}
    \underline{\textsf{Search}$\left(K_\Sigma,w,\textbf{st};\mathrm{EDB}\right)$:}
    \begin{algorithmic}[1]
	\Statex Client:
		\State Read $i$, \textcolor{black}{$sk$}, and $D$ from $\textbf{C}\left[w\right]$, $\textbf{MSK}[w]$, and $\textbf{D}[w]$, respectively.
            \State \textbf{if} $i$ is not valid, \textbf{then} $\textcolor{black}{\perp}$.
            \textbf{
		\State ${sk}_R \gets \textsf{SRE}.ckRev\left(sk,D\right)$
		\State Send $\mathbf{\left({sk}_R,D\right)}$ and $\mathbf{tkn = F\left(K_s,w\right)}$ to the server
            }
		\State $msk = (sk,D) \gets \textsf{SRE}.KGen(\lambda)$
		\State Renew $\textbf{C}\left[w\right] \gets i+1$, $\textbf{MSK}[w] \gets msk$, $\textbf{D}[w] \gets D$
	\Statex Client and Server:
		\State Run $\Sigma_{add}.Search\left(K_\Sigma,w||i,\sigma;\mathrm{EDB}\right)$, and the server gets the list $L=$ $\left(\left(ct_1,t_1\right),\left(ct_2,t_2\right)...,\left(ct_l,t_l\right)\right)$
	\Statex Server:
        \State Set the new value list $NV \gets \emptyset$
		\For {$j\in\left[1,l\right]$}
			\State Get the encrypted value $V_j \gets \textsf{SRE}.Dec\left(\left(sk_R,D\right),ct_j,t_j\right)$
			\If {$V_j$ is valid}
				\State Update $NV$ $\gets$ $NV$ $\bigcup V_j$
			\Else
			    \State Delete this ciphertext in EDB 
			\EndIf
		\EndFor
		\State Retrieve cache $OV$ $\gets$ $\rm EDB_{cache}$$\left[tkn\right]$
		\State Return $S \gets NV \bigcup OV$, and update $\rm{EDB}_{cache}$$\left[tkn\right] \gets S$
    \Statex Client:
    \State Extract each symmetric cipher $s$ from $S$, and decrypt the $s$ to get the value $v$ from $v||cnt$ $\gets $ $\mathcal{D}\left(K_c, s\right)$ as the search result
	
	\end{algorithmic}
\end{algorithm}

\textbf{Search.}
The \textsf{Search} protocol finds the ciphertexts by keyword $w$ and returns the distinct values.
At lines 1-4, the client reads \textcolor{black}{internal state}, computes the Distinct Constraint program to get the SRE revoked key $sk_R$ and the cache token $tkn$, and sends them to the server.
At lines 5-6, the client renews the internal state for the next search on $w$.
At line 7, the client and server run the Search protocol of \textcolor{black}{$\Sigma_{add}$}, and then the server gets the search list $L$.
At lines 8-16, the server uses the $sk_R$ to decrypt each $ct$ in list $L$ and remains the symmetric ciphertexts from the unrevoked SRE ciphertext. 
At line 17, the server retrieves previous results from the encrypted database cache $\rm EDB_{cache}$ by $tkn$.
At lines 18-19, the client obtains the search result $S$ from the server and decrypts all symmetric ciphertexts to get the distinct values.

{
\color{black}
\textbf{Remark on BF-SRE}. To simplify the explanation, we illustrate the situation of retrieving distinct values from the specified column $\textbf{T}.y$ based on the searched keyword in $\textbf{T}.x$. 
To retrieve distinct values on other columns in $\textbf{T}$, we do the following: (1) extend the value input as a vector (i.e., v to \textbf{v}); (2) expand the dimension of \textbf{B} (i.e., \textbf{B}[w] to \textbf{B}[w].$\star$) to record more state on other columns; (3) expand the dimension of \textbf{D} (i.e., \textbf{D}[w].$\star$) for their corresponding revoked key structure; and (4)  finally extend $ct$. 
After that, we switch to the right column's revoked key structure in \textbf{D}[w].$\star$ for distinct values on arbitrary columns in $\textbf{T}$. 
More functional extensions for \textsf{BF-SRE} are possible through replacing $\Sigma_{add}$ with other FP-DSE schemes like range search\cite{zuo2018range}.   
}

\subsection{Analysis on BF-SRE}

\textbf{Correctness.} The \textsf{BF-SRE} \textcolor{black}{scheme} uses the \textcolor{black}{FP-}DSE to upload the modified keyword/value pairs in encrypted databases. 
With the polynomial-time algorithm of the PRF $F$, it always outputs a $tkn$ from $K_s$.
The BF's possibility of losing correctness is false positive and acceptable \cite{sun2021practical}. 
Hence, our \textcolor{black}{scheme} can correctly update and search distinct values.

\textbf{Security Analysis.}
\textsf{BF-SRE} securely generates the ciphertexts, search tokens attached to the \textcolor{black}{FP-}DSE, the SRE, the PRF, and the symmetric encryption. 
As for the Update stage, the client modifies the input by \textbf{UpCnt} and the Distinct State, and the ciphertext is uploaded by the FP-DSE $\Sigma_{add}$.
With regard to the Search stage, the client uploads the revoked SRE key to exactly retrieve the values attached to the real tags, which does not show auxiliary information about the volume.

\begin{theorem}[Adaptive Security of \textsf{BF-SRE}]
\label{theorem1}
    {
    \itshape
    Let $F$ (the PRF) with a specific key be modelled as the random oracle $\mathcal{H}_F$, $\mathcal{L}_D=\left(\mathcal{L}_D^{Upt},\mathcal{L}_D^{Srch}\right)$ \textcolor{black}{is defined} as:
			\begin{equation*}
            \footnotesize
			{
			\begin{split}
			\mathcal{L}_D^{Upt}\left(w,v,op\right)&= add \\
			\mathcal{L}_D^{Srch}\left(w\right)&= sp(w), {\rm TimeDTS}(w), {\rm Update}(w),
			\end{split}
			}
			\end{equation*}
			\textsf{BF-SRE} is $\mathcal{L}_D$-adaptively-secure.
    }
\end{theorem}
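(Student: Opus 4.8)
The plan is to exhibit a set of simulators $\mathcal{S} = (\mathcal{S}_{\mathsf{Setup}}, \mathcal{S}_{\mathsf{Update}}, \mathcal{S}_{\mathsf{Search}})$ that, given only the leakage $\mathcal{L}_D$, produce a transcript indistinguishable from the real one, and to bound the distinguishing advantage through a sequence of hybrids from $\mathrm{Real}$ to $\mathrm{Ideal}$. First I would have $\mathcal{S}_{\mathsf{Setup}}$ call the adaptive simulator of the underlying FP-DSE $\Sigma_{add}$ to output a fake $\mathrm{EDB}^\mathcal{S}$, and initialize an empty Bloom filter, random keys, and empty bookkeeping maps. The crucial design choice is that $\mathcal{S}_{\mathsf{Update}}$, which receives only $\mathcal{L}_D^{Upt} = add$, must behave identically regardless of $(w,v,op)$: it samples a uniformly random tag, emits an SRE ciphertext of a fixed dummy message under that tag, and forwards an addition to the $\Sigma_{add}$ simulator. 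Because $F$ is modeled as the random oracle $\mathcal{H}_F$, the simulator can defer committing to the actual tag/ciphertext relationships via lazy sampling, which is exactly what lets one update behavior cover fresh pairs, repetitions, and deletions without distinguishing them; this mechanism realizes forward privacy.

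The search simulator $\mathcal{S}_{\mathsf{Search}}$ is where the leakage is consumed. From $sp(w)$ it identifies which prior queries and updates belong to the searched keyword; from $\mathrm{Update}(w)$ it learns the matching update timestamps; and from $\mathrm{TimeDTS}(w)$ it learns exactly which of those updates correspond to the distinct, non-deleted values that must be returned. Using the random oracle, $\mathcal{S}_{\mathsf{Search}}$ programs the tags of the selected updates so that precisely the $\mathrm{TimeDTS}(w)$ entries decrypt, and synthesizes a revoked key $sk_R$ together with a revocation structure $D$ consistent with the remaining (deleted or repetitive) tags. It then reproduces the cache token $tkn$ as a fresh random value indexed by $sp(w)$ and decrypts the simulated symmetric ciphertexts to placeholders.

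I would then traverse from $\mathrm{Real}$ to $\mathrm{Ideal}$ through hybrids: (i) replace every $F$ evaluation by lazy random-oracle sampling, indistinguishable by the random-oracle modeling of the PRF; (ii) replace the genuine $\Sigma_{add}$ execution with its adaptive simulator, bounded by the forward privacy of $\Sigma_{add}$; (iii) replace each symmetric ciphertext $\mathcal{E}(K_c, v\|cnt)$ by an encryption of a fixed string, bounded by the IND-CPA security of $\mathcal{E}$; and (iv) replace genuine SRE ciphertexts and revocation structures with simulated ones produced only from the leakage, bounded by the security of the SRE scheme. Summing the per-hybrid gaps yields a negligible total advantage.

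The hard part will be hybrid (iv) together with the oracle programming: the simulator committed to random tags at update time without knowing the keyword, yet at search time it must make exactly the distinct, non-deleted values decrypt under $sk_R$ while the deleted and repetitive ones silently fail. Establishing that the revocation structure $D$ hides which tags were punctured, so that deletions are indistinguishable from repetitions and from honest non-membership, is the delicate point, and it rests entirely on the indistinguishability guarantees of the SRE scheme. I would therefore isolate this into a reduction that, given a distinguisher between the last two hybrids, breaks the SRE security game, and argue that the programming stays consistent across all adaptively interleaved update and search queries.
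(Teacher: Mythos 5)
Your hybrid sequence is essentially the paper's own proof: the paper's Appendix~\ref{AdpBS} walks through the same four reductions --- PRF calls replaced by lazily sampled random strings, $\Sigma_{add}$ replaced by its forward-private simulator (with all additions and deletions postponed to the next \textsf{Search}), revoked SRE ciphertexts replaced by encryptions of a constant under IND-sREV-CPA, and symmetric ciphertexts replaced by random strings under IND-CPA --- before assembling the simulator that takes $\min sp(w)$, ${\rm TimeDTS}(w)$, and ${\rm Update}(w)$ as its only inputs. The one stylistic difference is that you program the random oracle at search time to make exactly the ${\rm TimeDTS}(w)$ entries decrypt, whereas the paper avoids programming altogether: because $\Sigma_{add}$'s forward privacy lets the simulator defer every addition until the keyword is first searched, it can build $L_{add}$ and the revocation structure $D$ directly from the leakage at that moment, and its game $G_5$ argues the tags may be replaced by independent random strings since each keyword/value/count triple is used at most once. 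Your route works but carries an extra consistency obligation across interleaved queries that the paper's deferral sidesteps.

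There is one concrete flaw in your simulator. You require $\mathcal{S}_{\mathsf{Update}}$ to ``behave identically regardless of $(w,v,op)$,'' emitting an SRE ciphertext and forwarding an addition to the $\Sigma_{add}$ simulator on every update. But in \textsf{BF-SRE} a deletion is purely client-local: the client only runs $\textsf{SRE}.Comp$ on $\textbf{D}[w]$ and sends \emph{nothing} to the server (lines 21--22 of the \textsf{Update} protocol). A simulator that uploads a ciphertext on every deletion produces an ideal transcript with strictly more server messages than the real one, which is trivially distinguishable. The simulator must know at least whether the update is server-visible --- this is precisely why the appendix version of the theorem states $\mathcal{L}_D^{Upt}(w,v,op)=op$ rather than the constant $add$, and why the paper's $G_2$ invokes ``the obliviousness of deletions to the server'' rather than uniformity over operations. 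Separately, you leave the SRE step as ``the security of the SRE scheme''; the paper is more precise that \emph{selective} IND-sREV-CPA suffices, because the reduction already knows the full set of revoked tags from the recorded update history before it must embed the challenge. Neither point changes the architecture of your argument, but the deletion handling must be corrected for the real and ideal distributions to match.
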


\emph{Proof Sketch.} The proof is conducted in the game hop like \textsf{AURA} \cite{sun2021practical}.
We gradually replace the real cryptographic tools used in the \textsf{BF-SRE} \textcolor{black}{scheme} with the bookkeeping tables or the corresponding simulators and obtain the \textsf{BF-SRE} simulator.
The addition list $L_{add}$ and the deletion set $D$ are used in the $\mathcal{L}_D^{Srch}$ leakage function to retrieve the search result.
We note that the BF determines the workflow of the simulator and does not affect the indistinguishability between each game hop.

\begin{figure*}[ht]
    \centering
    \subfloat[\textcolor{black}{Crime}]{\label{test_dataA}\scalebox{0.5}{\includegraphics[width=12cm,height=4.8cm]{./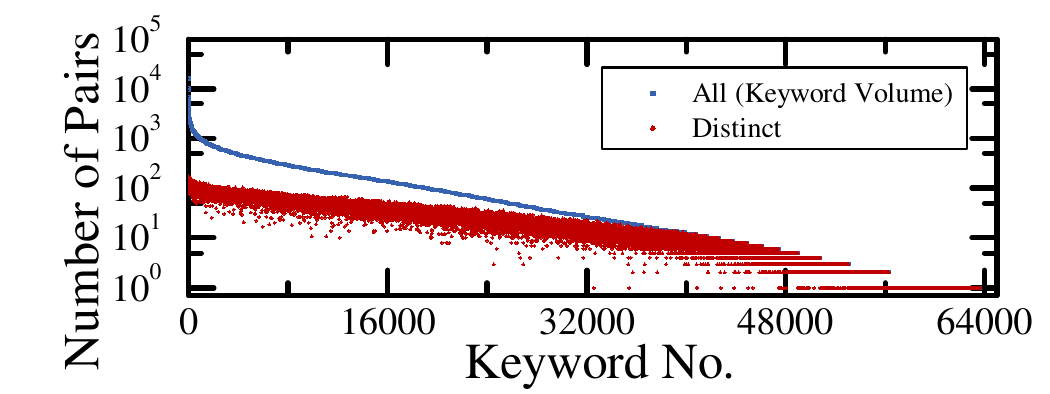}}}
    \subfloat[Wikipedia]{\label{test_dataC}\scalebox{0.5}{\includegraphics[width=12cm,height=4.8cm]{./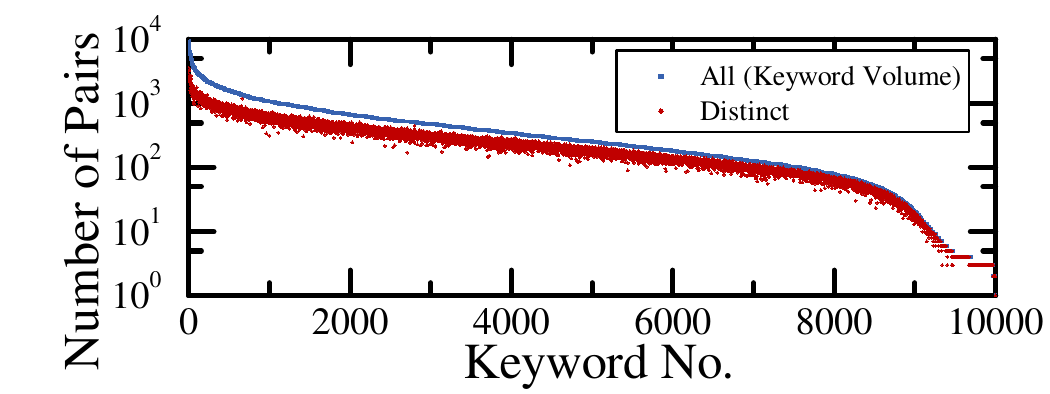}}}
    \subfloat[Enron]{\label{test_dataB}\scalebox{0.5}{\includegraphics[width=12cm,height=4.8cm]{./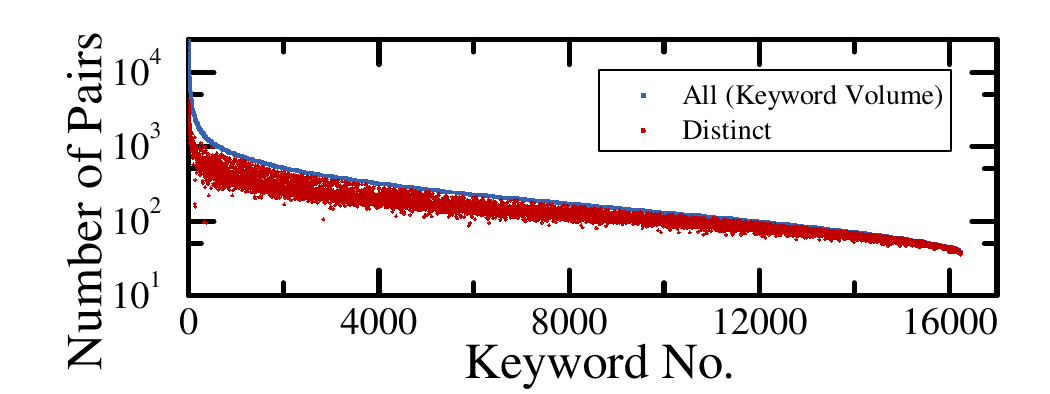}}}
    \vspace{-10pt}
    \caption{The number of keyword/value pairs associated with each keyword on Crime, Wikipedia and Enron dataset. The blue and red nodes represent the number of all and distinct keyword/value pairs, respectively.}
    \vspace{-10pt}
    \label{test_data_scale}
\end{figure*}

\begin{theorem}[DwVH Security of \textsf{BF-SRE}] 
\label{theorem2}
    The leakage function of \textsf{BF-SRE} $\mathcal{L}_D$ = $(\mathcal{L}_D^{Upt}, \mathcal{L}_D^{Srch})$ is Distinct with Volume-Hiding.
\end{theorem}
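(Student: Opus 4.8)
The plan is to use Theorem~\ref{theorem1} as a black box. Since \textsf{BF-SRE} is $\mathcal{L}_D$-adaptively secure, the entire view of the adversary in either $\mathbf{DwVHGame}_{\mathcal{A}}^{\mathcal{L}}\left((n,s),b\right)$ is reproducible by the simulators from $\mathcal{L}_D$ alone (up to a negligible term inherited from the PRF and encryption simulators). Hence it suffices to show that the leakage transcript induced by $S_0$ is distributed identically to that induced by $S_1$. I would therefore reduce the DwVH statement to a combinatorial claim about $\mathcal{L}_D^{Upt}$ and $\mathcal{L}_D^{Srch}$ evaluated on the two adversarially chosen signatures, and invoke semantic security of $\mathcal{E}$ to erase the dependence of the value ciphertexts on the underlying plaintexts.

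First I would dispatch the update leakage. Because $\mathcal{L}_D^{Upt}(w,v,op)=add$ is stateless and independent of $w$, of $v$, and of whether the pair is a first occurrence or a repetition, the $Prepare$ phase emits exactly one constant token $add$ per inserted pair. The signature constraint $\Sigma_{w\in\mathcal{W}}t_b(w)=n$ guarantees that both $S_0$ and $S_1$ yield precisely $n$ such tokens, and the same holds for every adaptive update in the $Queries$ phase, since the operation there is fixed by $\mathcal{A}$ rather than by $b$. Thus the update-only projection of the transcript is identical in the two worlds.

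Next I would treat the search leakage $\mathcal{L}_D^{Srch}(w)=\left(sp(w),\mathrm{TimeDTS}(w),\mathrm{Update}(w)\right)$ component by component. The search pattern $sp(w)$ is a function solely of the query sequence chosen by $\mathcal{A}$ and is therefore $b$-independent. For $\mathrm{TimeDTS}(w)$ I would use that its cardinality equals the number of non-deleted distinct values $l_b(w)$, so the constraint $l_0(w)=l_1(w)$ makes these sizes coincide keyword by keyword, i.e.\ the observable $drlen(w)$ is signature-independent; the encrypted value inside each entry is replaced by a simulated ciphertext, so by semantic security of $\mathcal{E}$ the joint distribution of those ciphertexts does not depend on which plaintexts were stored. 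The remaining task is to couple the insertion schedules under $S_0$ and $S_1$ so that the timestamps appearing in $\mathrm{TimeDTS}(w)$ and $\mathrm{Update}(w)$ are identically distributed.

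The main obstacle is exactly this last point: reconciling $\mathrm{Update}(w)$. Its size reflects the per-keyword update activity $t_b(w)$, and the two signatures are permitted to differ in the individual $t_b(w)$ (only the sum and the distinct counts $l_b(w)$ are pinned down). The crux of the argument is that the dummy-tag mechanism renders a repetitive insertion indistinguishable from a first insertion at update time---precisely the stateless, forward-private shape of $\mathcal{L}_D^{Upt}$---so the repetition budget $t_b(w)-l_b(w)$ never surfaces as a distinguishing signal through the update side. I would then have to argue that, conditioned on the common distinct counts $l_0=l_1$, no adaptive interleaving of searches and updates lets $\mathcal{A}$ separate the distinct response count $drlen(w)=l_b(w)$ from the update count $ulen(w)$ into a signature-revealing statistic; this is where the claimed absence of a linear relationship between $drlen(w)$ and $ulen(w)$ must be made formal. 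Establishing that the differing per-keyword totals do not leak a distinguisher through $\mathrm{Update}(w)$---and, if necessary, pinning down the precise form of admissible signatures for which this holds---is the step I expect to require the most care.
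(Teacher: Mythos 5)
Your route is genuinely different from the paper's. You pass to the simulator via Theorem~\ref{theorem1} and then try to show that the leakage transcript itself is identically distributed under $S_0$ and $S_1$. The paper instead argues by direct reduction: it builds sub-algorithms $\mathcal{B}_D^1,\dots,\mathcal{B}_D^4$ that convert a DwVH distinguisher into attackers against the PRF, the forward privacy of $\Sigma_{add}$, the IND-sREV-CPA security of SRE, and the IND-CPA security of $\mathcal{E}$, chiefly by bookkeeping invocation counts and embedding the respective challenge messages into the $t_b(w)-l_b(w)$ repetitive-value slots, concluding that the DwVH advantage is bounded by each primitive's advantage. Your treatment of $\mathcal{L}_D^{Upt}$ (a constant $add$ token, $n$ times in either world) and of $sp(w)$ and $|\mathrm{TimeDTS}(w)|=l_b(w)=l_0(w)=l_1(w)$ is correct and matches the intuition in the paper's proof sketch.

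The problem is that your proposal stops exactly at the step that carries all the weight, and as you have set things up that step does not go through. $\mathrm{Update}(w)$ is part of the declared search leakage, and after the \emph{Prepare} phase its cardinality is $t_b(w)$ plus the (b-independent) adaptive updates on $w$. The game only constrains $l_0(w)=l_1(w)$ and $\sum_{w}t_b(w)=n$; the per-keyword totals $t_0(w)$ and $t_1(w)$ are otherwise free, so a single search on a keyword where they differ makes the two leakage transcripts \emph{not} identically distributed. Hence the statement you reduce to --- identical distribution of the full $\mathcal{L}_D^{Srch}$ transcript --- cannot be derived from the stated signature constraints alone; you would need either to impose $t_0(w)=t_1(w)$ on admissible signatures or to argue that the Prepare-phase timestamps are never attributed per keyword, which they are the moment $w$ is searched. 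The dummy-tag mechanism you invoke hides the repetition budget inside indistinguishable SRE and symmetric ciphertexts (this is precisely what the paper's $\mathcal{B}_D^3$ and $\mathcal{B}_D^4$ exploit when they embed challenges into the $t_b(w)-l_b(w)$ repetitive slots), but it does not erase the count $|\mathrm{Update}(w)|$ from the leakage function. Until you supply an argument for why $ulen(w)$ does not separate the two signatures --- or restrict the game so that it cannot --- the proof is incomplete at its decisive step.
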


\emph{Proof Sketch.} The proof is analyzed from the leakage in the \textsf{BF-SRE} forward and backward privacy. 
To prove that $\mathcal{L}_D$ is Distinct with Volume-Hiding, we need to construct two signatures with the same size and the upper bound of maximum volume.
Note that $\mathcal{L}_D^{Upt}$ consists the addition operations, and $\mathcal{L}_D^{Srch}$ includes \textcolor{black}{ value's type and timestamps of updates.}
The $\mathcal{L}_D^{Srch}$ is independent of the input keyword/value pairs because the real and dummy tags generation can always maintain identical response \textcolor{black}{on value's type under the two signatures.}
Therefore, the leakage function reveals nothing about the volume. 
\textcolor{black}{Both proofs are given in Appendix \ref{AdpBS}.} 

\textbf{Complexity.} In the Setup stage, \textsf{BF-SRE} spends a constant time initializing  $\Sigma_{add}$, internal state, and secret keys.
In the Update stage, it consumes a constant time to update the internal state, execute $\Sigma_{add}$, test the BF, and generate ciphertexts.
Hence, the \textsf{Update} protocol computation complexity is $\mathcal{O}\left(1\right)$ with a one-way interaction and an $\mathcal{O}\left(1\right)$ length ciphertext.
In the Search stage, \textsf{BF-SRE} spends a constant time performing $\Sigma_{add}$ and spends $\mathcal{O}\left(a_w\right)$ time on search results, where $a_w$ denotes all ciphertexts found by \textcolor{black}{$\Sigma_{add}$}.
Hence, the \textsf{Search} protocol computation complexity is set as $\mathcal{O}\left(n_w\right)$ with a one-way interaction, where $n_w$ denotes the number of returned distinct values.

\subsection{Optimization}\label{Greedy_OPT}

In \textsf{BF-SRE}, the cryptographic tools determine the actual computation and communication costs.
Recall that in the \textsf{Search} protocol, \textsf{BF-SRE} uses the DSE scheme to find all matched ciphertexts.
A possible solution to improve the efficiency of the \textsf{Search} is to leverage {the parallel Forward Private DSE \cite{kamara2013parallel,kim2017forward}}. 

On the other hand, the SRE decryption affects the efficiency of finding the distinct values. 
In practice, SRE requires significant computational cost to determine the derived sub-key corresponding to SRE ciphertext.
Fortunately, we can identify and implement the \textit{Greedy} and \textit{Default} optimization for the SRE decryption: \\
    $\bullet $ \emph{Greedy}: Temporarily store the sub-keys that are not applied in the previous SRE decryption and try using them in subsequent decryption.
In practice, we can use a stack to store unused sub-keys, enabling a systematic approach to ejecting the keys one by one for the decryption of SRE ciphertexts.
After decryption, the used sub-keys are discarded, while the (new) derivatives are added to the stack. \\
 $\bullet $ \label{Pre_OPT}\emph{Default}: {\color{black}Pre-compute the sub-key space size of SRE decryption through the keyword frequency in the target dataset. In practice, we can utilize the estimated keyword frequencies, which are close to the real ones\cite{vo2021shielddb}, to set the appropriate limitation of revoke operation corresponding to each keyword.
 After that, SRE decryption reduces the total computation of sub-keys, thereby achieving faster decryption speed.\\
 }
Note we combine the two optimizations for \textsf{BF-SRE}.

\section{Evaluation}

{\color{black}
We first introduce datasets with different keyword/value distributions and experimental setup.
Then, we evaluate the comparison between \textsf{BF-SRE} and \textsf{MITRA}* \cite{ghareh2018new}, \textsf{AURA}\cite{sun2021practical}, \textsf{SEAL}\cite{demertzis2020seal}, and \textsf{ShieldDB}\cite{vo2021shielddb}.
Our scheme, under equivalent security parameters, exhibits comparable time costs and significant communication improvement.
Our codes are publicly available in \href{https://github.com/jd89j12dsa/ddse}{https://github.com/jd89j12dsa/ddse}.

\begin{figure*}[htb]
    \centering
        \centerline{
            \subfloat[\textcolor{black}{Crime}]{\label{DatawodA}\scalebox{0.6}{{\includegraphics[width=10cm,height=6.5cm]{./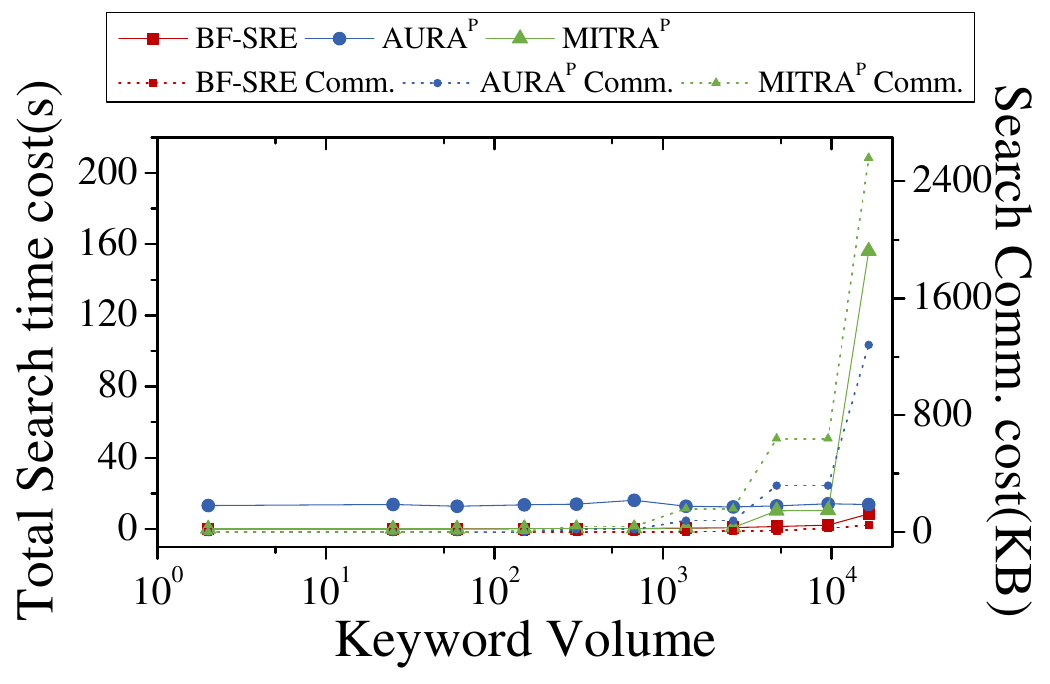}}}}
            \quad
            \subfloat[Wikipedia]{\label{DatawodB}\scalebox{0.6}{{\includegraphics[width=10cm,height=6.5cm]{./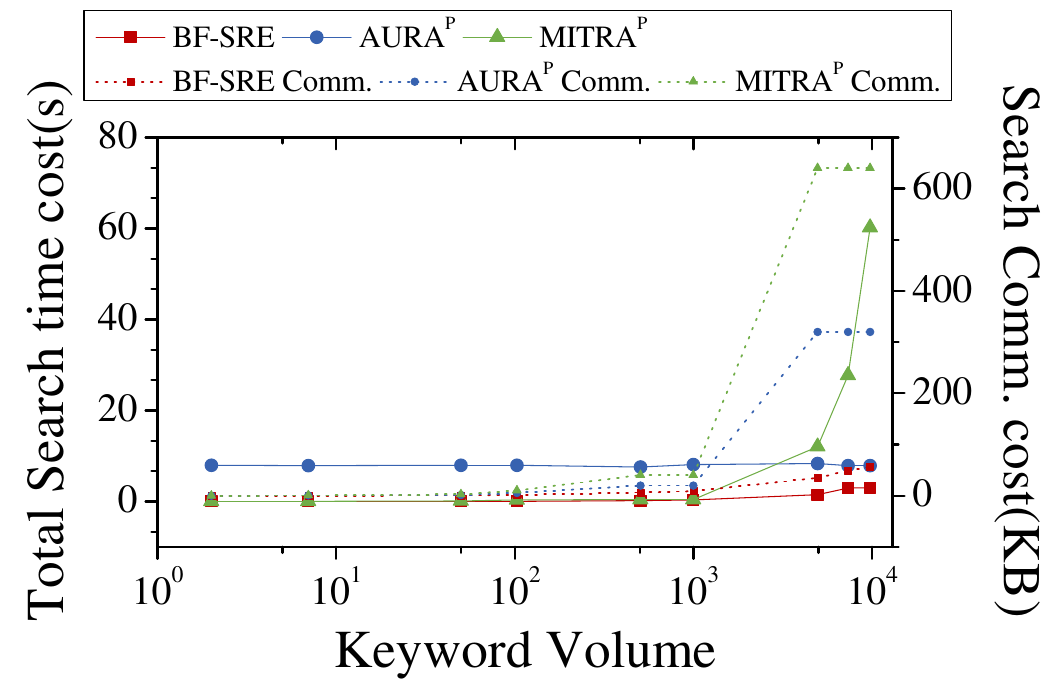}}}}
            \quad
            \subfloat[Enron]{\label{DatawodC}\scalebox{0.6}{{\includegraphics[width=10cm,height=6.5cm]{./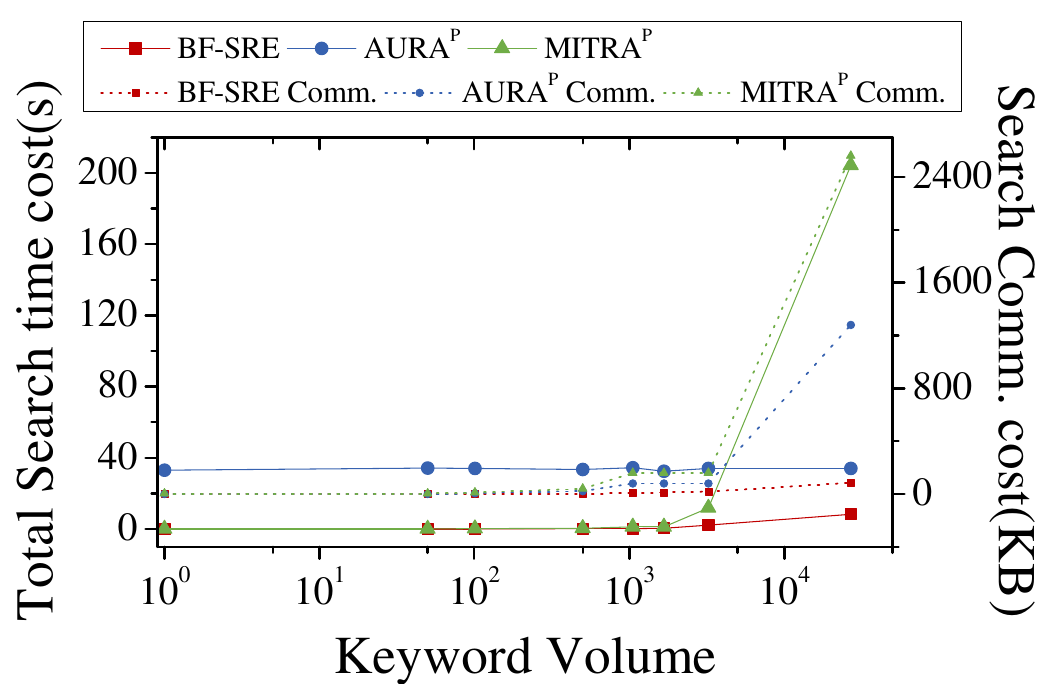}}}}
        }
    \vspace{-5pt}
    \caption{Comparison of the \textbf{total} search \textbf{time and communication} costs of \textsf{BF-SRE}, \textsf{AURA}$^P$, and \textsf{MITRA}$^P$ \textbf{without deletion}.}
    \vspace{-15pt}
    \label{Datawod}
\end{figure*}

\subsection{Dataset}\label{Dataset}

\textbf{Chicago Crimes Reports\footnote{This dataset is available at: \url{https://data.cityofchicago.org/}}}.
The Crime dataset is suitable for keyword/value pairs (in the context of SQL and EDB)\cite{demertzis2020seal,cash2021improved}.
It includes \textcolor{black}{7,989,987} keyword/value pairs extracted from the reports (spanning from 1-1-\textcolor{black}{1999} to \textcolor{black}{2-4-2024}). 
We use street names and the corresponding IUCR codes as keywords and values in the Crime table, respectively. 
For testing join queries, we extract the whole IUCR table to search the PRIMARY\_DESCRIPTION by IUCR codes from the Crime results.

\noindent \textbf{Wikipedia\footnote{The Wikipedia: \url{https://dumps.wikimedia.org/}}}.
This is a large-scale document dataset for DSE evaluation~\cite{chen2023power,kim2017forward}.
Through the Wikipedia extractor and Python NLTK package \cite{kim2017forward}, we collect 4,565,948 pairs \textcolor{black}{in our table structure}, and we aim to find the distinct words by document names. 
}

\noindent \textcolor{black}{
\textbf{Enron}\footnote{Enron email dataset:\url{https://www.cs.cmu.edu/~enron/}.}.
Enron email dataset \cite{DBLP:conf/uss/ZhangKP16,kim2017forward} stores the texts converted from its email system.
We use NLTK package and Porterstemmer \cite{DBLP:journals/program/Porter80} to extract 5,190,199 pairs, aiming to identify distinct words through email names.
}

\textcolor{black}{
Fig. \ref{test_data_scale} illustrates the number of keyword/value pairs associated with each keyword \textcolor{black}{in descending order}.
We count the number of all keyword/value pairs as \textit{Keyword Volume}.
The repetitive-percentage/keyword-space/highest-keyword-volume of the Crime, Wikipedia, and Enron dataset are \textcolor{black}{80.15\%/63,659/16,644}, 42.04\%/10,000/9,738, and 45.53\%/16,241/26,946, respectively.}

\subsection{Experimental Setup}

{\color{black}
Our test platform is $\mathrm{Intel}^\circledR$ Xeon Gold 5120 CPU @ 2.20GHz, 128GB, Dell RERC H730 Adp SCSI Disk Device, Windows Server 2016 Standard.
In our Python (v3.60) code, we utilize the PYmysql package to perform at most 100 parallel queries on MySQL (Ver 14.14 Distrib 5.7.33).

We first compare the performance of \textsf{BF-SRE} with \textsf{MITRA}$^P$ and \textsf{AURA}$^P$. 
Note \textsf{BF-SRE} uses \textsf{DIANA} \cite{bost2017forward} as the Forward Private DSE implementation.
Our implementation makes Python modules from the SRE\cite{sun2021practical} source code with the \textit{greedy} and \textit{default} optimization mentioned in Sec. \ref{Greedy_OPT}.
We also implement \textsf{MITRA}$^P$ and \textsf{AURA}$^P$ from \textsf{MITRA}*\cite{ghareh2018new} and \textsf{AURA}\cite{sun2021practical}, along with the \textsf{SEAL}'s adjustable padding strategy\cite{demertzis2020seal} for fair comparisons.
Specifically, when initializing the EDB, we ensure that \textsf{MITRA}$^P$ and \textsf{AURA}$^P$ pad each keyword with dummy pairs until the volume of each keyword reaches the exponentiation of $x=4$, respectively. 
For correctness, we prepare a translation map between keyword/value pairs and keyword/id pairs
to restore search results mentioned in Sec. \ref{trival_sol}.

Then, we test the time and communication costs of \textsf{BF-SRE}, \textsf{SEAL}, and \textsf{ShieldDB} when processing queries in SQL syntax.
\textsf{SEAL} is implemented through the tree-based ORAM \cite{ghareh2018new}\footnote{The tree-based ORAM:  \url{https://github.com/jgharehchamani/SSE}} 
with the factor $a=20,x=4$. 
For \textsf{ShieldDB} in NL mode ($\alpha = 256$), we transform keyword/value datasets to the related file dataset and further compose keyword clusters uploading the stream data. 
We say that they can perform the queries via our basic Query Planner established from the PYmysql package and the transformation in Sec. \ref{trival_sol}. 

}

\subsection{Compare with DSE}

{
\color{black} 
For \textsf{BF-SRE}, \textsf{AURA}$^P$, and \textsf{MITRA}$^P$, we document total/client search time, communication, and highest-volume keyword search time costs in deletion.
Based on the equation in Sec. \ref{BFsize}, the BF's size reaches an appropriate setting to delete on the highest-volume keyword, preserving the correctness of \textsf{BF-SRE} and \textsf{AURA}$^P$.
}

\subsubsection{Search performance}

{
\color{black}
\textbf{Total search time and communication costs.} Fig. \ref{Datawod} illustrates that \textsf{BF-SRE} significantly outperforms \textsf{AURA}$^P$ and \textsf{MITRA}$^P$ in time and communication costs when the Keyword Volume > $10^3$. 
Particularly, with regards to time costs, in Fig. \ref{DatawodC}, \textsf{BF-SRE} exhibits a slower increase in cost, extending up to 8.25 seconds. 
This is more efficient, with a 4.12x and 29.27x reduction in time compared to \textsf{AURA}$^P$ and \textsf{MITRA}$^P$. 
Meanwhile, the communication costs for \textsf{BF-SRE} are 83.82 KB with 15.27x and 30.54x advantage over \textsf{AURA}$^P$ and \textsf{MITRA}$^P$.
Both compared DSE schemes display significant ladder like mutations (of \textsf{SEAL}) in communication cost.
All trends of Fig. \ref{DatawodA},\ref{DatawodB},\ref{DatawodC} have similar increments.
This result shows that the padding strategy requires \textsf{AURA}$^P$ to initialize the large size BF to log the deletion of dummy pairs, while \textsf{MITRA}$^P$ has to perform more "clean-up" operations (i.e., removing deleted pairs and re-encrypting the remaining pairs).

}

\begin{figure}[!h]
    \vspace{-10pt}
	\centering
        \centerline{
            \subfloat[\textcolor{black}{Crime}]{\label{DataCwodA}\scalebox{0.38}{{\includegraphics[width=8cm,height=7cm]{./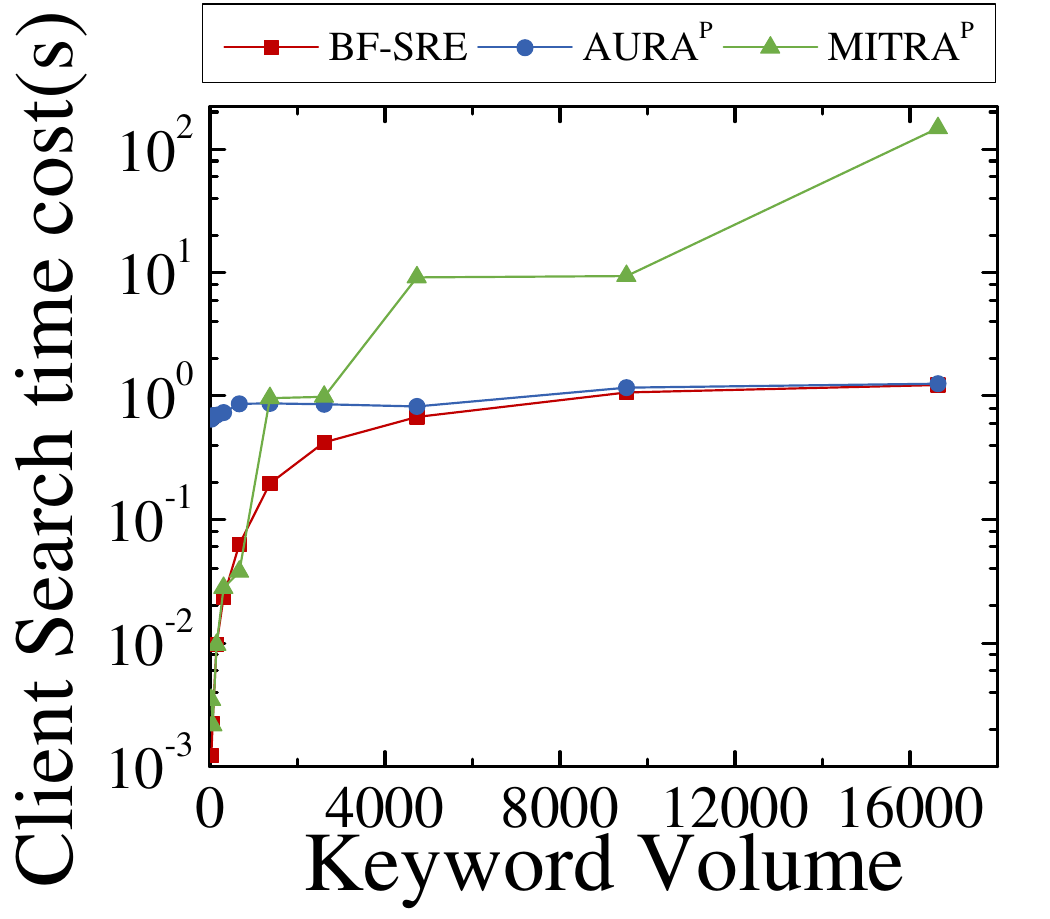}}}}
            \subfloat[Wikipedia]{\label{DataCwodB}\scalebox{0.38}{{\includegraphics[width=8cm,height=7cm]{./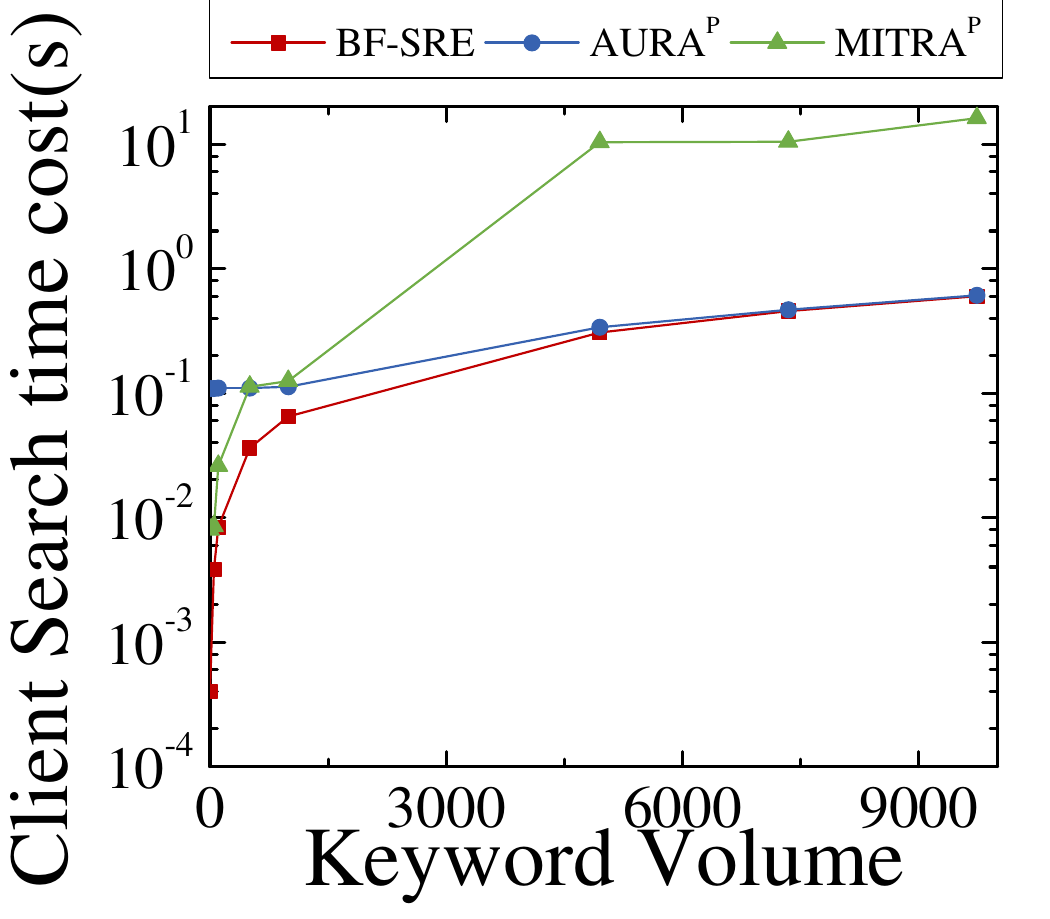}}}}
            \subfloat[Enron]{\label{DataCwodC}\scalebox{0.38}{{\includegraphics[width=8cm,height=7cm]{./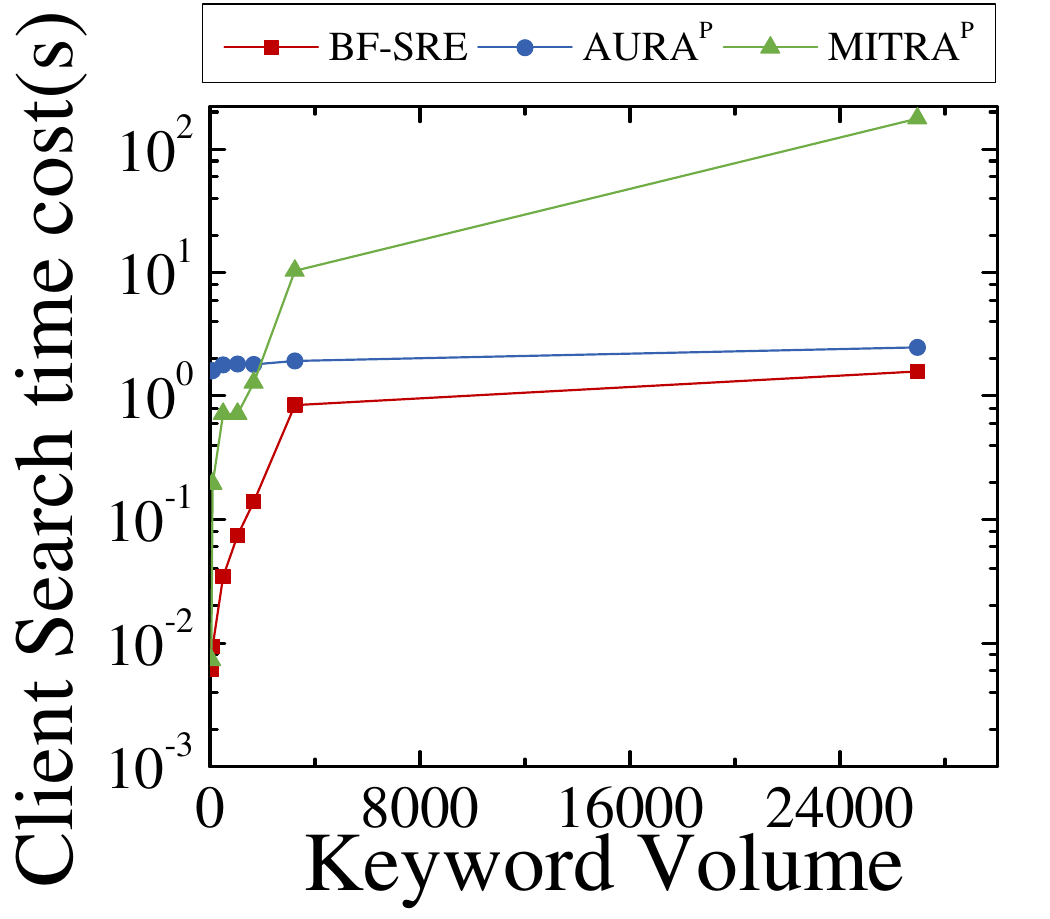}}}}
        }
    \vspace{-5pt}
    \caption{Comparison of search \textbf{time} costs of \textsf{BF-SRE}, \textsf{AURA}$^P$, and \textsf{MITRA}$^P$ on \textbf{client without deletion}.}
    \vspace{-10pt}
    \label{DataCwod}
\end{figure}

{\color{black}
{\color{black}
\textbf{Search time cost on client.} Fig. \ref{DataCwod} shows that \textsf{BF-SRE} outperforms \textsf{MITRA}$^P$ and presents a small advantage over \textsf{AURA}$^P$. 
For example, in Fig. \ref{DataCwodA}, the cost gap between \textsf{BF-SRE} and \textsf{AURA}$^P$ is roughly \textcolor{black}{0.64-0.03}s.}  
We state that \textsf{BF-SRE} does not process dummy data while searching, thereby improving the client's performance.  
}

\begin{figure}[htb]
    \vspace{-10pt}
    \centerline{
            \subfloat[\textcolor{black}{Crime}]{\label{DatawA}\scalebox{0.38}{{\includegraphics[width=8cm,height=7cm]{./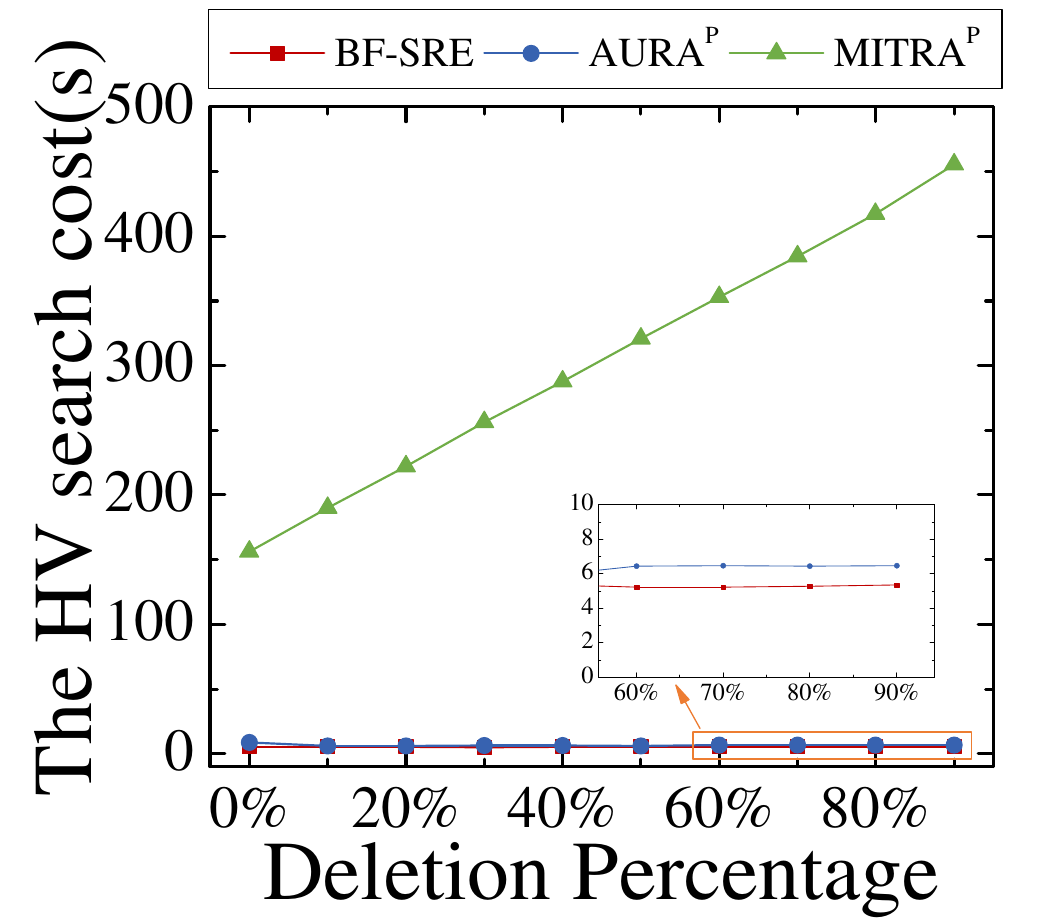}}}}
            \subfloat[\textcolor{black}{Wikipedia}]{\label{DatawB}\scalebox{0.38}{{\includegraphics[width=8cm,height=7cm]{./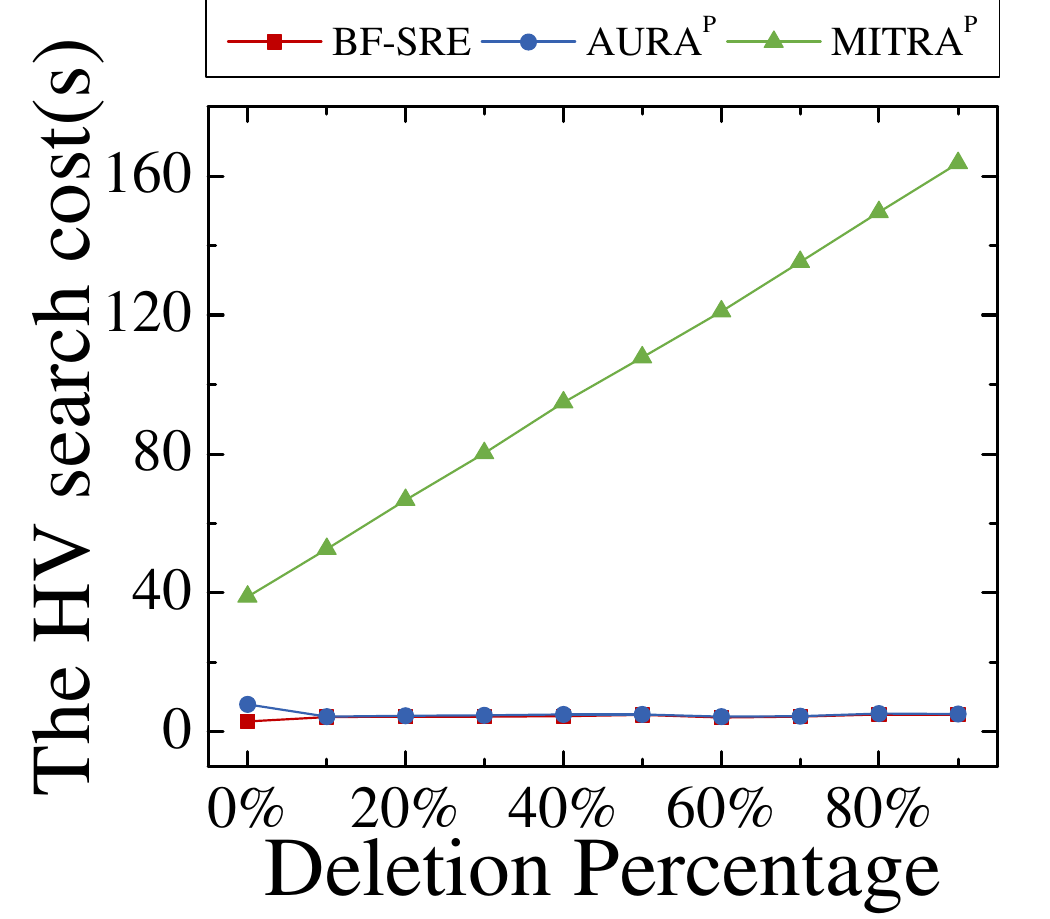}}}}
            \subfloat[\textcolor{black}{Enron}]{\label{DatawC}\scalebox{0.38}{{\includegraphics[width=8cm,height=7cm]{./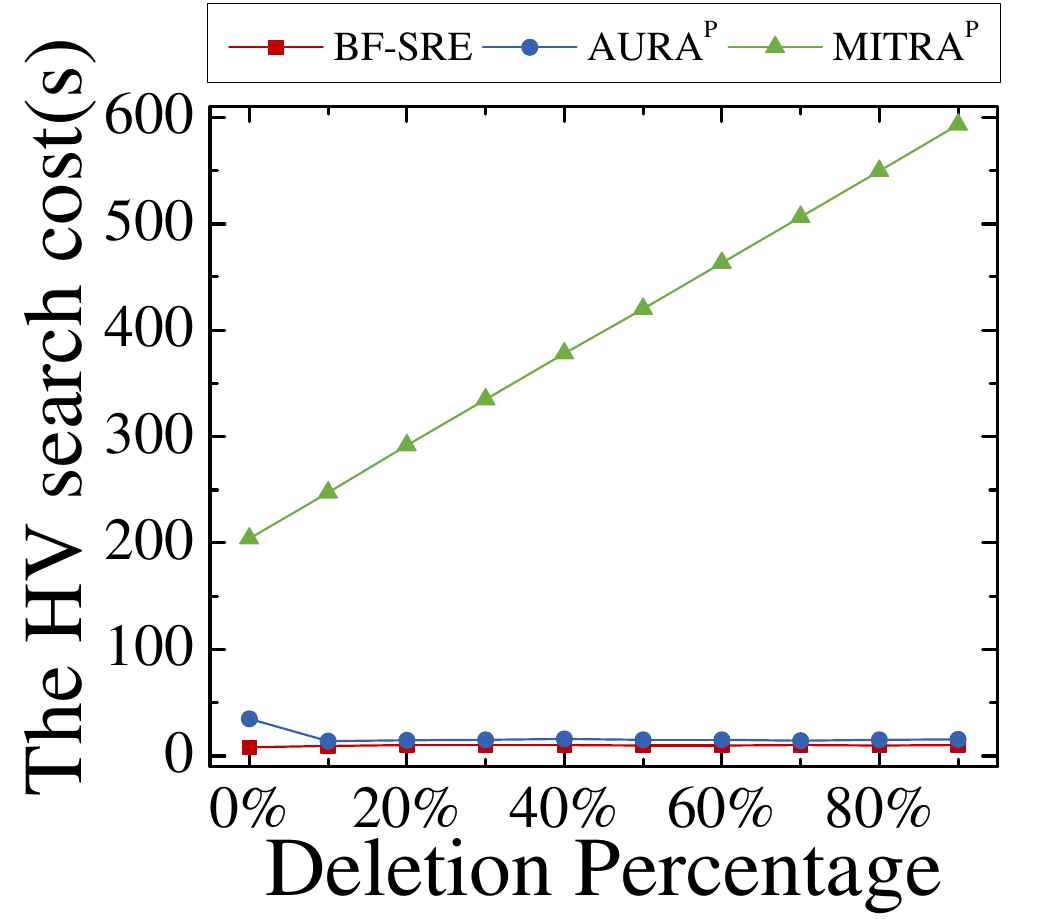}}}}
    }
    \vspace{-5pt}
    \caption{Comparison of \textbf{search time} costs of \textsf{BF-SRE}, \textsf{AURA}$^P$, and \textsf{MITRA}$^P$ on the highest-volume (HV) keyword \textbf{with deletion} percentage (namely, delete 0-\textcolor{black}{90}\% pairs). }
    \label{Dataw}
    \vspace{-10pt}
\end{figure}

\textbf{Highest-Volume (HV) Search time costs.} Fig. \ref{Dataw} provides the fact that \textsf{BF-SRE} is well-perform across all datasets.
We notice that the costs associated with \textsf{BF-SRE} exhibit a slight increase in correlation with the rise in the deletion percentage. 
In Fig. \ref{DatawB},\ref{DatawC}, the costs slowly increase from \textcolor{black}{2.96s} and \textcolor{black}{8.14s} to \textcolor{black}{4.97s} and \textcolor{black}{10.02s}, respectively.
The costs of \textsf{MITRA}$^P$ climb up abruptly as its deletion requires traversing and re-encrypting all dummy data (under padding).  
\textsf{AURA}$^P$ presents a declining cost trend on the costs but still cannot outperform \textsf{BF-SRE}. 
In Fig. \ref{DatawC}, the cost of \textsf{AURA}$^P$ decreases from \textcolor{black}{34.31s} to \textcolor{black}{15.33s} {\color{black} since the deletion operations indirectly reduce its cost on decryption key generation.}

\subsubsection{Update performance}

{\color{black}
We compare update time costs for \textsf{BF-SRE}, \textsf{AURA}$^P$, and \textsf{MITRA}$^P$. 
Due to \textsf{AURA}$^P$ and \textsf{MITRA}$^P$ inheriting the batch update of padding strategies, we record the costs based on Keyword Volume.
For \textsf{BF-SRE}, we record the average addition cost of all pairs associated with the keyword at a specific Keyword Volume.
\textcolor{black}{We also record the client storage cost when adding keyword/value pairs in the sequence of Keyword No.
In other words, for the Crime dataset, we add the corresponding pairs in the order of Keyword No.1 to No.63659 shown in Fig. \ref{test_dataA}.}
}

\begin{figure}[htb]
        \vspace{-10pt}
	\centering
        \subfloat[\textcolor{black}{Crime}]{\label{EncC}\scalebox{0.38}{{\includegraphics[width=8cm,height=7cm]{./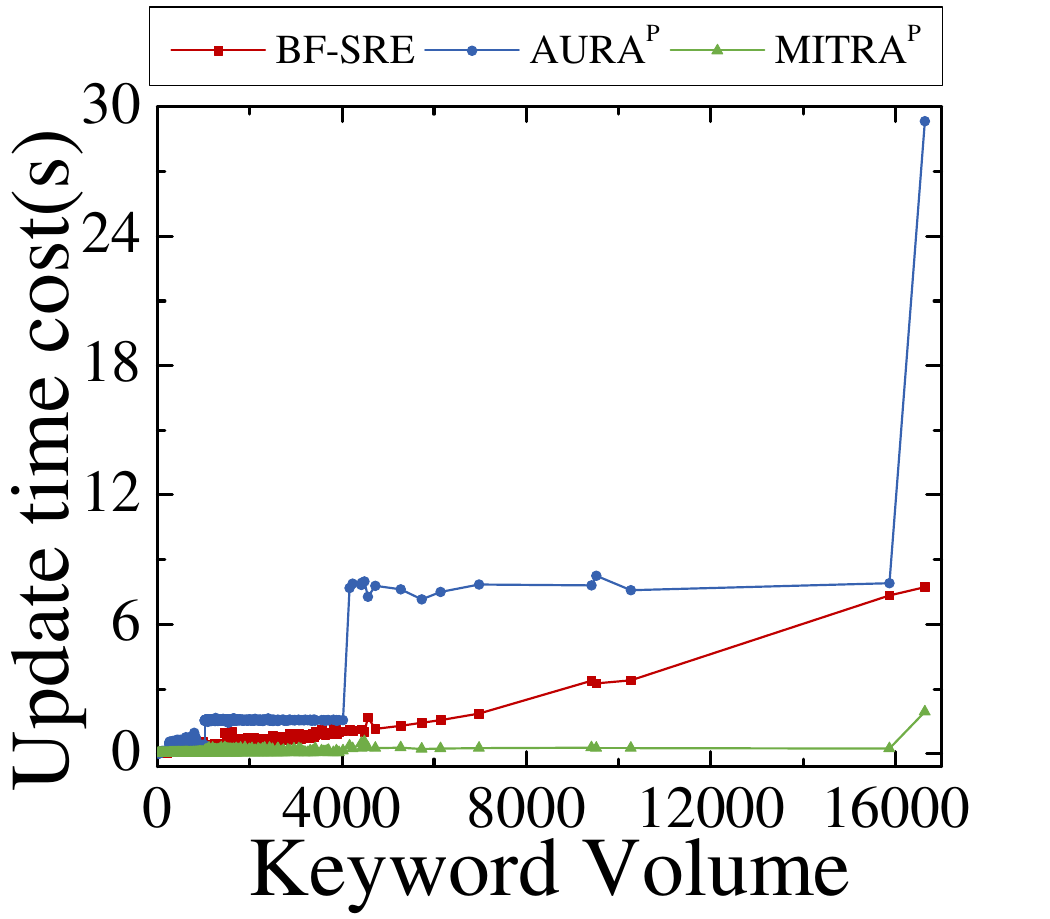}}}}
        \subfloat[Wikipedia]{\label{EncW}\scalebox{0.38}{{\includegraphics[width=8cm,height=7cm]{./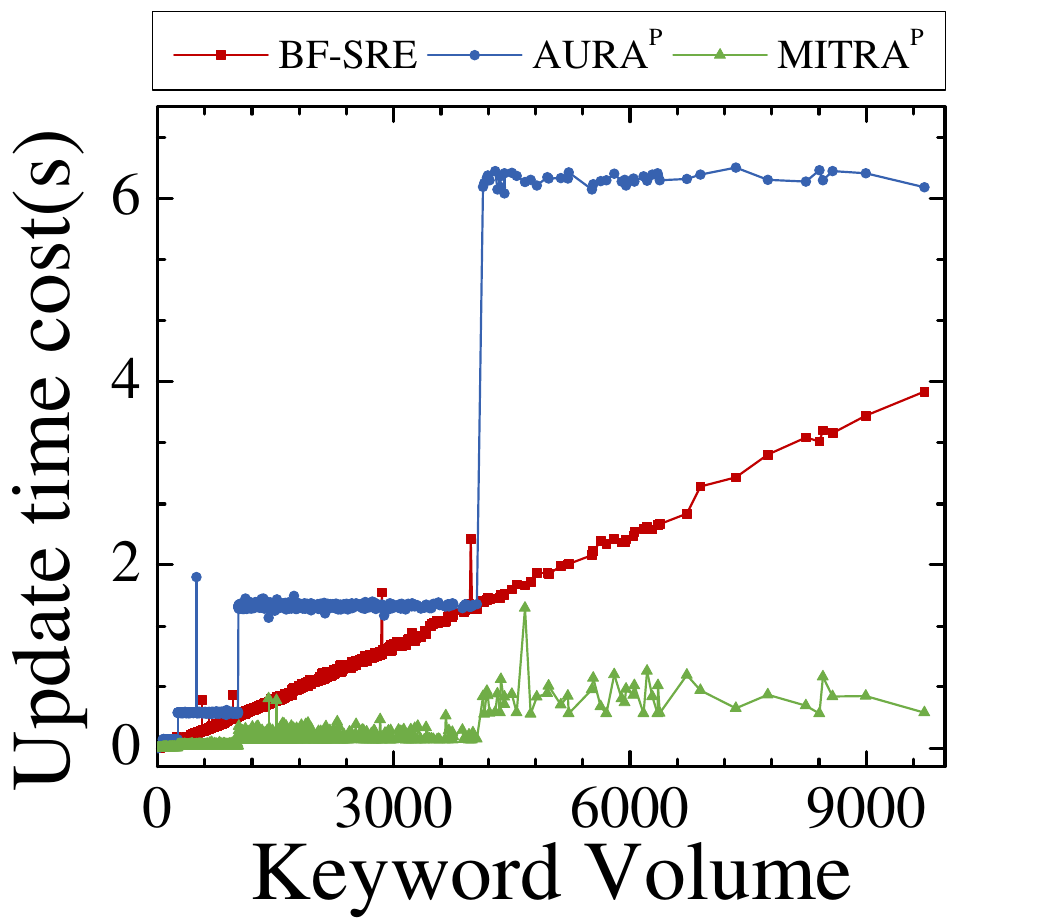}}}}
        \subfloat[Enron]{\label{EncE}\scalebox{0.38}{{\includegraphics[width=8cm,height=7cm]{./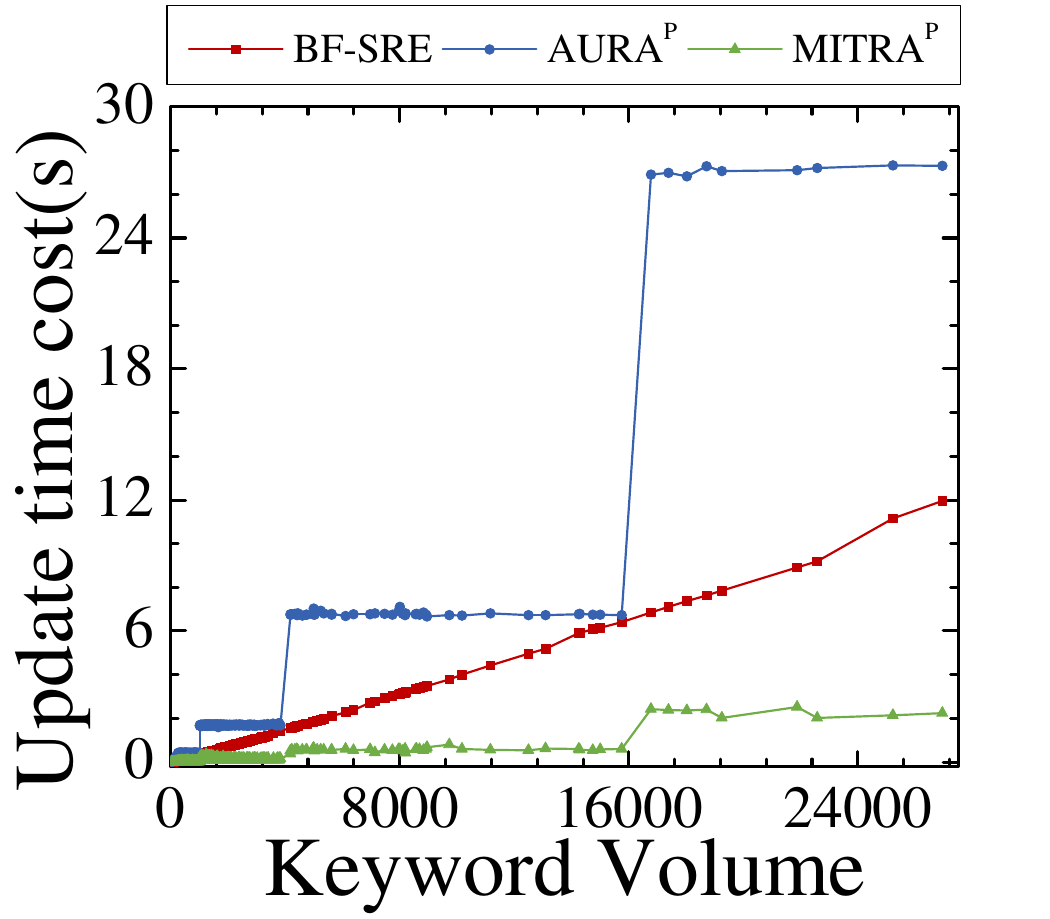}}}}
    \vspace{-5pt}
    \caption{Comparison of \textbf{update time} costs of \textsf{BF-SRE}, \textsf{AURA}$^P$, and  \textsf{MITRA}$^P$.}
    \label{Encryption_time}
    \vspace{-10pt}
\end{figure}

{\color{black}
\textbf{Update time costs.} Fig. \ref{Encryption_time} depicts that the costs of \textsf{BF-SRE} are at the same magnitude in all datasets.
\textsf{BF-SRE} yields linear costs that grow in proportion to the Keyword Volume, while the \textsf{AURA}$^P$ and \textsf{MITRA}$^P$ both experience step-wise increases. 
In detail, for Fig. \ref{EncC}, the costs of \textsf{BF-SRE}, \textsf{AURA}$^P$, and \textsf{MITRA}$^P$ climb from \textcolor{black}{1.52$\cdot 10^{-4}$s, 2.72$\cdot 10^{-4}$s, and 8.8$\cdot 10^{-5}$s to 7.62s, 29.24s, and 1.94s}, respectively.
Both \textsf{AURA}$^P$ and \textsf{MITRA}$^P$ have the mutation at the Keyword Volume of \textcolor{black}{4096 and 15868}, {\color{black}which is due to the use of \textsf{SEAL}'s adjustable padding}.

\begin{figure}[htb]
        \vspace{-10pt}
	\centering
        \subfloat[Crime]{\label{Cstorage_Upt_C}\scalebox{0.38}{{\includegraphics[width=8cm,height=7cm]{./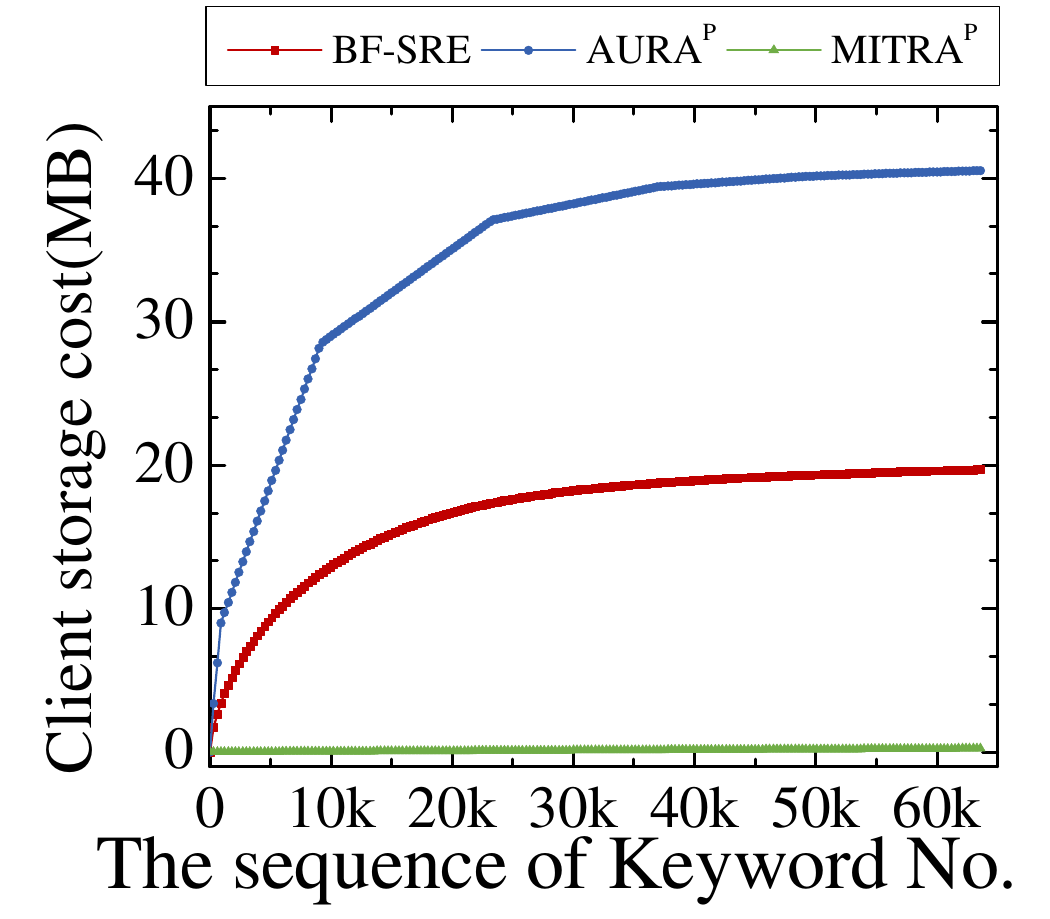}}}}
        \subfloat[Wikipedia]{\label{Cstorage_Upt_W}\scalebox{0.38}{{\includegraphics[width=8cm,height=7cm]{./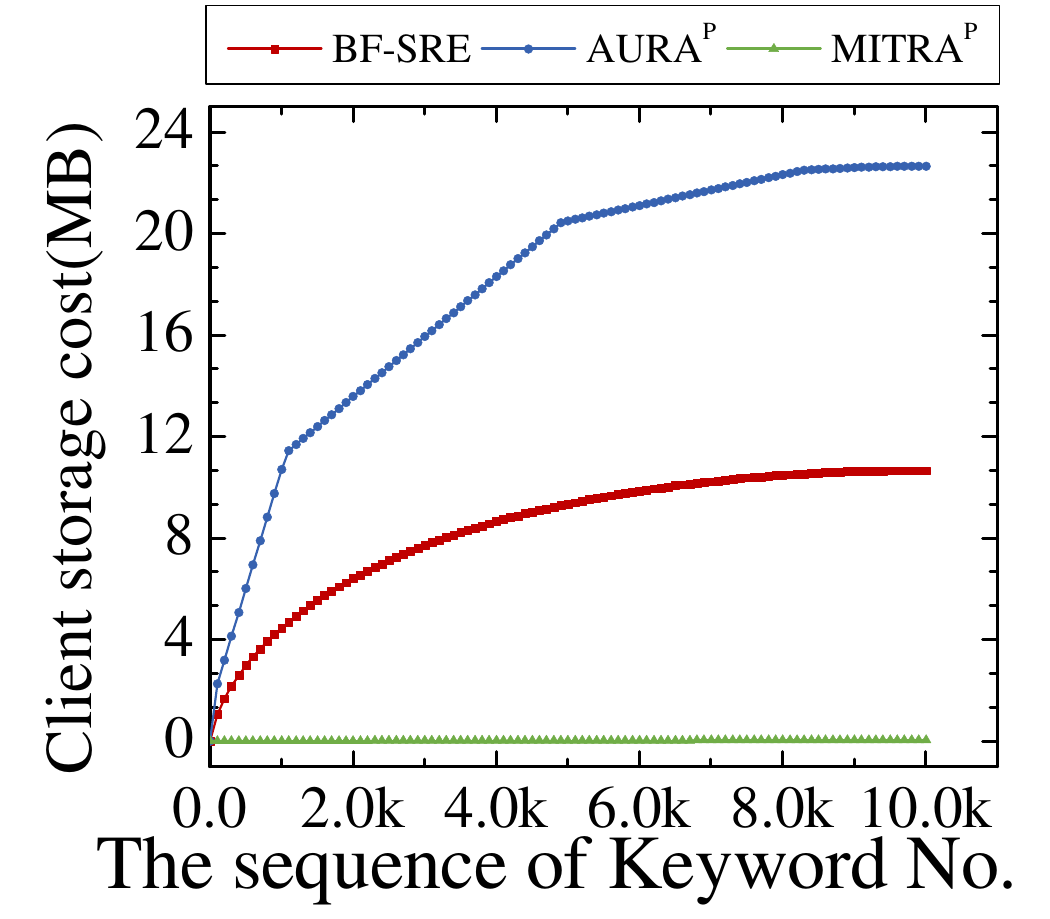}}}}
        \subfloat[Enron]{\label{Cstorage_Upt_E}\scalebox{0.38}{{\includegraphics[width=8cm,height=7cm]{./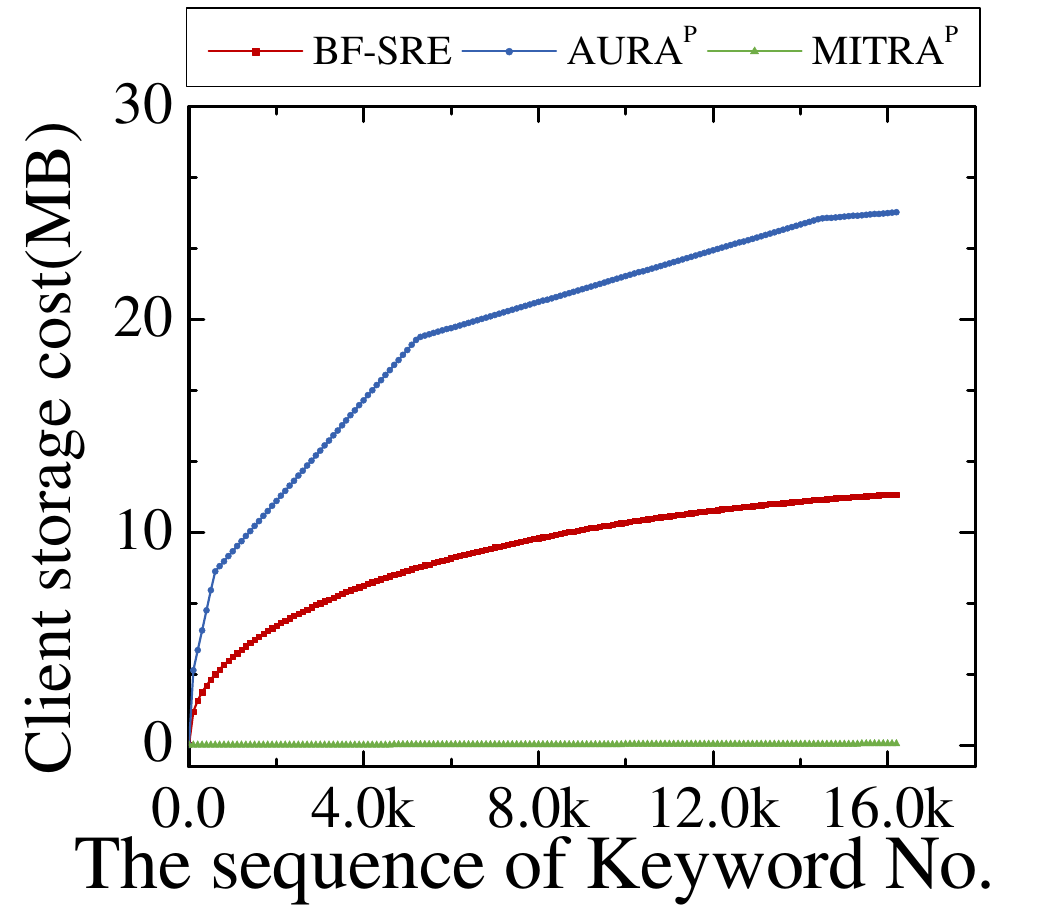}}}}
    \vspace{-5pt}
    \caption{ \textcolor{black}{Comparison of \textbf{client storage} costs of \textsf{BF-SRE}, \textsf{AURA}$^P$, and  \textsf{MITRA}$^P$ for update.}}
    \label{Cstorage_Upt}
        \vspace{-10pt}
\end{figure}

{\color{black}
\textbf{Client storage costs.} Fig. \ref{Cstorage_Upt} illustrates that the costs for \textsf{BF-SRE} fall between those of \textsf{AURA}$^P$ and \textsf{MITRA}$^P$ in all datasets.
Due to the $Default$ optimization, the \textsf{BF-SRE} costs eventually trend toward \textsf{MITRA}$^P$, roughly 1.47x less than \textsf{AURA}$^P$ in Fig. \ref{Cstorage_Upt_C}, \ref{Cstorage_Upt_W}, \ref{Cstorage_Upt_E}, respectively.
In Fig. \ref{Cstorage_Upt_C}, \textsf{AURA}$^P$ exhibits noticeable inflections when the sequence of Keyword No. reaches 899, 9192, and 23351 (corresponding to Keyword Volumes of 1023, 255 and 63) due to the \textsf{SEAL}'s adjustable padding.
}

\subsection{Compare with Padding Strategy}

\begin{figure*}[htb]
    \centering
        \centerline{
            \subfloat[\textcolor{black}{Crime}]{\label{SQLTCrime}\scalebox{0.54}{{\includegraphics[width=10cm,height=6cm]{./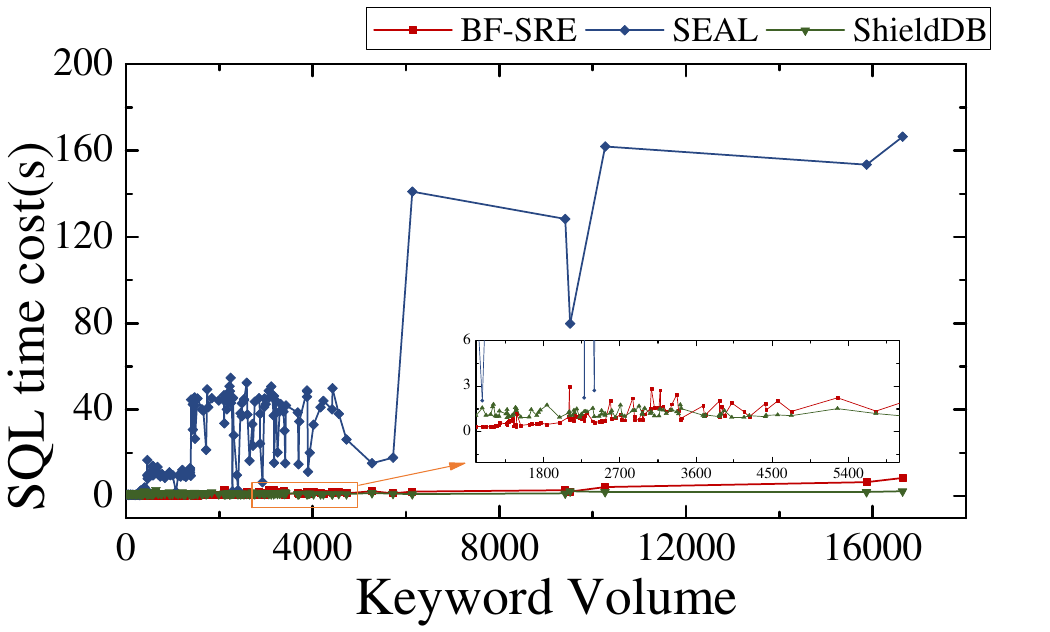}}}}
            \quad
            \subfloat[Wikipedia]{\label{SQLTWiki}\scalebox{0.54}{{\includegraphics[width=10cm,height=6cm]{./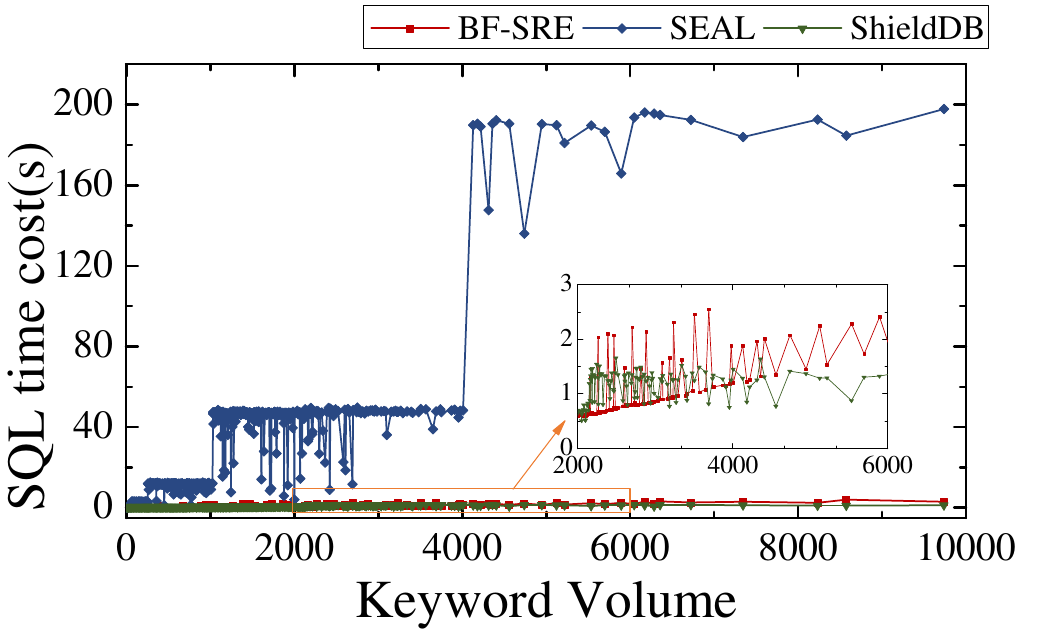}}}}
            \quad
            \subfloat[Enron]{\label{SQLTEnron}\scalebox{0.54}{{\includegraphics[width=10cm,height=6cm]{./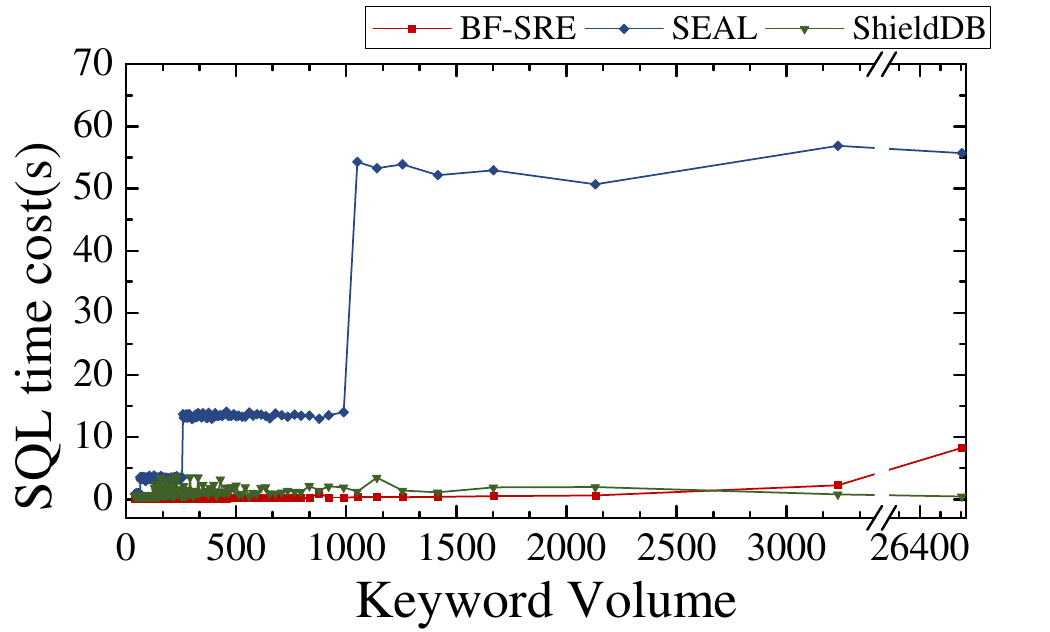}}}}
        }
    \vspace{-10pt}
    \caption{Comparison of the SQL \textbf{time} cost for keyword queries on \textsf{BF-SRE}, \textsf{SEAL}, and \textsf{ShieldDB}.}
    \vspace{-10pt}
    \label{SQLT}
\end{figure*}

\begin{figure*}[htb]
    \centering
    \vspace{-10pt}
        \centerline{
            \subfloat[\textcolor{black}{Crime}]{\label{SQLCCrime}\scalebox{0.54}{{\includegraphics[width=10cm,height=6cm]{./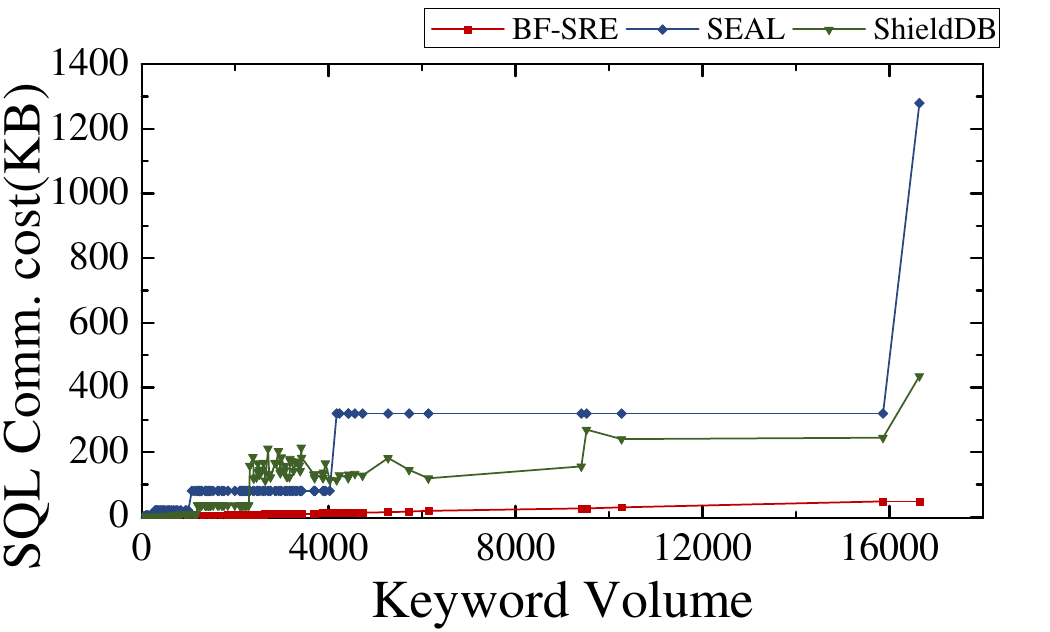}}}}
            \quad
            \subfloat[Wikipedia]{\label{SQLCWiki}\scalebox{0.54}{{\includegraphics[width=10cm,height=6cm]{./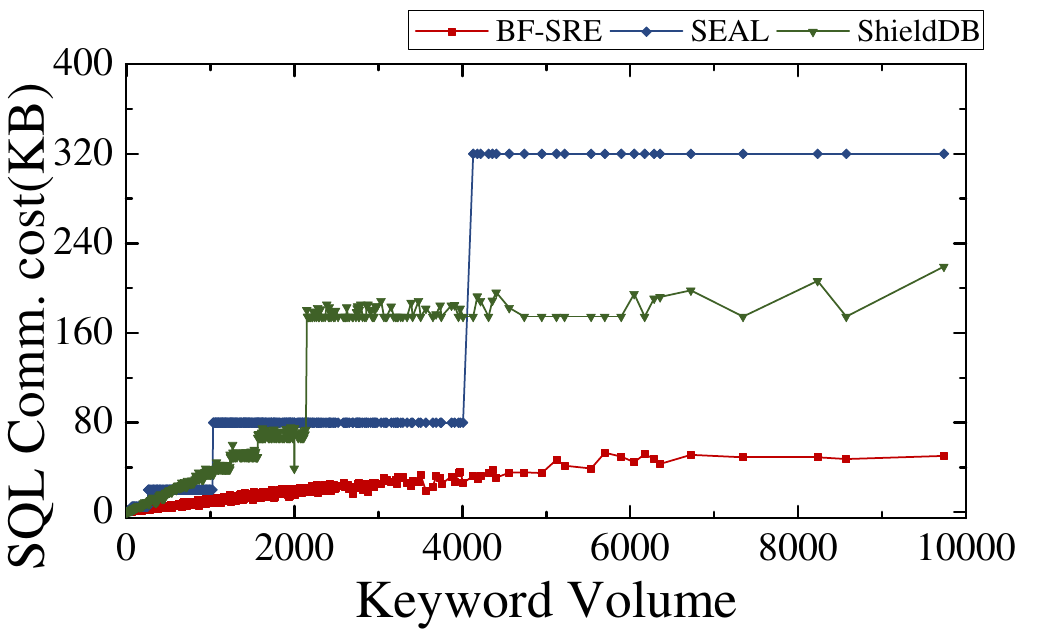}}}}
            \quad
            \subfloat[Enron]{\label{SQLCEnron}\scalebox{0.54}{{\includegraphics[width=10cm,height=6cm]{./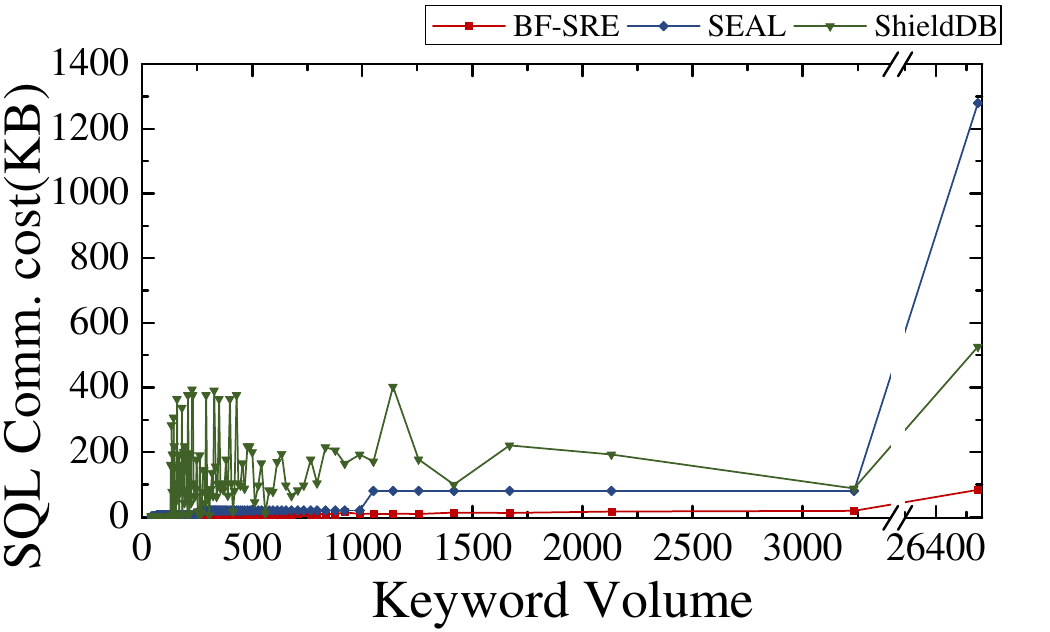}}}}
        }
    \vspace{-10pt}
    \caption{Comparison of the SQL \textbf{communication} cost for keyword queries on \textsf{BF-SRE}, \textsf{SEAL}, and \textsf{ShieldDB}.}
    \vspace{-10pt}
    \label{SQLC}
\end{figure*}

We test the time and communication costs for keyword and join queries in SQL syntax. 
As mentioned in Sec. \ref{AppinEDB}, we first set up the constructions of \textsf{BF-SRE}, \textsf{SEAL}, and \textsf{ShieldDB}, respectively.
Then, we use the Query Planner to extract original queries for each construction to retrieve values from the EDB.

For join queries, we use the Crime dataset to test time and communication costs about its Keyword Volume.
We use Query Planner to request retrieving the relevant IUCR code through keyword (street name, see Sec. \ref{Dataset}) in the Crime table, and then leverage each IUCR code result to find the PRIMARY\_DESCRIPTION in the IUCR table. 
Finally, we let the planner restore the query combined from the two-stage results and log the related costs. 

{\color{black}
\textbf{Keyword query time cost.} Fig. \ref{SQLT}  presents the costs of \textsf{BF-SRE}, \textsf{SEAL}, and \textsf{ShieldDB}.  
From the query results covering the entire keyword space, we see that  \textsf{BF-SRE} provides advantage, nearly 10.89-180x, for \textsf{SEAL} when the Keyword Volume $>10^3$.
Its cost slightly exceeds that of \textsf{ShieldDB} when the Keyword Volume is greater than \textcolor{black}{$4728$}, $4219$, and $2712$ in the Crime, Wikipedia, and Enron dataset, respectively. 
From our analyze, the numbers of keywords below these thresholds are accounted for approximately \textcolor{black}{95.0\%}, 96.5\%, and 98.5\%, revealing that the \textsf{BF-SRE}'s cost is competitive among large keyword spaces.
Note that some `spikes' come from batch read operations\cite{ghareh2018new} and cache retrieval\cite{vo2021shielddb} in \textsf{SEAL} and \textsf{ShieldDB}, respectively.

\textbf{Keyword query communication costs.} In Fig. \ref{SQLC}, \textsf{BF-SRE} consumes much less communication costs than the padding strategies. 
Specifically, it reduces the highest-volume keyword cost up to \textcolor{black}{53.14}x, 6.36x, and 15.27x on the Crime, Wikipedia, and Enron dataset, respectively.
We recall that \textsf{BF-SRE} mainly concentrates on the retrieval of the values, while the padding has to deal with the dummy data.  
}

\begin{figure}[htb]
    \centering
    \vspace{-20pt}
        \centerline{
            \subfloat[\textcolor{black}{Time costs}]{\label{JOINT}\scalebox{0.5}{{\includegraphics[width=8cm,height=7cm]{./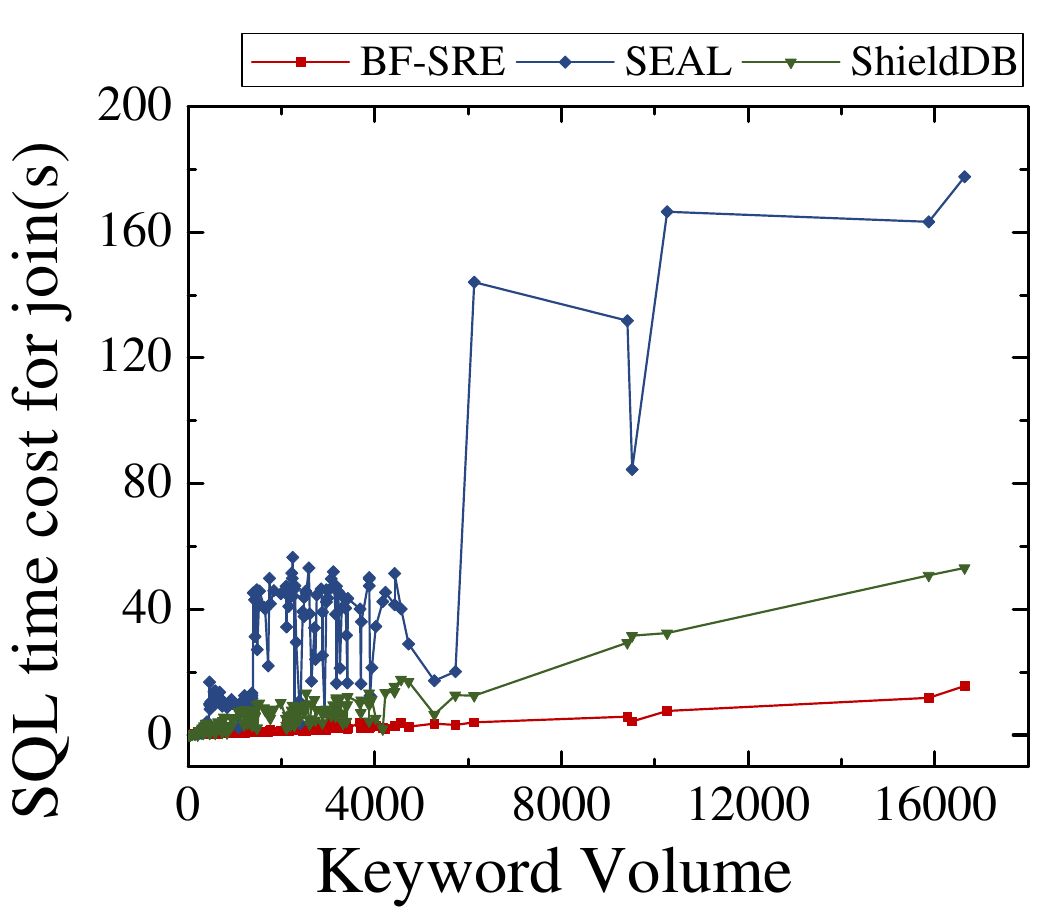}}}}
            \subfloat[\textcolor{black}{Communication costs}]{\label{JOINC}\scalebox{0.5}{{\includegraphics[width=8cm,height=7cm]{./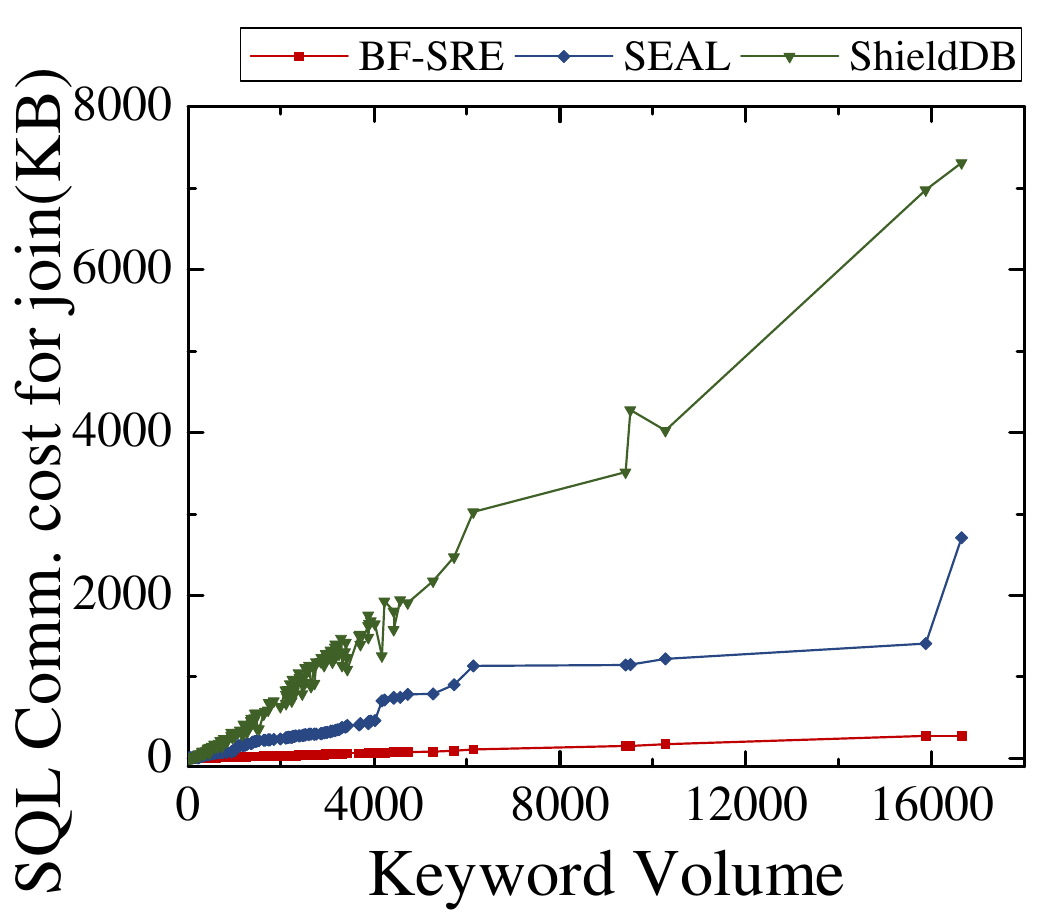}}}}
        }
    \vspace{-10pt}
    \caption{Comparison of \textbf{time and communication} costs for SQL join queries on Crime.}
    \vspace{-10pt}
    \label{Join}
\end{figure}

\textbf{Join query cost.} Fig. \ref{Join} shows that \textsf{BF-SRE} performs the best in the Crime dataset.
Compared to executing SQL keyword queries, \textsf{BF-SRE} does not require a significant rise in the metrics. 
In contrast, \textsf{SEAL} and \textsf{ShieldDB} repeatedly eliminate the dummy data from the IUCR table locally, thereby sharply increasing their costs. 
The costs associated with \textsf{BF-SRE} are roughly \textcolor{black}{26.6x} lower than those of \textsf{ShieldDB} w.r.t. the highest-volume keyword.

When considering overall costs, \textsf{BF-SRE} stands as a competitive performer in comparison to trivial-transformed instances from DSE. 
Moreover, it provides comparable time costs to the padding strategies while delivering significant advantages. 
It demonstrates the ability to efficiently handle various search queries and offers practical communication costs under the tested security parameters.
}

\vspace{-5pt}
\section{Discussions}
\vspace{-5pt}

\textbf{Dataset.} {We notice that there are still a few datasets that have been used for DSE evaluations, e.g., Apache Lucene\cite{zhang2023high}. 
We can apply them to locate distinct words by document names, similar to the approach used in Wikipedia or Enron. 
We argue that using these datasets will not seriously affect the performance of our proposals because Crime/Wikipedia/Enron has a familiar data distribution to other table/document/email datasets \cite{kim2017forward}.  
} 

\textbf{Highest-volume keyword search in deletion.} One may argue that using "highest-volume keyword" may not yield extensive results. 
It is important to note that the search results obtained from representative keywords exhibit minimal fluctuation. 
This is because the search process for these keywords consumes sufficient time and ensures accurate results.
These keywords are more frequently used in practical databases and we use them to clearly highlight the advantage of our proposal. 
We note that interested readers may choose to use less-frequent keywords in the evaluation.  

{\textbf{Other SQL databases.} We use MySQL to store the ciphertexts of the EDB.  
We say that interested readers may use other SQL databases, such as PostgreSQL and Microsoft SQL Server, to implement the \textit{d}-DSE schemes via the Python interfaces.
The parallel capability of SQL databases is not a crucial aspect to consider when choosing among them, as their performance is largely dependent on disk storage.
}

{\color{black}\textbf{The $d$-DSE storage cost.} \textsf{BF-SRE} needs more client storage than \textsf{MITRA}$^P$, which is required by the underlying  SRE revoke structure.
We can apply other DSE schemes as building blocks consuming lower client storage.
\footnote{\textcolor{black}{In Appendix \ref{BF-IPE-example}, we provide an example based on \textsf{MITRA} and the Inner Product Encryption \cite{kim2018function}, obtaining distinct values via the distinct inner products between ciphertexts and search tokens.}}}

\label{ThreatFaccess}
\textbf{Threat from access pattern.} 
Similarly to JOIN-\textit{d}DSE, d-KW-\textit{d}DSE could potentially be vulnerable to access pattern leakage \cite{DBLP:conf/ndss/BlackstoneKM20}, which is a concern associated with practical DSE constructions \cite{demertzis2020dynamic, sun2021practical, patranabis2021forward, chen2023power}.
In certain scenarios, the retrieval process could involve accessing memory, such as acquiring a constant-size file-identifier from a column, which could inadvertently reveal access patterns.
One straightforward strategy to mitigate this issue is to combine d-KW-\textit{d}DSE with ORAM \cite{ghareh2018new} to obscure memory access, albeit at extra costs of time and storage. 
{\color{black} Specifically, we can first retrieve identifiers via d-KW-\textit{d}DSE and then employ ORAM to retrieve each identifier's data. 
When processing queries in batch, we leverage the multi-path ORAM (e.g., OBI \cite{wu2023obi}) to mitigate the high throughput.
}

{\color{black} 

\textbf{Against frequency attacks.} Although we mainly focus on volume leakage, a recent research \cite{xu2023leakage} alerts the frequency(-matching) attack.
The success of the attack hinges on the diversity of the query frequency that is exposed in \textsf{BF-SRE} and prior FP\&BP DSE schemes \cite{ghareh2018new,sun2021practical} through persistent frequency detection.
To mitigate the attack, we can apply a general countermeasure - frequency-smoothing - also used in PANCAKE \cite{grubbs2020pancake}, on the top of $d$-DSE.
It is compatible for combining $d$-DSE and frequency smoothing as: 1) they both leverage keyword/value pairs (KV pairs in their description); 2) their contexts (i.e., static and dynamic frequency distributions) are consistent.
}

\textbf{Ciphertext de-duplication.}
Recall that ciphertext de-duplication  
\cite{Ren2021dedup,liu2015securede,came2019oblivious,Yang2022dedup} is to eliminate the repetitive ciphertexts. 
In particular, this approach, inherited from convergent encryption\cite{liu2015securede}, emphasizes the elimination of ciphertexts with identical contents across multiple clients in order to minimize storage costs.
{But this technique does not focus on ensuring secure value searches.} 

\section{Conclusion}

We explore the distinct search and propose \textit{d}-DSE. 
Following the concept and definition of \textit{d}-DSE, we propose \textcolor{black}{the $d$-DSE designed EDB} and develop \textcolor{black}{\textsf{BF-SRE,}} which satisfies the forward and backward privacy and DwVH security. 
We conduct extensive experiments to highlight the practical performance of our designs in run time, communication, storage, and effectiveness on volume leakage. 

%
%

%
%
%

\begin{footnotesize}
{\color{black}

\bibliographystyle{plain}
\bibliography{refabrv}
}
\end{footnotesize}

\appendix

\vspace{-5pt}
\section{Summary of Notations \& Concepts}\label{Ntandtools}\label{Notation}
\vspace{-5pt}

\begin{table}[ht]
\vspace{-5pt}
    \centering
    \caption{{\color{black}Notations for \textit{d}-DSE}.}
    \vspace{-10pt}
    \scalebox{0.9}{
    \scriptsize
    \label{notation_tables}
    \begin{tabular}{p{2.6cm}p{6.0cm}}
        \hline
            \textbf{\textit{d}-DSE designed EDB} & \textbf{Description} \\
        \hline
            \textcolor{black}{$\mathcal{EDB}_S$} & \textcolor{black}{The collection of encrypted tables stored on disk}\\
             \textcolor{black}{$\texttt{EDB}_C$}  & \textcolor{black}{The replicated encrypted data from $\mathcal{EDB}_S$ on memory}\\
             $q$                        &The query\\
             $\textbf{q}_*$             &The query sequence\\
             $SS$                       &The functionally equivalent syntax\\
             ${\rm t}_*$                &The encrypted data \textcolor{black}{generated from $SS$ }for queries\\
             $syn_*$                    &The specific \textit{d}-DSE construction name\\
             $\mathcal{T}$              &The table collection in EDB storage\\
             $\textbf{T}_*$             &The table name\\
             $\textbf{T}_*.\star$       &The table column\\ 
             $\hat{\textbf{L}}_*$       &The leakage function of constructions\\
             $\textbf{d}$               &The vector containing value's quantities\\
             $ulen(w)$                  &The update length pattern\\
             $drlen(w)$                 &The distinct response length pattern\\
             $qeq(w)$                   &The query equality pattern\\
        \hline
        \hline
            \textbf{BF-SRE} & \textbf{Description} \\
        \hline
            $\rm EDB$& The encrypted database\\
            $\rm EDB_{cache}$ & The cache encrypted database\\
            $\mathbf{C}[w]$ & The map counts the number of search on keyword $w$\\
            $\mathbf{MSK}[w]$   & The map records the master secret key about  $w$\\
            $\mathbf{UpCnt}[w]$ & The map counts the update on  $w$\\
            $\mathbf{D}[w]$ & The map records revoked key structure about $w$\\
            $K_c$           &The encryption key for values\\
            $K_s$           &The encryption key for $\rm EDB_{cache}$\\
            $K_t$           &The encryption key for tag\\
            $t$ &  The real tag \\
            $l$ &  The dummy tag \\
            $\mathbf{H}$ & The Bloom Filter hash collection and bit array\\
            $\mathbf{B}$ & The Bloom Filter bit array\\
            $\Sigma_{add}$ & The Forward Private DSE scheme \\
            $\Phi$       &  The Bloom Filter scheme\\
            $\mathcal{E}\&\mathcal{D}$ & The encryption and decryption algorithm of the symmetric encryption \\
        \hline
        \end{tabular}
        \vspace{-10pt}
    }
\end{table}

\textbf{Symmetric encryption.} Given a security parameter $\lambda\in\mathbb{N}$, message space $\mathcal{M}=\left\{0,1\right\}^*$, ciphertext space $\mathcal{C}=\left\{0,1\right\}^*$ and the key space $\mathcal{K}=\left\{0,1\right\}^\lambda$, a symmetric encryption scheme consists of two algorithm $\left(\mathcal{E},\mathcal{D}\right)$ under the following syntax: \\
$\bullet$ $\mathcal{E}\left(k,m\right)$: Input a symmetric key $k\in\mathcal{K}$ and a message $m\in\mathcal{M}$, output a ciphertext $c \in\mathcal{C}$. \\
$\bullet$ $\mathcal{D}\left(k,c\right)$: Input a symmetric key $k\in\mathcal{K}$ and a ciphertext $c\in \mathcal{C}$, recover a message $m\in\mathcal{M}$.

For correctness, with each message $m\in\mathcal{M}$ and secret key $k\in\mathcal{K}$, the equation $c\gets\mathcal{E}\left(k,m\right)$ should always make sense, and $m\gets\mathcal{D}\left(k,c\right)$ can always recover the message $m$ from $c$ using the secret key $k$.
In the security definitions, a popular requirement of symmetric encryption is the 
INDistinguishability against Chosen Plaintext Attack (IND-CPA).	

\textbf{Bloom Filter\cite{sun2021practical}.}
A Bloom Filter (BF) $\Phi$ is a probabilistic data structure, which can rapidly and space-efficiently perform set membership test.
A BF consists of three polynomial-time algorithms: \\
$\bullet$ $Gen\left(\lambda\right)$: 
    It takes \textcolor{black}{$\lambda$ parsed as} two integers $b,h\in \mathrm{N}$, and samples a collection of universal hash functions $\mathbf{H}$ = $\left\{H_j\right\}_{j\in [h]}$, where $H_j:\left\{0,1\right\}^* \to \left[b\right]$, $\left[b\right]$ denotes a finite set.
    Finally, it outputs $\mathbf{H}$ and an initial b-bit array $\mathbf{B}=0^b$ with each bit $\mathbf{B}\left[i\right]$ for  $i \in \left[b\right] $ set to 0. \\
$\bullet$ $Upd\left(\mathbf{H},\mathbf{B},x\right)$:
    It takes $\mathbf{H}$ = $\left\{H_j\right\}_{j\in [h]}$, $\mathbf{B} \in \left\{0,1\right\}^b$ and an element $x \in \mathcal{X}$, updates the current array $\mathbf{B}$ by setting $\mathbf{B}\left[H_j\left(x\right)\right] \gets 1$ for all $j \in \left[h\right]$ and finally outputs the updated $B$.
    We use $\mathbf{B}_\textsf{S} \gets $$Upd\left(\mathbf{H},\mathbf{B},\textsf{S}\right)$ to denote the final array after inserting all elements in the set \textsf{S} one-by-one. \\
$\bullet$ $Check\left(\mathbf{H},\mathbf{B},x\right)$:
    It takes $\mathbf{H},\mathbf{B}$ and an element $x$, and checks if $\mathbf{B}\left[H_j\left(x\right)\right]$ = 1 for all
    $j\in \left[h\right]$. If true, it outputs 1 and 0 otherwise.

For correctness, a BF is \emph{perfectly complete} if for all integers $b,h \in \mathbb{N}$, any set $\textsf{S} \in \mathcal{X}$, and
$\left(\mathbf{H},\mathbf{B}\right) \gets Gen\left(\lambda\right)$ as well as $\mathbf{B}_\textsf{S} \gets $ $Upd\left(\mathbf{H},\mathbf{B},\textsf{S}\right)$, it holds:
\begin{equation*}
\mathbb{P}\left[\textsf{Check}\left(\mathbf{H},\mathbf{B}_\textsf{S},x\right)=1\right] = 1.
\end{equation*}
\textbf{\quad Symmetric Revocable Encryption\cite{sun2021practical}.}
Symmetric Revocable Encryption (SRE) is a primitive resembled from Symmetric Puncturable Encryption (SPE).
With key space $\mathcal{K}_{SRE}$, message space $\mathcal{M}$ and tag space $\mathcal{T}$, SRE includes four polynomial-time algorithms: \\
$\bullet$ $KGen\left(\lambda\right)$:
    It takes a security parameter $\lambda$ as input and outputs a master secret key $sk \in \mathcal{K}_{SRE}$. \\
$\bullet$ $Enc\left(sk,m,T\right)$:
    It takes as input a $sk$ and a message $m \in \mathcal{M}$ with a list of tags $T \subseteq \mathcal{T}$, and outputs a ciphertext $ct$ for $m$ under tags $T$. \\
$\bullet$ $KRev\left(sk,R\right)$:
    It takes as input $sk$ and a revocation list $R=\left\{t_1,t_i,...,t_\tau\right\} \subseteq T$ , and outputs a revoked secret key $sk_R$, which can be used to decrypt only the ciphertext that has no tag belonging to $R$.
    For \texttt{compress revocation}, it takes $D^\prime \gets $$Comp\left(D,t_i\right)$ to renew revoked key structure $D$ and $sk_R \gets $$ckRev\left(sk,D\right)$ to make the revoked secret key $sk_R$. \\
$\bullet$ $Dec\left(sk_R,ct,T\right)$: It takes as input $sk_R$ and $ct$ encrypted under tags $T$, and outputs $m$ or a failure symbol $\bot$.
    

For correctness, an SRE scheme is correct if the decrypt algorithm returns
the correct result for every input of $sk_R,ct,T$, except with negligible probability.
For security, SRE should provide the adaptive security of IND-REV-CPA and the selective security of IND-sREV-CPA\cite{sun2021practical}.

\vspace{-5pt}
\section{DSE} \label{RDSSE}
\vspace{-5pt}

We notice that the majority of DSE \textcolor{black}{schemes} fail to consider the search and retrieval of the distinct values (which is a fundamental feature in relational databases). 
And none of existing works adequately address distinct search with well-defined security notions. 
DSE is commonly used for retrieving \textcolor{black}{file-identifiers} by keywords\cite{curtmola2011searchable}, and it is assumed that the client does not query the addition of the same pair of keyword and identifier\cite{stefanov2014practical,cash2014dynamic}. 
Kamara et al. \cite{Kamara2012dynamic} conceptualized a database as a collection of files, each represented by a unique identifier.
This assumption has been adopted by subsequent works, e.g.,  \cite{hahn2014searchable}.
Stefanov et al. \cite{stefanov2014practical} used DSE to search the inverted index data structure for keyword/identifier pairs, 
while Cash et al. \cite{cash2014dynamic} revisited definitions from \cite{curtmola2011searchable} to facilitate the storage of document-type data into databases using DSE.
This philosophy has been widely embraced by subsequent works  \cite{chamani2022dynamic,xu2021searching,bost2016ovarphiovarsigma,bost2017forward,sun2018practical,sun2021practical,demertzis2020dynamic}. 
Xu et al. \cite{Xu2022rose} proposed a robust DSE with an extension of backward security. 
Chen et al. \cite{chen2023power} introduced a solution against key compromise in the context of DSE. 
Wang et al. \cite{wang22uno} proposed a solution for keyword search on multi-writer encrypted databases.
{\color{black}
Recently, a volume-hiding DSE construction \cite{Zhao2021VHDSSE} was  built on top of a padding strategy called \textsf{dprfMM} \cite{patel2019mitigating} (involving a dummy dataset). 
But its search complexity is proportional to the maximum response length\cite{zhang2023high}.
} 

\vspace{-5pt}
\section{Padding Strategies} \label{RPDS}
\vspace{-5pt}

The padding strategy leverages false positive information as a means of performing obfuscation. 
Cash et al. \cite{DBLP:conf/ccs/CashGPR15} initially proposed padding for keyword search to counter volume leakage. 
It is important to note that this technique relies on the distribution of the input dataset, making it susceptible to potential leakage even before a search query is initiated.
Many works have been proposed to refine the padding \cite{bost2017thwarting,kamara2018structured,kamara2019computationally}.
{\textsf{SEAL} \cite{demertzis2020seal} introduces an adjustable searchable encryption scheme that provides control over the amount of leaked access pattern information. 
This control is implemented through fine-tuning padding parameters to adjust leakage level. 
}
{But \textsf{SEAL} does not support update operations, which is impractical in the EDB scenarios.} 
\textsf{ShieldDB} \cite{vo2021shielddb} introduces a padding strategy that is explicitly tailored to accommodate a more realistic adversarial model, particularly within the context of databases undergoing continuous updates. 
This solution requires a large amount of dummy files to perform the padding on the batched data. 
We highlight that padding strategies, while effectively mitigating volume leakage, come with extra costs, impacting search efficiency and response time of queries. 
For example, during the initialization stage, they generate dummy data for the encrypted database, which can be computationally intensive, especially for large-scale databases. 
Padding strategies also  tend to increase the amount of data transferred during search operations, thereby raising communication cost.  

\vspace{-5pt}
\section{Encrypted Databases} \label{REDB}
\vspace{-5pt}

Over the past few years, notable efforts have been made in enhancing the security of search operations within encrypted databases. 
These advancements span a spectrum of dimensions, including bolstering data security, refining security schemes tailored to diverse functionalities, and optimizing the overall design of encrypted databases \cite{fuller2017sok}. 
%
Encrypted databases have demonstrated capacity to execute a spectrum of secure functionalities, such as conjunctive search \cite{patranabis2021forward}, keyword range search  \cite{zuo2018range}, and order-revealing encryption \cite{wang2018order}.
Researchers have leveraged structural encryption (e.g., \cite{kamara2018sql} and \cite{george2021structured}), private set intersection \cite{pinkas2014fasterPSI}, private set union  \cite{Jia2022PSU}, and secure hardware \cite{priebe2018enclavedb} to delve into the realm of universal search within encrypted databases.

\begin{table}[htb]
    \centering
    \vspace{-10pt}
    \caption{Comparison of related EDBs.}
    \vspace{-10pt}
    \label{EncDB}
    \scalebox{0.7}{
    \begin{tabular}{|c|c|c|c|}
    \hline
    \textbf{EDB} & \textbf{Volume leakage} & \textbf{Tools for Security} & \textbf{Query mode} \\
    \hline
    CryptDB \cite{Popa2011CyptDB}   &  \Checkmark &  SQL aware Encryption &  SQL \\
    \hline
    Arx \cite{poddar2016arx} & \Checkmark & {DSE} &  SQL \\
    \hline
    EnclaveDB\cite{priebe2018enclavedb} & \Checkmark & SGX & SQL \\
    \hline
    ShieldDB \cite{vo2021shielddb} & $\times$ & DSE + Padding & Keyword based \\
    \hline
    \end{tabular}
    \vspace{-10pt}
    }
\end{table}

The SQL aware encryption used in CryptDB\cite{Popa2011CyptDB} is deterministic, which makes the EDB vulnerable to volumetric attacks. 
At present, researchers have used trusted hardware (SGX)\cite{priebe2018enclavedb} and {DSE}\cite{poddar2016arx,vo2021shielddb} to safeguard the EDBs. 
Tab. \ref{EncDB} shows that current EDBs either overlook volume leakage or strongly rely on padding strategies (with significant storage and communication costs).  

\vspace{-5pt}
\section{Security analysis for BF-SRE} \label{AdpBS}
\vspace{-5pt}

\setcounter{theorem}{0}
\begin{theorem}[Adaptive Security of \textsf{BF-SRE}]
    {
    \itshape
    We define $\mathcal{L}_D=\left(\mathcal{L}_D^{Upt},\mathcal{L}_D^{Srch}\right)$ as:
    \begin{center}
		{
        \footnotesize
		$\mathcal{L}_D^{Upt}\left(w,v,op\right)= op$
  
		$\mathcal{L}_D^{Srch}\left(w\right)= sp(w), {\rm TimeDTS}(w), {\rm Update}(w)$,
		}
		\end{center}
    {
        \vspace{-5pt}
        \textsf{BF-SRE} is $\mathcal{L}_D$-adaptively-secure.
        \vspace{-5pt}
    }
    }
\end{theorem}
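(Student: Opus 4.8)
The plan is to give a simulation-based proof in the $\textit{Real}/\textit{Ideal}$ paradigm of the Sim-adaptive definition: I would exhibit a triple of simulators $\mathcal{S}=(\mathcal{S}_\mathsf{setup},\mathcal{S}_\mathsf{Update},\mathcal{S}_\mathsf{Search})$ that, fed only with $\mathcal{L}_D=(\mathcal{L}_D^{Upt},\mathcal{L}_D^{Srch})$, produce a transcript computationally indistinguishable from the genuine one, and then bridge $\mathrm{Real}$ and $\mathrm{Ideal}$ through a sequence of hybrid games, each replacing exactly one cryptographic primitive of \textsf{BF-SRE} by an idealized counterpart or a bookkeeping table. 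Setup is immediate: $\mathcal{S}_\mathsf{setup}$ invokes the $\Sigma_{add}$ setup simulator and initializes empty tables in place of $\mathbf{MSK},\mathbf{D},\mathbf{UpCnt},\mathbf{C}$ and the Bloom Filter $(\mathbf{H},\mathbf{B})$, since the outsourced $\mathrm{EDB}$ and $\mathrm{EDB}_{cache}$ carry no secret-dependent content at this stage.

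I would then walk through the hops. \textbf{Hop 1:} replace every evaluation of $F$ used to derive the tags $t,l$ and the cache token $tkn$ by lazily sampled random strings stored in a table; since $F$ under a fixed key is modeled as the random oracle $\mathcal{H}_F$, this step is a perfect rewrite via oracle programming. \textbf{Hop 2:} replace the value ciphertexts $\mathcal{E}(K_c,v\Vert cnt)$ by encryptions of equal-length zero strings, with the gap bounded by the IND-CPA advantage of the symmetric scheme. \textbf{Hop 3:} replace the SRE ciphertexts $ct$ by \textsf{SRE} encryptions of dummy messages under freshly sampled tags, and produce the revoked key $sk_R$ together with the structure $D$ at search time from leakage; the gap is bounded by the IND-REV-CPA and IND-sREV-CPA security of SRE. \textbf{Hop 4:} replace $\Sigma_{add}$ by its own adaptive simulator driven by its leakage, bounding the gap by the adaptive security of the forward-private DSE. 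The final game is $\mathrm{Ideal}$, where the entire transcript is generated from $\mathcal{L}_D$, and summing the four advantages yields the $negl(\lambda)$ bound.

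The core of the argument is $\mathcal{S}_\mathsf{Search}$. From $sp(w)$ the simulator detects repeated queries and reuses prior state; from ${\rm Update}(w)$ it learns the number and timestamps of uploads on $w$, hence the size of the list $L$ produced by $\Sigma_{add}.Search$; and from ${\rm TimeDTS}(w)$ it learns exactly the distinct, non-deleted encrypted values that must survive decryption. It then fabricates $L$ of the correct cardinality and builds a revoke structure $D$ that revokes precisely the complementary tags — the dummy tags $l$ of repetitive inserts and the real tags $t$ of deleted pairs — so that \textsf{SRE} decryption under $sk_R\gets\textsf{SRE}.ckRev(sk,D)$ reproduces exactly the set in ${\rm TimeDTS}(w)$, which is then merged with the cached set retrieved via $tkn$.

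The main obstacle is reconciling this with forward privacy: because $\mathcal{L}_D^{Upt}(w,v,op)=op$, at update time the simulator does \emph{not} learn whether an incoming pair is a first occurrence or a repeat, i.e.\ whether the real tag $t$ or the dummy tag $l$ is used and whether a puncture is applied. The crux is to show that the two branches of the Distinct Classifier leave no server-visible imprint: both upload a single \textsf{SRE} ciphertext carrying one fresh-looking tag through the forward-private $\Sigma_{add}$, while the puncturing $\textsf{SRE}.Comp$ only mutates the client-side state $\mathbf{D}[w]$. Consequently the Bloom Filter, although it steers the real execution, is irrelevant to indistinguishability at update time and surfaces only at search time, precisely where the leakage suffices to rebuild the surviving and revoked sets. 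The delicate remaining point is proving that a revoke structure assembled blindly from leakage induces the identical decryption pattern as the real one; I would discharge this via the correctness and IND-sREV-CPA security of SRE, mirroring the \textsf{AURA} analysis, and note that false positives of the BF affect only correctness (with negligible probability) rather than the indistinguishability of the hops.
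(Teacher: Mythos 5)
Your proposal is correct and follows essentially the same route as the paper's proof: the same sequence of game hops (PRF outputs to lazily sampled random strings, $\Sigma_{add}$ to its forward-private simulator with bookkeeping of updates, SRE ciphertexts handled via IND-sREV-CPA, symmetric ciphertexts to random strings), the same reconstruction of the addition list and revoke structure from $sp(w)$, ${\rm TimeDTS}(w)$, ${\rm Update}(w)$ at search time, and the same key observation that the two branches of the Distinct Classifier leave no server-visible imprint at update time so the Bloom Filter only steers the simulator's workflow. The one nit is that your Hop 3 should be phrased as replacing only the \emph{revoked} SRE ciphertexts (those whose tags are later punctured) by dummies --- the surviving ones must still decrypt to the simulated symmetric ciphertexts --- which is exactly how the paper's $G_3$ is stated and why selective IND-sREV-CPA security (with revoked tags read off from the update history) suffices.
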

\begin{proof}
    We analyze the indistinguishability between \textsf{BF-SRE} and simulator $\mathcal{S}_{BF-SRE}$, and we use the game hop method to analyze indistinguishability.

	\textbf{Game $G_0$.} This game is identical with the real \textsf{BF-SRE}: $\mathbb{P}\left[{\rm Real}_\mathcal{A,S,L}^{\rm BF-SRE}\left(\lambda\right)=1\right] = \mathbb{P}\left[G_0=1\right]$.
	
	\textbf{Game $G_1$}. We replace the calls of $F$ with random strings. 
	When each time a previous unseen call is input, we select a random output from this range space, and record it in tables \textsf{Tokens}, \textsf{Tags}, and \textsf{Counts} for $F\left(K_s,w\right)$, $F\left(K_t,w||v\right)$, and $F\left(K_t,cnt\right)$, respectively.
	Whenever $F$ is recalled on the same input, the output value is retrieved directly from these tables.
	The distinguishing advantage between $G_0$ and $G_1$ is equal to that of PRF against an adversary making at most $N$ calls to $F$:
		$\mathbb{P}\left[G_1=1\right]-\mathbb{P}\left[G_0=1\right]\leq  3Adv_{F,\mathcal{B}_1}^{prf}\left(\lambda\right)$.
  
	\textbf{Game $G_2$}. We replace the Forward Private DSE instance $\Sigma_{add}$ with the associate simulator $\mathcal{S}_{add}^{DSE}$.
	To construct this simulator, we use some bookkeeping to keep track of all the \textsf{Update} queries as they come, and postpone all addition and deletion operations to the subsequent \textsf{Search} query.
	This variant can be done because the additions leak nothing about their contents guaranteed by the forward privacy of $\Sigma_{add}$ and the obliviousness of deletions to server.
	
	Moreover, a list \textsf{Uphist} is initialized and used in this game.
        The list \textsf{Uphist}
        in fact corresponds to the update history on $w$ for the scheme $\Sigma_{add}$ and will be taken as the input of the simulator.
	The distinguishing advantage between $G_1$ and $G_2$ is reduced to the $\mathcal{L}_{FS}$-adaptive forward privacy of  $\Sigma_{add}$. 
	Therefor, there exists a PPT adversary $\mathcal{B}_2$ such that:
        $\mathbb{P}\left[G_2=1\right]-\mathbb{P}\left[G_1=1\right]\leq  Adv_{\Sigma_{add},\mathcal{S}_{add}^{DSE},\mathcal{B}_2}^{\mathcal{L_{FS}}}\left(\lambda\right)$.	    
	
	\quad \textbf{Game $G_3$}. We only modify the generation of the ciphertext deletion.
	More precisely,  we replace the values that were inserted previously and punctured later with constant 0.
	
	Since the modification above works only on the ciphertexts with revoked tags, we can see that the distinguishing advantage between $G_2$ and $G_3$ is the IND-sREV-CPA security of the SRE scheme.
	The selective security is sufficient for the application here, because the reduction algorithm $\mathcal{B}_3$ can obtain the revoked tags from \textsf{Uphist}$\left(w\right)$ and simulate the encrypting process of non-deleted values with the revoked secret key.
	There for, there exists a reduction algorithm $\mathcal{B}_3$ such that:
        $\mathbb{P}\left[G_3=1\right]-\mathbb{P}\left[G_2=1\right]\leq  Adv_{SRE,\mathcal{B}_3}^{\mathrm{IND-sREV-CPA}}\left(\lambda\right)$.
    
	\quad \textbf{Game $G_4$}. We modify the way of constructing addition list $L_{add}$ and the way of updating the compressed data structure $D$.
        $L_{add}$ contains the addition entries and corresponds to the update history \textsf{Uphist} on $w$ for the scheme $\Sigma_{add}$ and is taken as the input of the simulator $\mathcal{S}_{add}^{DSE}$.
	In detail, we first compute the leakage information $\rm TimeDTS$ and $\rm Update$ from the table \textsf{Uphist}, and then base the information to construct $L_{add}$ and update $D$.
	This has no influence to the distribution of $G_3$:
	    $\mathbb{P}\left[G_4=1\right]=\mathbb{P}\left[G_3=1\right]$.
     
	\textbf{Game $G_5$}. We modify the generation of tags in a different way.
	In detail, we replace the tags with random strings directly, instead of computing them from keyword-value pairs and storing them in the table $Tags$.
	The distinguish between $G_4$ and $G_5$ is whether the tags will repeat.
	It is sufficient because each keyword/value/count pair was inserted/deleted at most once during the updates.
	We do not need to record every distinct tag for keeping consistence.
	Therefor, we have:
	    $\mathbb{P}\left[G_5=1\right]=\mathbb{P}\left[G_4=1\right]$.
     
        \textbf{Game $G_6$}. We replace the outputs from symmetric encryption  to random strings on retrievals.
        The difference between $G_5$ and $G_6$ comes from the advantage:
        $\mathbb{P}\left[G_6=1\right]-\mathbb{P}\left[G_5=1\right]\leq  Adv_{\mathcal{E},\mathcal{B}_4}^{\mathrm{IND-CPA}}\left(\lambda\right)$.
        
	\textbf{Simulator.}  When building a simulator from $G_6$, we need to avoid directly using the keyword $w$ as the protocols input. 
	This can be done by replacing the input $w$ with $min\ sp\left(w\right)$.
	To construct $L_{add}$ and collection $D$, we can properly take the leakage $\rm TimeDTS$ and $\rm Update$ as the input of $\textsf{Search}$, and the simulator does not need to keep the track of the updates anymore.
	In this condition, $G_6$ can be efficiently simulated by the simulator with the leakage function $\mathcal{L}$, so we have:
	    $\mathbb{P}\left[G_6=1\right]=\mathbb{P}\left[{\rm Ideal}_\mathcal{A,S,L}^{\rm BF-SRE}\left(\lambda\right)=1\right]$.
     
	\quad \textbf{Conclusion.} By combining all contributions from all games, there exists the adversary such that:
	\begin{equation*}
	    \footnotesize
        \begin{aligned}
	    |\mathbb{P}\left[{\rm Real}_\mathcal{A }^{\rm BF-SRE}\left(\lambda\right)=1\right]-
	    \mathbb{P}\left[{\rm Ideal}_\mathcal{A,S,L}^{\rm BF-SRE}\left(\lambda\right)=1\right]| \leq\\
	    3Adv_{F,\mathcal{B}_1}^{prf}\left(\lambda\right)+Adv_{\Sigma_{add},\mathcal{S}_{add}^{DSE},\mathcal{B}_2}^{\mathcal{L_{FS}}}\left(\lambda\right)+Adv_{SRE,\mathcal{B}_3}^{\mathrm{IND-sREV-CPA}}\left(\lambda\right) + Adv_{\mathcal{E},\mathcal{B}_4}^{\mathrm{IND-CPA}}\left(\lambda\right).
	    \end{aligned}
	\end{equation*}
\end{proof}

{\color{black}
\begin{theorem}[BF-SRE's DwVH Security]
    {
    \itshape
    The leakage function of \textsf{BF-SRE} $\mathcal{L}_D$ = $(\mathcal{L}_D^{Upt}, \mathcal{L}_D^{Srch})$ is Distinct with Volume-Hiding.
    }
\end{theorem}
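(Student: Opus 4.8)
The plan is to show that the two experiments $\mathbf{DwVHGame}_{\mathcal{A}}^{\mathcal{L}}\left((n,s),0\right)$ and $\mathbf{DwVHGame}_{\mathcal{A}}^{\mathcal{L}}\left((n,s),1\right)$ induce the \emph{same} distribution on everything $\mathcal{A}$ observes, so that $\lvert\mathbb{P}[\mathcal{A}\to 1\mid b=0]-\mathbb{P}[\mathcal{A}\to 1\mid b=1]\rvert=0$. First I would reduce the question to a statement purely about the leakage profile. By Theorem 1, \textsf{BF-SRE} is $\mathcal{L}_D$-adaptively secure, so the $\mathrm{EDB}$, the update transcripts, and the search transcripts delivered to $\mathcal{A}$ in either world can be reproduced by the simulator $\mathcal{S}_{BF-SRE}$ taking only $\mathcal{L}_D=(\mathcal{L}_D^{Upt},\mathcal{L}_D^{Srch})$ as input. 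Hence it suffices to prove that $\mathcal{L}_D$ is identically distributed under $S_0$ and $S_1$; the physical ciphertexts and tokens then follow automatically.

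Next I would dispatch the update leakage. Both signatures enumerate the same number of keyword/value pairs, since $\sum_{w}t_0(w)=\sum_{w}t_1(w)=n$, and the forward-privacy leakage is stateless with $\mathcal{L}_D^{Upt}(w,v,op)=add$, revealing neither the keyword, nor the value, nor whether the pair is a first occurrence or a repetition. Therefore the \emph{Prepare} phase and every adaptive update query emit exactly the multiset of $n$ copies of $add$ in both worlds, so the update leakage carries no information separating $S_0$ from $S_1$.

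The core of the argument concerns the search leakage $\mathcal{L}_D^{Srch}(w)=\left(sp(w),{\rm TimeDTS}(w),{\rm Update}(w)\right)$, which I would handle component by component. The search pattern $sp(w)$ is fixed by $\mathcal{A}$'s query schedule and is world-independent. The distinct response ${\rm TimeDTS}(w)$ has cardinality $drlen(w)=l_b(w)$, and the signature constraint $l_0(w)=l_1(w)$ forces this length to coincide, while the values themselves are simulated (random) by the reduction above, so the response is identically distributed. The decisive step is ${\rm Update}(w)$: I would argue that the Bloom-Filter-controlled real/dummy tag mechanism decouples the quantity profile $t_b(w)$ from the distinct response, so that repetitions are absorbed as dummy-tagged ciphertexts whose tags are punctured, and the number of distinct values revealed on searching $w$ remains exactly $l_b(w)$ regardless of $t_b(w)$. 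This realizes the paper's claim that $drlen(w)$ bears no exploitable linear relationship to $ulen(w)=|{\rm Update}(w)|$.

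I expect the main obstacle to be precisely this last point: making rigorous that the timestamp set ${\rm Update}(w)$, whose size tracks the total quantity $t_b(w)$, cannot be correlated with the distinct response across the two signatures. The clean way to discharge it is a hybrid over the searched keywords in which each ${\rm Update}(w)$ is exhibited as a set of $t_b(w)$ fresh timestamps whose distribution is induced solely by the public, adversary-chosen query schedule and not by the secret value multiplicities, together with the \emph{Prepare} requirement that admissible signature pairs agree on the per-keyword type $l_b(w)$ and on the global total $n$. Once $sp$, ${\rm TimeDTS}$, and ${\rm Update}$ are each shown to be world-independent, the joint search leakage, and hence the entire view of $\mathcal{A}$, coincide, which establishes that $\mathcal{L}_D$ is Distinct with Volume-Hiding.
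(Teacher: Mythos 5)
Your route is genuinely different from the paper's. The appendix proof never argues that the two leakage distributions coincide; it works in the contrapositive, constructing four sub-algorithms $\mathcal{B}_D^1,\dots,\mathcal{B}_D^4$ that turn a successful DwVH distinguisher into attacks on the PRF, on the forward privacy of $\Sigma_{add}$, on the IND-sREV-CPA security of the SRE, and on the IND-CPA security of $\mathcal{E}$, and concludes that the minimum of those advantages bounds the DwVH advantage from below. Your first move --- invoke Theorem 1 to collapse the adversary's view to the leakage profile and then compare the two leakage distributions directly --- is cleaner, matches the main-body proof sketch better than the appendix does, and is unproblematic for the components you handle first: the Prepare phase emits $n$ indistinguishable copies of $add$, $sp(w)$ is fixed by the query schedule, and $|{\rm TimeDTS}(w)|=l_b(w)$ is equalized by the constraint $l_0(w)=l_1(w)$.

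The gap is exactly where you predicted it, and your proposed discharge does not close it. After the Prepare phase the challenger has issued one update per keyword/value pair, so ${\rm Update}(w)$ --- the set of timestamps of all updates on $w$ --- has cardinality exactly $t_b(w)$. The admissibility constraints of $\mathbf{DwVHGame}$ force only $l_0(w)=l_1(w)$ and $\sum_{w}t_0(w)=\sum_{w}t_1(w)=n$; they do not force $t_0(w)=t_1(w)$ keyword by keyword. An adversary that picks $t_0(w^*)\neq t_1(w^*)$ for some $w^*$ (compensating on another keyword to keep the total equal to $n$) and then issues a single search on $w^*$ observes $|{\rm Update}(w^*)|=t_b(w^*)$. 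This quantity is not ``induced solely by the public, adversary-chosen query schedule'': it is precisely the secret per-keyword multiplicity, so your hybrid over searched keywords cannot exhibit the two worlds as identically distributed, and the claimed world-independence of the joint search leakage fails at this component (indeed, the paper itself names $ulen(w)=|{\rm Update}(w)|$ as a leaked pattern). To complete your argument you would need either an extra admissibility constraint equating $t_0(w)$ and $t_1(w)$ pointwise, which trivializes the volume-hiding claim, or a substantively different treatment of ${\rm Update}(w)$. The paper avoids confronting this head-on because its reduction-style proof never asserts distributional identity of the leakage; it instead embeds challenge messages among the $t_b(w)-l_b(w)$ repetitive values and charges any residual advantage to the underlying primitives.
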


\begin{proof}

Suppose an adversary $\mathcal{A}_D$ who can the distinguish signature from $\mathcal{L}_D$ and win the DwVH game with advantage $Adv_{BF-SRE,\mathcal{A}_D}^{DwVH}$. 
Then there exist a PPT algorithm $\mathcal{B}_D$ efficiently breaking all security guarantees from PRF $F$, $\Sigma_{add}$, \textsf{SRE}, and the symmetric encryption $\mathcal{E}$, i.e.:

\begin{center}
\small

$MIN(Adv_{F,\mathcal{B}_D}^{prf},Adv_{\Sigma_{add},\mathcal{S}_{add}^{DSE},\mathcal{B}_D}^{\mathcal{L_{FS}}},Adv_{SRE,\mathcal{B}_D}^{\mathrm{IND-sREV-CPA}},Adv_{\mathcal{E},\mathcal{B}_D}^{\mathrm{IND-CPA}}) \geq Adv_{BF-SRE,\mathcal{A_D}}^{DwVH}$.

\end{center}

Note that $\mathcal{B}_D$ should perform like the simulator $\mathcal{S}$ with $\mathcal{L}_D$ that $\mathcal{A_D}$ can access in the DwVH game.
Specifically, it contains four sub-program $\mathcal{B}_D^1$, $\mathcal{B}_D^2$, $\mathcal{B}_D^3$, and $\mathcal{B}_D^4$ to break the secure primitives from $F$, $\Sigma_{add}$, \textsf{SRE}, and $\mathcal{E}$, respectively.
With the guess from $\mathcal{A}_D$, the sub-programs leverage bookkeeping and embed the challenge message to break the above security primitives. 
We describe the process of $\mathcal{B}_D^1$, $\mathcal{B}_D^2$, $\mathcal{B}_D^3$, and $\mathcal{B}_D^4$ as follow.

$\mathbf{\mathcal{B}_D^1}$ gets information from the DwVH game and try to break the PRF.
It separately records three lists $ls_1,ls_2,ls_3$ for tuple $\langle ip_F/iq_F,rs_{w}, rs \rangle$, $\langle ip_F/iq_F,rs_{w}||rs_v, rs \rangle$, and $\langle iq_F,rs_{w}, rs \rangle$, where $ip_F$/$iq_F$ represent the invoke times of PRF in the \underline{Prepare}/\underline{Queries} stage, $rs$ represent the random output string, and $rs_{w}$ and $rs_{v}$ represent the random keyword strings and random value strings.

At the point of $\mathcal{A}_D$ guess the signature $S_{b^*}$ with probability $\epsilon$.
$\mathcal{B}_D^1$ checks the lists to find the input for the challenged $rs^*$ in PRF game. 
In this case, $\mathcal{B}_D^1$ has $ \frac{(1+\epsilon)|ip_F|+(2+\epsilon)|iq_F|}{2|ip_F|+3|iq_F|}$ probability to break the PRF like $\mathcal{B}_1$, where $|ip_F|$ and $|iq_F|$ denotes the number of invoke times of PRF in the \underline{Prepare} and \underline{Queries} stages.
Therefore we have: $Adv_{F,\mathcal{B}_D}^{prf} \geq Adv_{BF-SRE,\mathcal{A_D}}^{DwVH}$.

$\mathbf{\mathcal{B}_D^2}$ inherits $\mathcal{B}_2$ in $G_2$ and additionally receives the guessed signature $S_{b^*}$ to break the forward privacy. At this point, $\mathcal{B}_D^2$ can split the \textsf{Uphist} list as follow: 1) Resize records in each \textsf{Uphist}$(rs_{w})$ until its length equals $t_{b^*}(w)$; 2) Randomly select $t_{b^*}(w)$ values containing $l_{b^*}(w)$ distinct values for \textsf{Uphist}$(rs_{w})$; 3) append the values in tuples of each \textsf{Uphist}$(rs_{w})$ and output the result.

Assume $\mathcal{A}_D$ has $\epsilon$ probability to guess $S_{b^*}$ in the DwVH game. We see that $\mathcal{B}_D^2$ has $\epsilon$ probability to break the state-less leakage of forward privacy in $\mathcal{S}_{add}^{DSE}$ from the guess occurred at the 1-st step of the \underline{$Queries$} stage.
Therefore: $Adv_{\Sigma_{add},\mathcal{S}_{add}^{DSE},\mathcal{B}_D}^{\mathcal{L_{FS}}} \geq Adv_{BF-SRE,\mathcal{A_D}}^{DwVH}$.

$\mathbf{\mathcal{B}_D^3}$ inherits $\mathcal{B}_3$ and additionally record some operation times (i.e., $ip_s,iq_s$) of the encryption and revocation list (i.e., $\langle ip_s/iq_s, rs_v, tags\rangle$ and $\langle ip_s/iq_s, revoke\_tags\rangle$ ) in the DwVH game.  In the \underline{Prepare} stage, $\mathcal{B}_D^3$ probabilistically embeds one IND-sREV-CPA challenge message $m_\mu, \mu\stackrel{\$}{\gets} \{0,1\}$ in the $t_{b}(w)-l_{b}(w)$ repetitive values of keyword $w$. When $\mathcal{B}_D^3$ receives the $S_{b^*}$ from $\mathcal{A}_D$, $\mathcal{B}_D^3$ output 1 if $b^* = b$.

We assume for the sake of contradiction that $\mathcal{A}_D$ can distinguish the outputs from $\mathcal{B}_D$ containing challenge with non-negligible advantage.
Meanwhile, $\mathcal{B}_D^3$ leverages the $\mathcal{A}_D$'s distinguish on the embedding position of the $t_{b}(w)-l_{b}(w)$ repetitive values sequence to break the IND-sREV-CPA security. Therefore: $Adv_{SRE,\mathcal{B}_D}^{\mathrm{IND-sREV-CPA}} \geq Adv_{BF-SRE,\mathcal{A_D}}^{DwVH}$.

\textbf{$\mathcal{B}_D^4$.} Like the embedding in $\mathcal{B}_D^3$, we can also bookkeeping the invoke times and embed the challenge message in the DwVH game. In the \underline{Prepare} stage, $\mathcal{B}_D^4$ will probabilistically embed one IND-CPA challenge message $m_{\mu^\prime}, \mu^\prime\stackrel{\$}{\gets} \{0,1\}$ in the $t_{b}(w)-l_{b}(w)$ repetitive values of keyword $w$. When $\mathcal{B}_D^4$ receives the $S_{b^*}$ from $\mathcal{A}_D$, $\mathcal{B}_D^4$ output 1 if $b^* = b$.

We see that $\mathcal{B}_D^4$ leverages $\mathcal{A}_D$ to break the IND-CPA security of the underling symmetric encryption. Therefore: $Adv_{\mathcal{E},\mathcal{B}_D}^{\mathrm{IND-CPA}} \geq Adv_{BF-SRE,\mathcal{A_D}}^{DwVH}$.

\textbf{Conclusion.} By combining all contributions from all sub-programs of $B_D$, we conclude:

\begin{center}
\small
$MIN(Adv_{F,\mathcal{B}_D}^{prf},Adv_{\Sigma_{add},\mathcal{S}_{add}^{DSE},\mathcal{B}_D}^{\mathcal{L_{FS}}},Adv_{SRE,\mathcal{B}_D}^{\mathrm{IND-sREV-CPA}},Adv_{\mathcal{E},\mathcal{B}_D}^{\mathrm{IND-CPA}}) \geq Adv_{BF-SRE,\mathcal{A_D}}^{DwVH}$.
\end{center}
    
\end{proof}

\vspace{-10pt}
\section{High-level design for $d$-DSE} \label{High-d-DDSE}

\quad
\begin{figure}[htb]
    \vspace{-20pt}
    \centering
    \scalebox{0.32}{\includegraphics[width=26.5cm,height=10.5cm]{./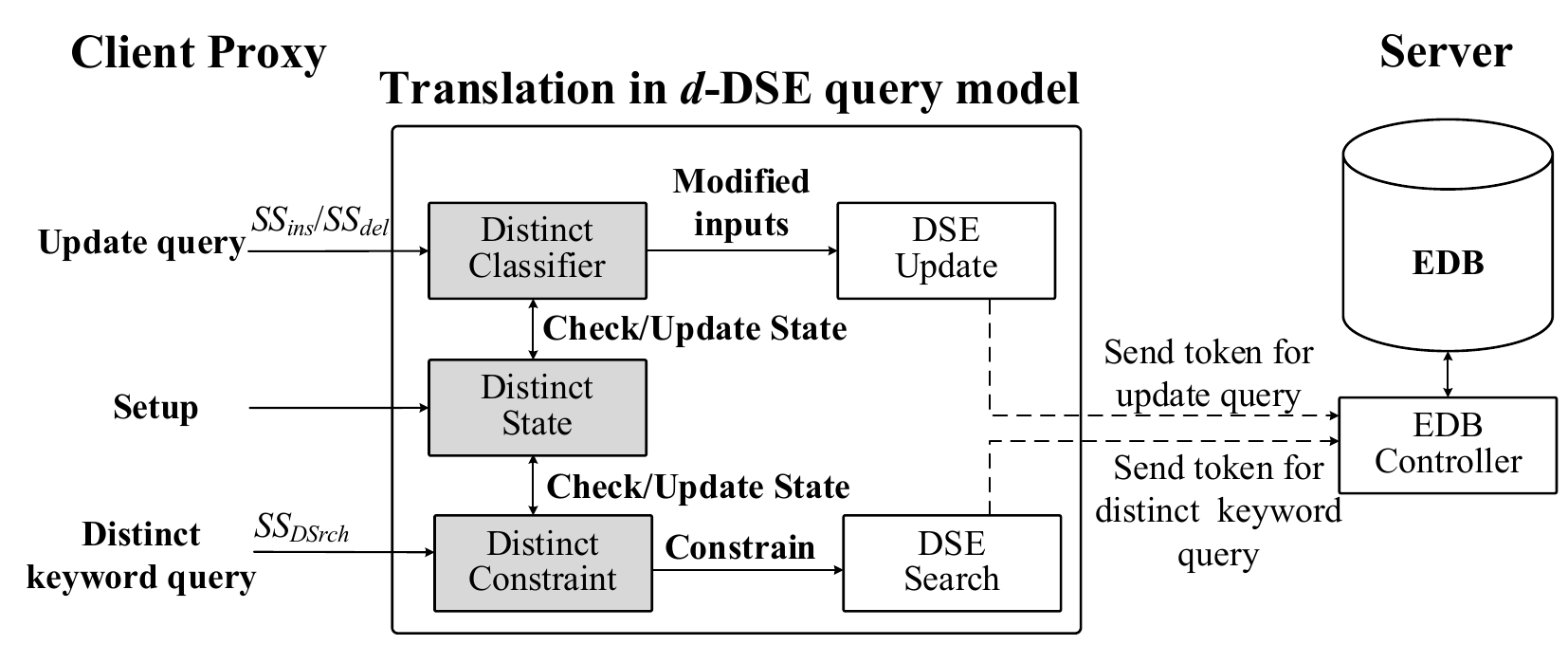}}
    \caption{{The high-level design of $d$-DSE construction.}}
    \label{Apphigh_DSE}
    \vspace{-10pt}
\end{figure}

Fig. \ref{Apphigh_DSE} shows a high-level design for the $d$-DSE construction. 
The concrete construction should provide the components of Distinct Classifier, Distinct State, and Distinct Constraint in the \textsf{Setup}, \textsf{Update}, and \textsf{Search} protocols, respectively. 
The core of achieving  secure distinct search is to:  
 \emph{promptly update the state of (w,v,op) inputs at the initial occurrence, and subsequently conceal the repetitive (w,v,op) inputs.}

\subsection{Store the state of distinct inputs}
The \textsf{Setup} protocol allocates local memory as Distinct State to record the state of whether the input has appeared.
In previous work, the state is commonly used to achieve the forward and backward privacy and special properties such as non-interactivity \cite{sun2021practical} and robustness \cite{Xu2022rose}.
To clarify whether the input is distinct, Distinct State should efficiently store all updates that include distinct $(w,v,op)$ pairs.

We find that the BF \cite{sun2021practical} can satisfy the requirement.
It can expense a small amount of storage overhead to record whether a large number of keyword/value/op pairs appear.
{We initialize BF with the parameters $b,h,n$, which represent the size of BF (in bits), the number of different hash functions, and the number of distinct inputs, respectively.}
Given the input count $n=2^{20}$, to achieve the tolerated false-positive probability $p=10^{-5}$, we can maintain the BF with just 3MB of local storage according to the requirement of the BF size $b=-n \ln{p}/(\ln{2})^2$. 
{This shows that a small local storage is sufficient for BF even dealing with a large-scale dataset.}

\subsection{Tag distinct values} 

The \textsf{Update} protocol employs a program called Distinct Classifier to generate real or dummy tags for each input.
The tags are used to identify whether the matched value is represented as distinct in the \textsf{Search} protocol.
Based on the Distinct State, the inputs attach the real tag as the distinct.
The repetitive inputs are replaced with bogus keyword/value pairs and tagged the dummy tag.
The processed data will be updated to the EDB system.

We state that traditional dummy data generation like that in \cite{vo2021shielddb,demertzis2020seal} is applicable to our requirement.
The inputs marked by dummy tags will not affect the results of distinct values.
Based on the Distinct State, the repetitive inputs can be replaced with random duplicated pairs to construct the arbitrary volume distribution on the EDB system.

\subsection{Retrieve distinct values}

The \textsf{Search} protocol employs the Distinct Constraint program to generate the constrained key and retrieves the distinct values.
Under the requirement from the $d$-DSE security definitions, the tag can only reveal whether the corresponding value is distinct in the \textsf{Search} protocol.

Below, we conduct an analysis on the \textsf{Search} protocols of \textsf{SEAL}\cite{demertzis2020seal} and \textsf{ShieldDB}\cite{vo2021shielddb}.
\textsf{SEAL} designs the SE-based queries with its adjustable Searchable Encryption construction to support conditions (point, range) and join queries.
\textsf{ShieldDB} develops an independent search method for searching the keyword cluster to achieve high search efficiency.
{Both of them require the padding strategy to ensure a uniform volume for search results. 
Thus, they need to retrieve dummy data for every query, consistently increasing the communication cost of the \textsf{Search} protocol. }
{Furthermore, they do not provide a delete operation because this can definitely incur the uneven volume from the \textsf{Search} protocol, which brings an advantage to the adversary in distinguishing the volume.} 

{Since there are drawbacks of the above techniques, we turn to another construction roadmap - using SRE\cite{sun2021practical} and Function-hiding Inner Product Encryption (\textsf{IPE}) \cite{kim2018function}.}
The SRE scheme can pre-compute the tag's authenticity offline and use the Distinct State to generate related search tokens.
{This offline method can minimize ciphertext storage for the server because tags can be checked offline based on the revoked key}.  
The \textsf{IPE} scheme can use auxiliary ciphertext to represent all tags' relationships and classify the real tag online.
This stores more ciphertexts to obtain the corresponding distinct value in the searched sequence. 
But it can reduce the pre-computation overheads and achieve lower client storage cost than the offline approach.

{\color{black}
\section{Example: The BF-IPE scheme}\label{BF-IPE-example}
}

We construct \textsf{BF-IPE} from \textsf{MITRA}\cite{ghareh2018new}, the BF, the \textsf{IPE}, and the Forward Private DSE scheme.
The scheme uploads auxiliary ciphertexts and helps the server to clarify the distinct values.
It is also able to maintain forward and backward security and DwVH security, and obtain the sub-linear search efficiency with low client storage.

\subsection{Inner Product Encryption}

Inner product encryption is a type of functional encryption and computes the inner products $\left<\mathbf{x},\mathbf{y}\right>$ of a ciphertext for a vector $\mathbf{x}$ and a secret key for a vector $\mathbf{y}$.
It reveals no additional information about both $\mathbf{x}$ and $\mathbf{y}$ beyond the inner product.
We adopt the formalism and definitions as follows.

An \textsf{IPE} scheme includes four algorithms: \\ 
$\bullet$ $\textsf{IPE}.Setup\left(1^\lambda,n\right)$: It inputs a security parameters $1^\lambda$, a vector lenghth $n$ and outputs a master secret key $Imsk$ and public parameters $pp$. \\
$\bullet$ $\textsf{IPE}.Enc\left(Imsk, pp, \mathbf{x}\right)$: It inputs the $Imsk$, the $pp$, a vector $\mathbf{x}$ and outputs a ciphertext $ct_{\mathbf{x}}$. \\
$\bullet$ $\textsf{IPE}.KeyGen\left(Imsk, \mathbf{y} \right)$: It inputs the $Imsk$, a vector $\mathbf{y}$ and outputs a secret key $ct_{\mathbf{y}}$. \\
$\bullet$ $\textsf{IPE}.Dec\left(pp,ct_{\mathbf{x}},ct_{\mathbf{y}}\right)$: It inputs the $pp$, the ciphertext $ct_{\mathbf{x}}$, the secret key $ct_{\mathbf{y}}$ and outputs either a value $\left<\mathbf{x},\mathbf{y}\right>$ or the dedicated symbol $\bot$.

The correctness requires that $\textsf{IPE}.Dec\left(pp,ct_{\mathbf{x}},ct_{\mathbf{y}}\right)$ is sure to output $\left<\mathbf{x},\mathbf{y}\right>$ and not $\bot$ when  $\left<\mathbf{x},\mathbf{y}\right>$ is from a fixed polynomial range of values.
The \textsf{SIM}-security captures the security of \textsf{IPE}\cite{kim2018function}.

\subsection{Overview} \label{HighIPE}
\begin{figure}[htb]
	\centering
	\centerline{\scalebox{0.7}{{\includegraphics[width=12.7cm,height=3.8cm]{./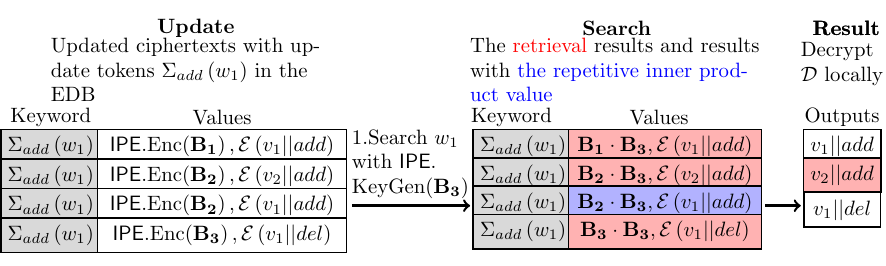}}}}
    \caption{\textsf{BF-IPE}: the scheme overview.}
    \label{d-DSEHighIPE}
    \vspace{-10pt}
\end{figure}

Fig. \ref{d-DSEHighIPE} shows the construction of \textsf{BF-IPE} in the \textsf{Update} and \textsf{Search} protocol.
\textsf{BF-IPE} uploads modified keyword/value pairs in EDB by a Forward Private DSE scheme $\Sigma_{add}$.
To retrieve the distinct values, it leverages the \textsf{IPE} decryption to find the distinct inner products from the past and current vectors' form of the BF, which regards the BF's bit array as the dimensions\footnote{If the value of the BF's bit array is 0011, it can be represented by the vector $(0, 0, 1, 1)$.}.

In the \textsf{Update} protocol, \textsf{BF-IPE} updates $\mathbf{B}$ with the inputs $\left(w,v,op\right)$, gets the corresponding tag by the \textsf{IPE} encryption, and uploads the tag with the symmetric ciphertext of $v||op$.
Note that the BF only renews its value when the input is never seen before, so the new corresponding inner product value can represent the new distinct values. 
For example, in Fig. \ref{d-DSEHighIPE}, \textsf{BF-IPE} inputs $\left(w_1,v_2,add\right)$ and gets $\mathbf{B_2}$ from $\mathbf{B_1}$, and the repetitive input $\left(w_1,v_1,add\right)$ does not change $\mathbf{B_2}$.

In the \textsf{Search} protocol, \textsf{BF-IPE} sends the \textsf{IPE} sub-key for the current vector form of the BF (e.g., $\mathbf{B_3}$) to the server, and the server decrypts the tags to get a result list of inner products.
The first occurrence of the specific inner product values reveals the position of the distinct values. 
Finally, the retrieval is decrypted by the symmetric encryption locally.

The repetitive values cannot be inferred as equal to the values returned in retrieval.
{To this end, \textsf{BF-IPE} performs a dummy addition for the first addition. 
In practice, we can update the first distinct value in batch {so that} the dummy values can be eliminated in retrieval. 

\subsection{BF-IPE Description}

\setcounter{algorithm}{3}
\floatname{algorithm}{Protocol}
\begin{algorithm}
	\scriptsize
	\caption{\textsf{BF-IPE}: Setup, bold lines 3-4 are the Distinct State.}
	\underline{\textsf{Setup}$\left(1^\lambda\right)$:}
 
        //$\mathbf{H}$ and $\mathbf{\mathbf{B}}$ denote the collection of hash functions in BF and the BF bit array, respectively.
        
	\begin{algorithmic}[1]
		\State Initialize the DSE scheme $\left({\rm EDB},\sigma,K_\Sigma\right)\gets\Sigma_{add}.Setup\left(1^\lambda\right)$
		\State $K_t,K_c,K_h \stackrel{\$}{\gets} \left\{0,1\right\}^\lambda$
            \textbf{
		\State Initialize the Bloom Filter $\mathbf{H},\mathbf{B} \gets \Phi.Setup\left(1^\lambda\right)$ 
		\State $\mathbf{UpCnt\gets MAP}$
            }
		\State Initialize the \textsf{IPE} scheme $\left(Imsk,pp\right)\gets \textsf{IPE}.Setup\left(1^\lambda\right)$
        \State Send $\rm EDB$ and $pp$ to the server.
	\end{algorithmic}
\end{algorithm}

\textbf{Setup.}
This protocol initializes the encrypted database $\mathrm{EDB}$, internal state $\sigma$, secret key $K_\sigma$ from a Forward Private DSE $\Sigma_{add}$.
It also generates secret keys $K_t$ and $K_c$, the BF hash collection $\mathbf{H}$, the bit array $\mathbf{B}$, and the empty map \textbf{UpCnt}.
The $K_c$ is used to encrypt values.
The $K_t$ is used to generate tags for the hash of the BF inputs.
Finally, the protocol initializes the \textsf{IPE} scheme and sends the EDB and public parameters $pp$ to the server.

\begin{algorithm}
	\scriptsize
	\caption{\textsf{BF-IPE}: Update, bold lines 2-9 are the Distinct Classifier.}
	\label{BF-IPE-Upt}
        \underline{\textsf{Update}$\left(K_\Sigma,\mathbf{st},op,\left(w,v\right);EDB\right)$:}
        
        //$\stackrel{\to}{\cdot}$ denotes the vector representing each bit of the data in each dimension.
        
	\textbf{Client}:
	\begin{algorithmic}[1]
	\State \textbf{$ cnt \gets \textbf{UpCnt}[w]$
	\If {cnt == $\bot$}
		\State $cnt \gets 0$, $\textbf{UpCnt}[w] \gets cnt$
            \State {\itshape Update a dummy addition along with the first update on a keyword $w$.}
	\EndIf
	\State $\mathbf{t\gets F\left(K_t,w||v||op\right)}$
	\State Update the BF $\Phi.Upd\left(\mathbf{H},\mathbf{B},t\right)$
        \State Random $r \stackrel{\$}{\gets} R$, get the $\mathbf{ct}_{\mathbf{x}}$ as tag $\mathbf{ct}_{\mathbf{x}}\gets\textsf{IPE}.\mathbf{Enc(Imsk},\overset{\xrightarrow{\hspace{0.6cm}}}{\mathbf{B}||r||0^\lambda})$ 
        \State $\mathbf{ct_s \gets \mathcal{E}\left(K_c, v||op||cnt\right),ct \gets \left(ct_{\mathbf{x}}, ct_s \right)}$}
        \State Update the EDB $\Sigma_{add}.Update\left(K_\Sigma,add,w,ct;\mathrm{EDB}\right)$
        \State $\textbf{UpCnt}[w] \gets cnt+1$
	\end{algorithmic}
	
\end{algorithm}

\textbf{Update.}
The protocol updates the EDB.
At lines 1-5, the client reads the internal state and performs the dummy addition for the first input on the keyword $w$.
At lines 6-7, the client uses PRF to generate the BF's input $t$ and updates the BF.
The \textsf{IPE} encrypts the vector form of the BF concatenated with $r||0^\lambda$ (line 8).
To generate the retrieval, the client encrypts the concatenation $v||op||cnt$ by the symmetric encryption (line 9).
At line 10, the client uses the DSE scheme to upload the ciphertexts $\left(ct_x, ct_s \right)$ in the EDB.
Finally, the client updates the \textbf{UpCnt} of $w$ (line 11).

\begin{algorithm}
        \scriptsize
	\caption{\textsf{BF-IPE}: Search, bold lines 1-2 are the Distinct Constraint.}
        \label{BF-IPE-Srch}
	\underline{\textsf{Search}$\left(K_\Sigma,w,\mathbf{st};EDB\right)$:}
 
	\begin{algorithmic}[1]
        \Statex \textbf{Client}:
        \State \textbf{$r \stackrel{\$}{\gets} R$, $\mathbf{ct_{\mathbf{y}}\gets\textsf{IPE}.KeyGen(Imsk,}\overset{\xrightarrow{\hspace{0.6cm}}}{\mathbf{B}||0^\lambda||r})$
	\State Send $ct_{\mathbf{y}}$ to the server}
        \Statex \quad
	\Statex \textbf{Client and Server:}
		\State Run $\Sigma_{add}.Search\left(K_\Sigma,w;EDB\right)$, and the server gets the list $ L=\left(ct_1,ct_2...,ct_l\right)$
	\Statex \textbf{Server}:
	\State Set a list $L^\prime \gets \emptyset$ and a set $S \gets \emptyset$
		\For {$j\in\left[1,l\right]$}
		\State parse $ct_j = \left({ct_{\mathbf{x}}}_j, {ct_s}_j\right)$ from the list $L$
		\If{ the first $ q_j \gets \textsf{IPE}.Dec\left(pp,{ct_{\mathbf{x}}}_j,ct_{\mathbf{y}}\right)$ not in the list $L^\prime$}
        \State Insert $q_j$ into the list $L^\prime$ and  ${ct_s}_j$ into the set $S$
        \EndIf
		\EndFor
        \State Return $S$ to client
        \Statex \quad
        \Statex \textbf{Client:}
        \State Get plaintexts $V = \left(v_1||op_1||cnt,...,v_n||op_n||cnt\right)$ from $\mathcal{D}\left(K_c, s\right)$, where $s \in S$
        \State Get undeleted $V^\prime = \{v_j\ |\ v_j||add||cnt \in V\ and\ v_j||del||cnt \notin V\}$
        \end{algorithmic}
\end{algorithm}

\textbf{Search.}
The protocol identifies the ciphertexts by a keyword $w$ and returns the distinct values.
First, the client generates the \textsf{IPE} decryption key $ct_{\mathbf{y}}$ from the current vector form of the $\mathbf{B}$ concatenated with $0^\lambda||r$ and sends it to the server (lines 1-2).
At line 3, the client and server perform the DSE search protocol, and then the server gets the list $L$.
The server decrypts \textsf{IPE} ciphertexts and retains ${ct_s}_j$ when the new inner product result appears in the list (lines 5-11). 
Finally, the client decrypts the symmetric ciphertexts returned from the server and retains the undeleted distinct values (lines 12-13).  

\subsection{Analysis on BF-IPE}

\textbf{Correctness.} The construction uses the Forward Private DSE to upload the modified keyword/value pairs in encrypted databases. 
With the $F$, it always outputs a tag $t$ from $K_s$.
A probability of losing correctness probability comes from the false positive property of the BF, which is acceptable in practice\cite{sun2021practical}.
Hence, the \textsf{BF-IPE} can correctly update and search distinct values.

\textbf{Security Analysis.}
\textsf{BF-IPE} securely generates ciphertexts, search tokens attached to the Forward Private DSE,  \textsf{IPE}, PRF, and the symmetric encryption. 
The simulator in $\Sigma_{add}$ and the simulator in \textsf{IPE} capture the forward and backward privacy. 
The server only knows the number of update and distinct values on the searched keyword through the \textsf{Search} protocol.

\begin{theorem}[Adaptive Security of \textsf{BF-IPE}]
    {
    \itshape
    Let $F$ with a specific key be modelled as the random oracle $\mathcal{H}_F$, we define $\mathcal{L}_D=\left(\mathcal{L}_D^{Upt},\mathcal{L}_D^{Srch}\right)$ as:
			\begin{equation*}
                \footnotesize
			{
			\begin{split}
			\mathcal{L}_D^{Upt}\left(w,v,op\right)&= op \\
			\mathcal{L}_D^{Srch}\left(w\right)&= sp(w), {\rm TimeDTS}(w), {\rm Update}(w),
			\end{split}
			}
			\end{equation*}
			\textsf{BF-IPE} is $\mathcal{L}_D$-adaptively-secure.
    }
\end{theorem}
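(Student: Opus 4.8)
The plan is to mirror the game-hop argument used for Theorem~1 (the adaptive security of \textsf{BF-SRE}), replacing the role of \textsf{SRE} with the function-hiding \textsf{IPE} and invoking its \textsf{SIM}-security in place of the \textsf{IND-sREV-CPA} step. I would start from game $G_0$, identical to the real \textsf{BF-IPE} execution, so that $\mathbb{P}[\text{Real}=1]=\mathbb{P}[G_0=1]$. In $G_1$ I replace every evaluation of the \textsf{PRF} $F$ (on the distinct key/input shapes used in \textsf{Update} and \textsf{Search}) by lazily sampled random strings recorded in bookkeeping tables, so that repeated inputs stay consistent; the gap is bounded by the \textsf{PRF} advantage of an adversary making at most $N$ queries. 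In $G_2$ I swap the forward-private $\Sigma_{add}$ for its simulator $\mathcal{S}_{add}^{DSE}$, using a \textsf{Uphist} list that postpones all additions and deletions to the next \textsf{Search}; this is sound because forward privacy hides the contents of additions and deletions are oblivious to the server, and the gap is the $\mathcal{L}_{FS}$-forward-privacy advantage.

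The central and most delicate hop is $G_3$, where I replace the honestly generated \textsf{IPE} tag ciphertexts $ct_{\mathbf{x}}$ and the search key $ct_{\mathbf{y}}$ by the outputs of the \textsf{IPE} simulator. Because \textsf{IPE} is \textsf{SIM}-secure, the simulator is entitled only to the inner-product values $\langle \mathbf{x},\mathbf{y}\rangle$ actually exposed by decryption. The key lemma I must establish is that these inner products are reconstructible from the permitted leakage alone: the bit-array vectors $\overrightarrow{\mathbf{B}}$ change only on a genuinely new $(w,v,op)$ triple, so the \emph{first-occurrence} pattern of inner products over the returned list coincides exactly with the distinct-value pattern captured by $\text{TimeDTS}(w)$ together with $sp(w)$ and $\text{Update}(w)$. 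Concretely, I would argue that the sequence of inner products the server reads reveals nothing beyond which entries are distinct and how many updates occurred, and feed precisely this to the \textsf{IPE} simulator; the gap is then bounded by the \textsf{SIM}-security advantage of \textsf{IPE}.

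The remaining hops are bookkeeping. In $G_4$ I rebuild the addition list $L_{add}$ and the \textsf{Uphist} state directly from $\text{TimeDTS}$ and $\text{Update}$ rather than from the real pairs, which does not change the distribution. In $G_5$ I sample the tags as fresh random strings without storing them, justified because each keyword/value/count triple is updated at most once. In $G_6$ I replace each symmetric-encryption retrieval $\mathcal{E}(K_c,\cdot)$ by a random string of equal length, at the cost of the \textsf{IND-CPA} advantage of $\mathcal{E}$. Finally I assemble the simulator from $G_6$, substituting the input keyword $w$ by $\min sp(w)$ so that it never touches real keywords and driving everything from $\mathcal{L}_D$ alone; summing the four advantages (\textsf{PRF}, forward privacy, \textsf{IPE} \textsf{SIM}-security, and \textsf{IND-CPA}) yields the claimed negligible bound. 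I expect the \textsf{IPE} hop to be the main obstacle, since unlike the clean selective reduction available for \textsf{SRE}, here I must precisely characterize the inner-product leakage and show it is simulatable from $\text{TimeDTS}$, $sp$, and $\text{Update}$ without revealing the multiplicities of repetitive or deleted values.
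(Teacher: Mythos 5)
Your proposal is correct and follows essentially the same route as the paper's proof: a game-hop argument that replaces the PRF outputs with lazily sampled random strings, swaps $\Sigma_{add}$ for its forward-privacy simulator via a bookkeeping \textsf{Uphist} list, replaces the symmetric retrievals by random strings under IND-CPA, invokes the \textsf{SIM}-security of \textsf{IPE} to substitute its simulator, rebuilds $L_{add}$ from ${\rm TimeDTS}$ and ${\rm Update}$, and finally keys the assembled simulator on $\min sp(w)$, summing the same four advantages. The only differences are cosmetic (hop ordering and a redundant tag-randomization hop already subsumed by the PRF step); your explicit argument that the first-occurrence pattern of inner products is reconstructible from ${\rm TimeDTS}(w)$, $sp(w)$, and ${\rm Update}(w)$ is a point the paper leaves implicit, and spelling it out strengthens rather than deviates from the paper's proof.
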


\emph{Proof Sketch.} The proof is similar to that in Theorem \ref{theorem1}.
We gradually replace the real cryptographic tools used in the \textsf{BF-IPE} construction with the bookkeeping tables or the corresponding simulators and obtain the \textsf{BF-IPE} simulator.
We first replace the outputs of PRF and the symmetric encryption with random strings and the \textsf{IPE} scheme with the simulator $\mathcal{S}_{IPE}$.
Then the input of the $\textsf{Search}$ protocol is replaced with the $\mathcal{L}_D^{Srch}$ leakage function.
Finally, we use the leakage function \textsf{UpHist}$(w)$, $sp(w)$ as the input of the simulator. The proof is given in Appendix \ref{ProoFIPE}. 

\begin{theorem}[DwVH Security of \textsf{BF-IPE}]
    The leakage function of \textsf{BF-IPE} $\mathcal{L} = (\mathcal{L}_D^{Upt}, \mathcal{L}_D^{Srch})$ is Distinct with Volume-Hiding.
\end{theorem}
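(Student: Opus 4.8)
The plan is to mirror the DwVH proof strategy already established for \textsf{BF-SRE} (Theorem 2 in the appendix), reducing the DwVH-distinguishing advantage of any PPT adversary $\mathcal{A}_D$ against \textsf{BF-IPE} to the security of each underlying primitive. The target inequality should read
\begin{equation*}
\footnotesize
MIN(Adv_{F,\mathcal{B}_D}^{prf},\ Adv_{\Sigma_{add},\mathcal{S}_{add}^{DSE},\mathcal{B}_D}^{\mathcal{L_{FS}}},\ Adv_{\textsf{IPE},\mathcal{B}_D}^{\mathrm{SIM}},\ Adv_{\mathcal{E},\mathcal{B}_D}^{\mathrm{IND-CPA}}) \geq Adv_{\textsf{BF-IPE},\mathcal{A}_D}^{DwVH},
\end{equation*}
where the single substantive change from \textsf{BF-SRE} is that the SRE-based sub-reduction $\mathcal{B}_D^3$ is replaced by an \textsf{IPE} sub-reduction exploiting the \textsf{SIM}-security of function-hiding inner product encryption. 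First I would invoke the \underline{Prepare} stage: $\mathcal{A}_D$ submits two signatures $S_0,S_1$ satisfying the DwVH constraints $l_0(w)=l_1(w)$ and $\sum_w t_b(w)=n$, and $\mathcal{B}_D$ plays the simulator $\mathcal{S}$ with leakage $\mathcal{L}_D$ that $\mathcal{A}_D$ is entitled to observe.

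The core argument is that $\mathcal{L}_D^{Srch}(w)=(sp(w),\mathrm{TimeDTS}(w),\mathrm{Update}(w))$ depends only on $l_b(w)$ (the \textit{value's type}, revealed through $\mathrm{TimeDTS}$) and on update timestamps, never on $t_b(w)$ directly. Since $l_0(w)=l_1(w)$ by construction, the distinct-response length pattern $drlen(w)=|\mathrm{TimeDTS}(w)|$ is identical under both signatures, and the $t_b(w)-l_b(w)$ repetitive pairs are absorbed into $\mathrm{Update}(w)$ as indistinguishable updates by forward privacy. The plan is to build four sub-reductions $\mathcal{B}_D^1,\dots,\mathcal{B}_D^4$: $\mathcal{B}_D^1$ bookkeeps PRF invocation tuples and embeds the PRF challenge into a guessed tag position; $\mathcal{B}_D^2$ inherits the $G_2$ reduction, resizing each \textsf{Uphist}$(rs_w)$ to length $t_{b^*}(w)$ and selecting $l_{b^*}(w)$ distinct values, breaking the state-less forward-privacy leakage whenever $\mathcal{A}_D$ guesses $b^*$ correctly; and $\mathcal{B}_D^4$ embeds an IND-CPA challenge $m_{\mu'}$ into the repetitive-value block and outputs $1$ iff $b^*=b$. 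The genuinely new piece is $\mathcal{B}_D^3$: rather than embedding an IND-sREV-CPA challenge, it embeds a \textsf{SIM}-security challenge for \textsf{IPE}, exploiting the fact that the function-hiding property guarantees the ciphertext-token pair $(ct_{\mathbf{x}},ct_{\mathbf{y}})$ leaks nothing about the underlying vectors $\overset{\to}{\mathbf{B}}$ beyond the inner products that encode \textit{only} the distinct-value positions.

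The main obstacle I anticipate is the $\mathcal{B}_D^3$ step, because the \textsf{IPE}-based distinctness mechanism differs fundamentally from SRE revocation. In \textsf{BF-SRE} the repetitive values are hidden by revoking dummy tags, so their ciphertexts are simply undecryptable; in \textsf{BF-IPE} every ciphertext is decryptable, and distinctness is instead recognized by the \emph{first occurrence} of each inner-product value $q_j=\langle\overset{\to}{\mathbf{B}},\overset{\to}{\mathbf{B}}\rangle$ in the server's scan. I must therefore argue carefully that the inner products returned for the $t_b(w)-l_b(w)$ repetitive pairs collide with earlier distinct inner products (since the BF bit array $\mathbf{B}$ is unchanged on repeated inputs, by the \textsf{Update} protocol's design), so that repetitions are provably filtered out at line~7 of the \textsf{Search} protocol and contribute no distinguishing signal. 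The randomization via $r$ in the padding dimensions ($\overset{\to}{\mathbf{B}}\|r\|0^\lambda$ for encryption and $\overset{\to}{\mathbf{B}}\|0^\lambda\|r$ for key generation) must be shown to keep the inner product value equal to $\langle\overset{\to}{\mathbf{B}},\overset{\to}{\mathbf{B}}\rangle$ while ensuring the \textsf{SIM} simulator can produce consistent transcripts; establishing that the \textsf{SIM}-security simulator $\mathcal{S}_{IPE}$ faithfully reproduces exactly the repetition-collision structure under both signatures is where the delicate reasoning lies.

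Finally I would combine the four sub-reductions to conclude that any non-negligible $Adv_{\textsf{BF-IPE},\mathcal{A}_D}^{DwVH}$ forces at least one of the four underlying advantages to be non-negligible, contradicting the assumed security of $F$, $\Sigma_{add}$, \textsf{IPE}, and $\mathcal{E}$. The \emph{dummy addition performed for the first update on each keyword} (line~4 of the \textsf{Update} protocol) should be noted as the ingredient that prevents the retrieval from revealing that repetitive values equal returned ones, and I would remark that this dummy is eliminated locally during batch update so it does not perturb the distinct-value count on either signature.
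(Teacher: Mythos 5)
Your proposal follows essentially the same route as the paper: the paper's own argument for this theorem is a sketch that defers to the \textsf{BF-SRE} DwVH proof (Theorem 2), whose four-sub-reduction structure ($\mathcal{B}_D^1$ for the PRF, $\mathcal{B}_D^2$ for forward privacy of $\Sigma_{add}$, $\mathcal{B}_D^3$ for the revocable/functional primitive, $\mathcal{B}_D^4$ for IND-CPA) you reproduce verbatim, swapping IND-sREV-CPA for \textsf{IPE} \textsf{SIM}-security exactly as intended. Your added care about why repetitive inputs yield colliding inner products (the BF bit array is unchanged on repeated $(w,v,op)$, and the $r$-padding dimensions contribute zero to $\langle\cdot,\cdot\rangle$) is a correct elaboration of the paper's one-line claim that ``real and dummy IPE tags keep identical distinct response lengths,'' not a departure from it.
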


\emph{Proof Sketch.} Similar to Theorem \ref{theorem2}. 
The $\mathcal{L}_D^{Srch}$ is also independent of the input keyword/value pairs because the update for real and dummy \textsf{IPE} tags keep identical distinct response lengths.
Therefore, the leakage function reveals nothing about the volume.

\textbf{Complexity.} The complexity is identical to that of \textsf{BF-SRE} except that the client storage is $\mathcal{O}(W)$ based on the keyword counter map (i.e. \textbf{UpCnt}).
\subsection{Improve the Performance} \label{HighP}
The \textsf{IPE} entails significant computational overhead and warrants efforts to enhance its efficiency. 
We have optimized the randomness {via the auxiliary input (line 8 in Protocol \ref{BF-IPE-Upt} and line 1 in Protocol \ref{BF-IPE-Srch})} and the key generation to accelerate the comparison of the inner product values. {The optimized key only decrypts \textsf{IPE} ciphertexts and obtains the related element on cyclic group, without affecting the feasibility of the comparison}.

\begin{figure}[htb]
	\centering
	\centerline{\scalebox{0.8}{{\includegraphics[width=9.2cm,height=3.0cm]{./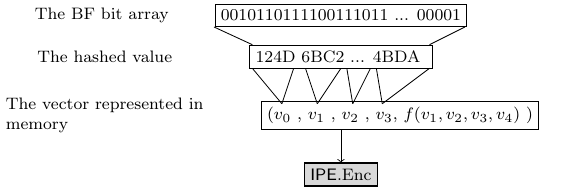}}}}
    \caption{The dimension optimization for \textsf{IPE}.}
    \vspace{-10pt}
    \label{Opt_IPE}	
\end{figure}

We reduce the encrypted vector's dimension as a further optimization for the \textsf{IPE}.
As shown in Fig. \ref{Opt_IPE}, we use the hash function to map the arbitrary-length BF's bit array to the fixed-length hash value. 
Then, we divide the hash value into small fixed-length bit arrays representing the corresponding vector's dimension.
{To ensure comparability between inner products, we use a polynomial to represent the order of these small bit arrays.} 
In practice, one may turn to the Pseudo Random Number Generator \cite{blum1986simple} (PRNG) that uses the hash value as the seed to generate the `checksum' of the order.

\section{Security analysis for BF-IPE} \label{ProoFIPE}

\begin{theorem}[Adaptive Security of \textsf{BF-IPE}]
    {
    \itshape
    Let $F$ with a specific key be modelled as the random oracle $\mathcal{H}_F$. $\mathcal{L}_D=\left(\mathcal{L}_D^{Upt},\mathcal{L}_D^{Srch}\right)$ is defined as:
			\begin{equation*}
			{
			\begin{split}
			\mathcal{L}_D^{Upt}\left(w,v,op\right)&= op \\
			\mathcal{L}_D^{Srch}\left(w\right)&= sp(w), {\rm TimeDTS}(w), {\rm Update}(w),
			\end{split}
			}
			\end{equation*}
			\textsf{BF-IPE} is $\mathcal{L}_D$-adaptively-secure.
    }
\end{theorem}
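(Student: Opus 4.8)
The plan is to mirror the game-hopping argument used for Theorem~\ref{theorem1}, replacing each real cryptographic primitive of \textsf{BF-IPE} by a bookkeeping table or a simulator until the entire server view can be produced from $\mathcal{L}_D$ alone. I would start from $G_0$ equal to the real experiment and proceed as follows. In $G_1$ I replace every evaluation of $F$ under $K_t$ (and any other PRF key) by lazily sampled random strings stored in a table indexed by the raw input, so repeated inputs are answered consistently; the gap is bounded by $Adv^{prf}_{F,\mathcal{B}}(\lambda)$ for an adversary making at most $N$ queries. In $G_2$ I swap the forward-private DSE $\Sigma_{add}$ for its simulator $\mathcal{S}_{add}^{DSE}$, postponing all additions to the next \textsf{Search} and feeding the simulator the update history \textsf{UpHist}$(w)$; the gap reduces to the $\mathcal{L}_{FS}$-adaptive forward privacy of $\Sigma_{add}$. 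In $G_3$ I replace the symmetric ciphertexts $ct_s=\mathcal{E}(K_c,v||op||cnt)$ by random strings, bounded by $Adv^{\mathrm{IND-CPA}}_{\mathcal{E},\mathcal{B}}(\lambda)$.

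The decisive hop is $G_4$, where I replace the \textsf{IPE} scheme by its \textsf{SIM}-security simulator $\mathcal{S}_{IPE}$. Here the server's only \textsf{IPE}-derived view is the multiset of inner products $\langle \overrightarrow{\mathbf{B}||r||0^\lambda},\, \overrightarrow{\mathbf{B}'||0^\lambda||r'}\rangle$ obtained by decrypting each stored tag with the search key. The argument I would give is that this multiset is exactly the functionality $\mathcal{S}_{IPE}$ is entitled to receive, and that it is recoverable from $\mathcal{L}_D^{Srch}(w)$: because the Bloom-filter bit array is advanced only on the \emph{first} occurrence of a $(w,v,op)$ triple, two stored tags yield the same inner product iff they correspond to the same distinct value up to the search epoch, so the first-occurrence pattern of the inner products is precisely ${\rm TimeDTS}(w)$ together with ${\rm Update}(w)$. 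The auxiliary blocks $r$ and $0^\lambda$ (resp.\ $0^\lambda$ and $r'$) are included so that the fresh randomness cancels in the inner product while ensuring, except with negligible probability, that no two \emph{distinct} bit-array vectors collide to the same value; I would fold this collision probability into the statistical slack of the hop. Thus $\mathcal{S}_{IPE}$ can be driven purely by the inner-product pattern derived from the leakage, and the gap between $G_3$ and $G_4$ is bounded by the \textsf{SIM}-advantage against \textsf{IPE}.

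After $G_4$ the remaining bookkeeping is routine: I construct the addition list $L_{add}$ from ${\rm TimeDTS}(w)$ and ${\rm Update}(w)$, reindex queries by $\min sp(w)$ so that the keyword $w$ itself is never used as an input, and observe that the deletion filtering is performed client-side on the (now simulated) symmetric ciphertexts, hence invisible to the server and consistent with $\mathcal{L}_D^{Upt}(w,v,op)=op$. Assembling these pieces yields a simulator that consumes only $\mathcal{L}_D$ and reproduces a view identically distributed to $G_4$, giving the ${\rm Ideal}$ experiment; summing the four advantages establishes the claim.

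I expect the inner-product hop ($G_4$) to be the main obstacle. The subtlety is twofold: first, I must show that the \textsf{SIM} functionality for \textsf{IPE} is never asked for anything beyond the inner products the leakage already determines, which requires aligning the order in which ciphertexts and keys are generated across \textsf{Setup}, \textsf{Update}, and \textsf{Search} with the query model of the \textsf{IPE} \textsf{SIM} game; second, I must argue the no-spurious-collision property of the padded vectors carefully, since a collision between two genuinely distinct values would let the server under-count the distinct response and break the alignment with ${\rm TimeDTS}(w)$. Controlling this collision term (and the analogous one for the dummy first addition noted in Protocol~\ref{BF-IPE-Upt}) is where the quantitative care of the proof will concentrate.
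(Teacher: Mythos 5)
Your proposal matches the paper's proof essentially hop for hop: the same sequence $G_1$ (PRF to lazily sampled tables), $G_2$ ($\Sigma_{add}$ to $\mathcal{S}_{add}^{DSE}$ via \textsf{Uphist}), $G_3$ (symmetric ciphertexts to random strings under IND-CPA), $G_4$ (\textsf{IPE} to $\mathcal{S}_{IPE}$ under \textsf{SIM}-security), followed by constructing $L_{add}$ from ${\rm TimeDTS}(w)$ and ${\rm Update}(w)$ and reindexing by $\min sp(w)$, with the final bound being the sum of the same four advantages. Your additional care on the $G_4$ hop --- arguing that the inner-product multiset is determined by the leakage and bounding the spurious-collision probability of the padded vectors --- is a refinement the paper's own proof states only implicitly, not a different route.
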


\begin{proof}
    We analyze the indistinguishability between \textsf{BF-IPE} and simulator $\mathcal{S}_{BF-IPE}$, and we use the game hop to analyze indistinguishability.

	\textbf{Game $G_0$.} This game is identical with the real \textsf{BF-IPE}, so that:
		$\mathbb{P}\left[{\rm Real}_\mathcal{A,S,L}^{\rm BF-IPE}\left(\lambda\right)=1\right] = \mathbb{P}\left[G_0=1\right]$.
	
	\textbf{Game $G_1$}. We replace the calls of PRF $F$ with random strings picked from $F$'s range space. 
	When each time a previous unseen call is input, we select a random output from this range space, and record it in tables \textsf{Tokens}, \textsf{Tags}, and \textsf{Counts} for $F\left(K_s,w\right)$, $F\left(K_t,w||v\right)$, and $F\left(K_t,cnt\right)$, respectively.
	Whenever $F$ is recalled on the same input, the output value is retrieved directly from these tables.
	The distinguishing advantage between $G_0$ and $G_1$ is equal to that of PRF against an adversary making at most $N$ calls to $F$.
	Therefor, we have:
		$\mathbb{P}\left[G_0=1\right]-\mathbb{P}\left[G_1=1\right]\leq  Adv_{F,\mathcal{B}_1}^{prf}\left(\lambda\right)$.
	
	\textbf{Game $G_2$}. We replace the Forward Private DSE instance $\Sigma_{add}$ with the associate simulator $\mathcal{S}_{add}^{DSE}$.
	To construct this simulator, we use some bookkeeping to keep track of all the \textsf{Update} queries as they come, and postpone all addition and deletion operations to the subsequent \textsf{Search} query.
	This variant can be done because the additions leak nothing about their contents guaranteed by the forward privacy of $\Sigma_{add}$ and the obliviousness of deletions to server.
	
	Moreover, a list \textsf{Uphist} is initialized and used in this game.
	The list \textsf{Uphist} contains the encryption of the inserted indices for the subsequent search on $w$, their associated tags and the addition timestamps, which in fact corresponds to the update history on $w$ for the scheme $\Sigma_{add}$ and will be taken as the input of the simulator.
	The distinguishing advantage between $G_1$ and $G_2$ is reduced to the $\mathcal{L}_{FS}$-adaptive forward privacy of  $\Sigma_{add}$. 
	Therefor, there exists a PPT adversary $\mathcal{B}_2$ such that:
        $\mathbb{P}\left[G_1=1\right]-\mathbb{P}\left[G_2=1\right]\leq  Adv_{\Sigma_{add},\mathcal{S}_{add}^{DSE},\mathcal{B}_2}^{\mathcal{L_{FS}}}\left(\lambda\right)$.
 
	\textbf{Game $G_3$}. We modify the generation of the symmetric ciphertexts.
	More precisely,  we replace the encrypted values as random strings from the range of $\mathcal{E}$.
	Since the modification above works only on the symmetric ciphertexts, we can see that the distinguishing advantage between $G_2$ and $G_3$ is the IND-CPA security of the symmetric encryption scheme.
	There exists a reduction algorithm $\mathcal{B}_3$ such that:
        $\mathbb{P}\left[G_2=1\right]-\mathbb{P}\left[G_3=1\right]\leq  Adv_{\mathcal{E},\mathcal{B}_3}^{\mathrm{IND-CPA}}\left(\lambda\right)$.
    
	\textbf{Game $G_4$}. We replace the real \textsf{IPE} scheme with the IPE simulator $\mathcal{S}_{IPE}$.
        Since the modification above works only on the \textsf{IPE} scheme, the distinguishing advantage between $G_3$ and $G_4$ is the SIM security of the \textsf{IPE} scheme.
        There is a reduction algorithm $\mathcal{B}_4$ such that:
        $\mathbb{P}\left[G_3=1\right]-\mathbb{P}\left[G_4=1\right]\leq  Adv_{\textsf{IPE},\mathcal{B}_4}^{\mathrm{SIM-security}}\left(\lambda\right)$.
    
        \textbf{Game $G_5$}. We modify the way of constructing addition list $L_{add}$ from the SSE simulator $\mathcal{S}_{add}^{DSE}$.
	In detail, we first compute the leakage information $\rm TimeDTS$ and $\rm Update$ from the update history table \textsf{Uphist}, and then base the information to construct $L_{add}$.
	This has no influence to the distribution of $G_3$, so that:
	    $\mathbb{P}\left[G_4=1\right]=\mathbb{P}\left[G_5=1\right]$.
 
	\textbf{Simulator.}  To build a simulator from $G_5$, we need to avoid directly using the keyword $w$ as the protocols input. 
	This can be done by replacing the input $w$ with $min sp\left(w\right)$.
	To construct $L_{add}$, we can properly take the leakage $\rm TimeDTS$ and $\rm DelTime$ as the input of $\textsf{Search}$, and the simulator does not need to keep the track of the updates anymore.
	In this condition, $G_5$ can be efficiently simulated by the simulator with the leakage function $\mathcal{L}$, so we have:
	    $\mathbb{P}\left[G_5=1\right]=\mathbb{P}\left[{\rm Ideal}_\mathcal{A,S,L}^{\rm BF-IPE}\left(\lambda\right)=1\right]$.
	
	\textbf{Conclusion.} By combining all contributions from all games, there exists the adversary such that:
	\begin{equation*}
	    \small
        \begin{aligned}
	    \mathbb{P}\left[{\rm Real}_\mathcal{A,S,L}^{\rm BF-IPE}\left(\lambda\right)=1\right]-
	    \mathbb{P}\left[{\rm Ideal}_\mathcal{A,S,L}^{\rm BF-IPE}\left(\lambda\right)=1\right] \leq\\
	    Adv_{F,\mathcal{B}_1}^{prf}\left(\lambda\right)+Adv_{\Sigma_{add},\mathcal{S}_{add}^{DSE},\mathcal{B}_2}^{\mathcal{L_{FS}}}\left(\lambda\right)+Adv_{\mathcal{E},\mathcal{B}_3}^{\mathrm{IND-CPA}}\left(\lambda\right)+ \\Adv_{\textsf{IPE},\mathcal{B}_4}^{\mathrm{SIM-security}}\left(\lambda\right).	 
	    \end{aligned}
	\end{equation*}
\end{proof}
}

\end{document}